\documentclass{article}
\usepackage{physics}
\usepackage{amsmath,amssymb,amsthm}
\usepackage[]{qcircuit}
\usepackage{graphicx}
\usepackage{bm}
\usepackage{caption}
\usepackage{subcaption}
\usepackage{tikz}
\usepackage[colorlinks=true,citecolor=blue]{hyperref}

\usepackage{breakurl}


\usepackage[capitalise,noabbrev]{cleveref}
\crefname{equation}{}{}
\crefname{enumi}{}{}

\usepackage[]{color}
\usepackage[natbib=true, style=authoryear-comp, maxbibnames=8, maxcitenames=1, uniquelist=false, citetracker=true, backend=biber,sortcites=false]{biblatex}				
\AtEveryCitekey{\ifciteseen{}{\defcounter{maxnames}{8}}} 
\DeclareNameAlias{sortname}{last-first}
\bibliography{stats.bib, biblio.bib}

\usepackage{siunitx}
\usepackage{todonotes}

\providecommand{\keywords}[1]
{
  \small	
  \textbf{\textit{Keywords---}} #1
}

\makeatletter
\newrobustcmd*{\setmincitenames}{\numdef\blx@mincitenames}
\makeatother
\def\citeus{\setmincitenames{2}\citep{BFBDHRRSW22Sub}\setmincitenames{1}}
\def\i{\mathrm{i}}
\def\e{\mathrm{e}}
\def\bI{\mathbf{I}}
\def\bu{\mathbf{u}}
\def\bU{\mathbf{U}}
\def\bK{\mathbf{K}}
\def\bX{\mathbf{X}}
\def\bV{\mathbf{V}}
\def\bQ{\mathbf{Q}}
\def\bH{\mathbf{H}}
\def\bn{\mathbf{n}}

\def\bG{\mathbf{G}}
\def\bA{\mathbf{A}}
\def\bK{\mathbf{K}}
\def\bR{\mathbf{R}}
\def\bSigma{\mathbf{\Sigma}}
\def\bLambda{\mathbf{\Lambda}}
\def\bzero{\mathbf{0}}
\newcommand{\ibm}[1]{\texttt{#1}}

\def\blambda{\bm{\lambda}}
\def\d{\mathrm{d}}
\def\Diag{\mathrm{Diag}}
\def\pf{\mathrm{Pf}}
\def\pfPP{\mathrm{PfPP}}
\def\pfK{\mathbb{K}}
\def\bS{\mathbf{S}}
\def\ph{\mathrm{ph}}
\def\bF{\mathbf{C}}

\def\cO{\mathcal{O}}

\def\adjoint{*}

\def\spec{\mathrm{spec}}
\def\sig{\sigma}

\providecommand{\keywords}[1]{\textbf{\textit{Keywords ---}} #1}

\newtheorem{theorem}{Theorem}[section]
\newtheorem{lemma}[theorem]{Lemma}
\newtheorem{corollary}[theorem]{Corollary}
\newtheorem{proposition}[theorem]{Proposition}
\newtheorem{definition}[theorem]{Definition}
\newtheorem{example}[theorem]{Example}

\newtheorem{remark}[theorem]{Remark}

\DeclareMathOperator{\Ker}{\mathrm{Ker}}
\DeclareMathOperator{\sign}{\mathrm{sign}}
\DeclareMathOperator{\E}{\mathbb{E}}





\newcommand{\corr}[1]{\textcolor{black}{#1}}

\title{On sampling determinantal and Pfaffian\\ point processes
on a quantum computer}
\date{}
\author{R\'emi Bardenet\footnote{Corresponding author: \texttt{remi.bardenet@univ-lille.fr}}, Michaël Fanuel, and Alexandre Feller\\
Universit\'e de Lille, CNRS, Centrale Lille\\
UMR 9189 - CRIStAL, F-59000 Lille, France
}

\begin{document}
\maketitle

\begin{abstract}
    DPPs were introduced by Macchi as a model in quantum optics the 1970s. 
Since then, they have been widely used as models and subsampling tools in statistics and computer science.
Most applications require sampling from a DPP, and given their quantum origin, it is natural to wonder whether sampling a DPP on a quantum computer is easier than on a classical one.
We focus here on DPPs over a finite state space, which are distributions over the subsets of $\{1,\dots,N\}$ parametrized by an $N\times N$ Hermitian kernel matrix. 
Vanilla sampling consists in two steps, of respective costs $\mathcal{O}(N^3)$ and $\mathcal{O}(Nr^2)$ operations on a classical computer, where $r$ is the rank of the kernel matrix. 
A large first part of the current paper consists in explaining why the state-of-the-art in quantum simulation of fermionic systems already yields quantum DPP sampling algorithms.
We then modify existing quantum circuits, and discuss their insertion in a full DPP sampling pipeline that starts from practical kernel specifications. 
The bottom line is that, with $P$ (classical) parallel processors, we can divide the preprocessing cost by $P$ and build a quantum circuit with $\mathcal{O}(Nr)$ gates that sample a given DPP, with depth varying from $\mathcal{O}(N)$ to $\mathcal{O}(r\log N)$ depending on qubit-communication constraints on the target machine.
We also connect existing work on the simulation of superconductors to Pfaffian point processes, which generalize DPPs and would be a natural addition to the machine learner's toolbox.
\corr{In particular, we describe ``projective'' Pfaffian point processes, the cardinality of which has constant parity, almost surely.}
Finally, the circuits are empirically validated on a classical simulator and on 5-qubit IBM machines. 

\end{abstract}
    
\keywords{
    Determinantal and Pfaffian point processes, fermionic systems, quantum circuits, Givens rotations. 
}
 
\tableofcontents


\section{Introduction}
\label{s:introduction}
Determinantal point processes (DPPs) were introduced in the thesis of \cite{Mac72}, recently translated and reprinted as \citep{Mac17}.
Macchi's motivation was the design of probabilistic models for free fermions in quantum optics; see the preface of \citep{Mac17} for a history of DPPs, and \citeus{} for an extended discussion of the links between free fermions and DPPs.
DPPs have known another surge of interest since the 90s for their application to random matrix theory \citep{Joh05}.
More recently, they have been adopted as models or sampling tools in fields like spatial statistics \citep*{LaMoRu15}, Monte Carlo methods \citep{BaHa20}, machine learning \citep*{KuTa12}, or numerical linear algebra \citep{DeMa20}.
In the latter two fields, the considered DPPs are often \emph{finite}, in the sense that a DPP is a probability measure over subsets of a ground set of finite cardinality $N\gg 1$. 
Such a finite DPP is specified by an $N\times N$ matrix called its kernel matrix, which we assume here to be Hermitian.

In machine learning as in numerical linear algebra, it is crucial to be able to sample from the considered finite DPPs.
For instance, a famous DPP is the subset of edges of a uniform spanning tree in a connected graph \citep{Pemantle91}.
Sampling these uniform spanning trees is a necessary step for building the randomized preconditioners of Laplacian systems in \citep{KySo18}.
As another example, DPPs have been used as randomized summaries of large collections of items, such as a corpus of texts.
Sampling the corresponding DPP then allows to extract a small representative subset of sentences \citep[Section 4 and references therein]{KuTa11}.
Other machine learning applications include constructing coresets \citep{TBA19}, kernel matrix approximation \citep{DeKaMa20,FSS21} and feature extraction for linear regression \citep{BeBaCh20b}.

Much research has thus been devoted to sampling finite DPPs on a classical computer, either exactly or approximately. 
The default generic exact sampler is the `HKPV' sampler \citep{HKPV06}.
To fix ideas, when applied to a \emph{projection} DPP, i.e., a DPP that puts all its mass on subsets of some fixed cardinality $r\leq N$, and assuming the kernel matrix is given in diagonalized form, HKPV has complexity $\cO(Nr^2)$.
For DPPs on graphs such as uniform spanning trees, there also exist dedicated algorithms, such as the loop-erased random walks of \citet{Wilson96}, with an expected number of steps equal to the mean commute time to a chosen root node of the graph.

Given that DPPs are originally a model in quantum electronic optics, and are still the default mathematical object used to describe a quantum physical system known as \emph{free fermions} \citep{DLMS19}, it is natural to ask whether finite DPPs can be sampled more efficiently on a quantum computer than on a classical computer. 
Somewhat implicitly, the question has actually already been tackled in a string of physics papers whose goal is the more ambitious quantum simulation of fermionic systems \citep{OGKL01,WHWCNT15,KMWGACGB18,JSKSB18}. 
In a reverse cross-disciplinary direction, and still rather implicitly, the quantum algorithms therein are reminiscent of parallel QR decompositions, a key topic in numerical algebra \citep{SaKu78,DGHL12}.
While finishing this work, we also realized that in the computer science literature, and independently of the aforementioned physics works, \citet{KerPra22} recently proposed similar quantum algorithms to sample projection DPPs, as a building block for quantum data analysis pipelines.
For our purpose, their main contribution is a quantum circuit with depth logarithmic in $N$, when \citet{JSKSB18} only discuss depths linear in $N$.

On our side, motivated by applications of finite DPPs in data science, we initiated in \citeus{} a programme of reconnection of DPPs to their physical fermionic roots, to foster cross-disciplinary research between mathematics, computer science, and physics on the topic, even if our languages and lore are quite different.
In particular, physicists have developed tools for the analysis and the construction of fermionic systems that we would like to apply to DPPs in data science, without reinventing the wheel. 
The current paper is part of this programme, and sums up our understanding of what the state of the art in physics tells us on sampling finite DPPs, after we follow in the footsteps of \citep{Mac72} and map a given DPP to a fermionic density operator.
This cross-disciplinary, self-contained survey is our first contribution. 

As an example of what our disciplines can bring to each other, our second contribution is to relate the Pfaffian point processes as defined by \citet{koshida21} --~a generalization of DPPs that is natural in theoretical physics but has not yet been used in data science~-- to a quantum algorithm by \citet{JSKSB18} for solving the Bogoliubov-de Gennes Hamiltonian.
As another example of the fertility of cross-disciplinary work, after we make the link between the quantum circuits of \citep{WHWCNT15,JSKSB18} and parallel QR decompositions \citep{DGHL12}, many new variants of the quantum circuits in \citep{JSKSB18} become immediately available, adapting to a range of qubit-communication and hardware constraints. 
In particular, we exhibit a variant of the quantum circuits in \citep{JSKSB18} with the same dimensions as the best circuit in \citep{KerPra22}. 

Overall, our conclusions on quantum DPP sampling are that if a projection kernel is given in diagonalized form $\bK = \bQ^*\bQ$, with $\bQ\in\mathbb{C}^{r\times N}$ a matrix with orthonormal rows, one can build quantum circuits that sample DPP($\bK$) with $\cO(rN)$ one- and two-qubit gates, and depth depending on what hardware constraints we put on which qubits can be jointly operated. 
Acting only on neighbouring qubits, depth is $\cO(N)$ \citep{WHWCNT15,JSKSB18}, while acting on arbitrary pairs of qubits can take the depth down to $\cO(r\log N)$; see our variant of \citep{JSKSB18} in \cref{s:parallel_QR} and the logarithmic depth Clifford loaders of \cite{KerPra22}.
Such depths (i.e., the largest number of gates applied to any single qubit) favourably compare to the time complexity $\cO(Nr^2)$ of the classical HKPV algorithm, or the expected complexity in $\cO(Nr + r^3 \log r)$ of the randomized version of HKPV in \citep{DCMW19,BTA22}.

That being said, diagonalizing $\bK$ on a classical computer as a preprocessing step remains a $\cO(N^3)$ bottleneck, or at least $\cO(Nd^2)$ in the common case where the diagonalization of $\bK$ can be reduced to the SVD of an $N\times d$ matrix. 
This bottleneck thus seems to cancel the advantage of using a quantum circuit \emph{if one insists} on starting with $\bK$ stored on a classical computer. 
Yet, while the projection kernel $\bK$ may not be available in diagonalized form, it is common in data science applications \citep{KySo18,BeBaCh20b} to specify it implicitly, as a set of vectors spanning its range. 
As noted by \cite{BTA22}, using a (classical) parallel QR algorithm and $P$ processors, we can reduce the classical preprocessing step to $\cO(Nd^2/P)$ flops.
Importantly for our quantum pipeline, we discuss here how to further reuse the intermediate steps of this preprocessing in the design of the quantum circuit to apply next. 
This yields a hybrid parallel/quantum algorithm to sample projection DPPs.
Compared to the classical HKPV sampler, our pipeline thus provides a linear speedup.
Compared to the expected complexity of the randomized classical algorithm discussed in \citep{DCMW19,BTA22}, we show a gain in the sampling step, but we arguably share the same bottleneck of classical parallel QR preprocessing.
Finally, the necessity for classical preprocessing may disappear in the future, once $\bK$ can be assumed to be initially available as a quantum state, stored on a quantum computer.

The rest of the paper is organized as follows.
In \cref{s:DPPs}, we define DPPs and one of their generalizations, Pfaffian PPs (PfPPs). 
In \cref{s:qubits_to_fermions}, we introduce the vocabulary of quantum field theory, at the basis of the connection between PfPPs and free fermions. 
By sticking to the case of a finite-dimensional state space, we can avoid technical difficulties and provide a rigorous, stand-alone introduction, mostly following \citep{Nie05}. 
Section~\ref{s:fermions_to_dpps} is devoted to building a Hamiltonian starting from a  DPP or a PfPP, so that a simple quantum measurement yields a sample from the corresponding point process.
In Section~\ref{s:quantum_circuits}, we show how \cite{WHWCNT15,JSKSB18} build circuits to simulate the fermionic systems corresponding to our point processes. 
Our presentation insists on the implicit links with parallel QR algorithms, which allow us to introduce variants of the circuits with smaller complexity under assumptions on the qubit communication constraints of the target machine.
\corr{Still in Section~\ref{s:quantum_circuits}, we show that the projective PfPPs that we simulate generate sets of points with a fixed  parity of their cardinality.}
Finally, we investigate in Section~\ref{s:experiments} the implementation of the circuits with the library Qiskit \citep{Qiskit}, and the noise when running the circuits on 5-qubit IBMQ machines \citep{IBMQ}.

Appendix~\ref{s:mathematical_details} contains a few detailed proofs that we extracted from the main body of the paper. 
Appendix~\ref{s:gates_details} contains a discussion on gate details to implement the basic operations introduced in the circuits of \cref{s:quantum_circuits}.
Appendix~\ref{s:errors_in_quantum_computers} is an overview of the sources of error in current quantum computers and their orders of magnitude.


\paragraph*{Notations.} The complex conjugate of a complex number $z$ is denoted by $\overline{z}$. 
Similarly,  $\overline{\mathbf{M}}$ denotes the entrywise complex conjugate of a matrix $\mathbf{M}$.
The transpose of $\mathbf{M}$ reads $\mathbf{M}^\top$ and its Moore-Penrose pseudo-inverse is $\mathbf{M}^+$.
\corr{For two Hermitian matrices $\mathbf{M}$ and $\mathbf{N}$ of the same size, we write $\mathbf{M}\preceq \mathbf{N}$ if for all complex vectors $\mathbf{v}$ we have $\mathbf{v}^* \mathbf{M} \mathbf{v} \leq  \mathbf{v}^* \mathbf{N}\mathbf{v}$}.
The adjoint of an operator $A$ is written $A^*$.
Also, we denote the canonical basis elements of $\mathbb{C}^N$ by $\mathbf{e}_k$, $1\leq k \leq N$. 

\corr{For any positive integer} $k$, we write $[k] \triangleq \{1,\dots,k\}$.
\corr{The matrix obtained by selecting the first $k$ columns on $\mathbf{M}$ is denoted by $\mathbf{M}_{:, [k]}$.
Similarly, the $\ell$th column of $\mathbf{M}$ reads $\mathbf{M}_{:, \ell}$.}
A point process on $[N]$ is a probability measure over subsets of $[N]$.
\corr{When talking about a point process $Y$, we use $\mathbb{P}$ and $\mathbb{E}_Y$ to denote the corresponding probability and expectation, like in $\mathbb{P}(\{1,2\}\subseteq Y)$ or $\mathbb{E}_Y f(Y)$.}
Finally, the sigmoid function is $\sig(x) = 1/(1+\exp(-x))$.

\section{Determinantal and Pfaffian point processes}
\label{s:DPPs}
In this section, we introduce \emph{discrete} determinantal point processes (DPPs), and refer to \citep{KuTa12} for their elementary properties.
We also introduce Pfaffian point processes (PfPPs, \citealp{Rains2000,Soshnikov03}), a generalization of DPPs that has not yet been used in machine learning, to the best of our knowledge.
As we shall see in \cref{s:PfPP_from_QM}, both DPPs and PfPPs naturally appear when modeling physical particles known as fermions. 

\subsection{Determinantal point processes}
\corr{A determinantal point process is determined by the so-called inclusion probabilities.}
\begin{definition}[DPP]
Let $\bK \in \mathbb{C}^{N\times N}$. 
A random subset $Y \subseteq [N]$ is drawn from the DPP of marginal kernel $\bK$, denoted by $Y\sim \mathrm{DPP}(\bK)$ if and only if
\begin{equation}
  \label{eq:def_dpp}
  \forall S \subseteq [N],\quad \mathbb{P}(S \subseteq Y) = \det (\bK_{S}),
  \end{equation}
  where $\bK_{S} = [\bK_{i,j}]_{i,j \in S}$. 
  We take as convention $\det (\bK_{\emptyset}) = 1$.
\end{definition}
\corr{Note that the matrix $\bK$ is called the \emph{marginal kernel} since it determines the inclusion probabilities of subsets of items, in the same way the adjective \emph{marginal} is used for the distribution of a subset of random variables in probability theory.
In particular, the one-item inclusion probabilities are given by the diagonal of the kernel, namely $\mathbb{P}(\{i\} \subseteq Y) = \bK_{i,i}$ for all $i \in [N]$.
For all pairs $\{i,j\}$, $|\bK_{i,j}|$ is interpreted as the similarity of $i$ and $j$ -- similar items having a low probability to be jointly sampled; see \cref{e:proba_DPP_pair} below for more details.}
\emph{A priori}, it is not obvious that a given complex matrix $\bK$ defines a DPP. 
\begin{theorem}[\citealp{Mac72}, \citealp{Sos00}]
  \label{th:macchi_soshnikov}
  When $\bK$ is Hermitian, existence of $\mathrm{DPP}(\bK)$ is equivalent to the spectrum of $\bK$ being included in $[0,1]$.
\end{theorem}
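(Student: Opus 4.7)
The plan is to prove the two implications separately. Necessity follows by a short algebraic argument, whereas sufficiency is constructive: I would first build $\mathrm{DPP}(\bK)$ when $\bK$ is an orthogonal projection, and then reach general spectra in $[0,1]$ by a Bernoulli mixture.

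For necessity, assume $\mathrm{DPP}(\bK)$ exists. Applying \cref{eq:def_dpp} to singletons gives $\bK_{i,i}\in[0,1]$, and more generally $\det(\bK_S)\geq 0$ for every $S\subseteq [N]$, so the Hermitian matrix $\bK$ has all nonnegative principal minors, hence $\bK\succeq\bzero$. Next, inclusion--exclusion combined with the multilinear identity
\begin{equation*}
\det\bigl((\bI-\bK)_S\bigr)=\sum_{T\subseteq S}(-1)^{|T|}\det(\bK_T)
\end{equation*}
identifies the empty-intersection probability $\mathbb{P}(Y\cap S=\emptyset)$ with the principal minor of $\bI-\bK$ indexed by $S$; nonnegativity of this probability for every $S$ forces $\bI-\bK\succeq\bzero$, so $\mathrm{spec}(\bK)\subseteq[0,1]$.

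For the sufficiency half, I would first treat the projection case $\bK=\bQ^{*}\bQ$ with $\bQ\in\mathbb{C}^{r\times N}$ having orthonormal rows, and define a candidate law by $p(S)=\det(\bK_S)$ if $|S|=r$ and $p(S)=0$ otherwise. Cauchy--Binet gives $\det(\bK_S)=|\det(\bQ_{:,S})|^{2}\geq 0$ and $\sum_{|S|=r}p(S)=\det(\bQ\bQ^{*})=1$, so $p$ is a probability mass function. The key remaining identity is
\begin{equation*}
\sum_{S\supseteq T,\,|S|=r}\det(\bK_S)=\det(\bK_T)\qquad\text{for every }T\subseteq[N],
\end{equation*}
which I would establish either by direct Cauchy--Binet bookkeeping on the squared $r\times r$ minors of $\bQ$ containing the columns indexed by $T$, or, more operationally, by running the chain-rule sampler of \citet{HKPV06} and checking inductively that its conditional probabilities multiply to produce the required marginal. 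This is where the main difficulty lies; everything else in the sufficiency direction reduces to Cauchy--Binet and linearity of expectation.

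Finally, for a general Hermitian $\bK$ with $\mathrm{spec}(\bK)\subseteq[0,1]$, I would diagonalize $\bK=\bU\,\Diag(\lambda_1,\dots,\lambda_N)\,\bU^{*}$, draw independent Bernoullis $B_i\sim\mathrm{Bern}(\lambda_i)$, and form the random projection kernel $\bK_B\triangleq\bU\,\Diag(B_1,\dots,B_N)\,\bU^{*}$, which admits a DPP by the previous step. Sampling $Y$ from $\mathrm{DPP}(\bK_B)$ conditionally on $B$ then yields $\mathbb{P}(T\subseteq Y)=\mathbb{E}_B\det((\bK_B)_T)$, and a last application of Cauchy--Binet, after pushing the expectation inside the determinant, reduces this to $\det(\bK_T)$, establishing that $Y\sim\mathrm{DPP}(\bK)$ as required.
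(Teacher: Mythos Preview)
The paper does not actually prove \cref{th:macchi_soshnikov}; it is stated as a cited result attributed to \citet{Mac72} and \citet{Sos00}, with no proof given. So there is no ``paper's own proof'' to compare against.

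Your outline is the standard argument and is essentially correct. The necessity half is clean. For sufficiency, you rightly isolate the only nontrivial step: for a rank-$r$ projection kernel $\bK=\bQ^{*}\bQ$, showing that the pmf $p(S)=|\det(\bQ_{:,S})|^{2}\,1_{|S|=r}$ has the right inclusion probabilities, i.e.\ $\sum_{S\supseteq T,\,|S|=r}\det(\bK_S)=\det(\bK_T)$. Both approaches you name (Cauchy--Binet bookkeeping, or the chain-rule sampler of \citealp{HKPV06}) work; the second is perhaps the most transparent if you want a self-contained write-up. The final Bernoulli-mixture step is exactly the HKPV reduction, and your Cauchy--Binet justification of $\mathbb{E}_B\det((\bK_B)_T)=\det(\bK_T)$ is correct once you expand $(\bK_B)_T=\bU_{T,:}\Diag(B)\bU_{T,:}^{*}$ and average the resulting sum of products of independent Bernoullis.

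It is worth noting that, although the paper does not present it as such, the constructions in \cref{prop:DPP_from_rho}, \cref{s:projDPP}, and \cref{s:reduction_mixture_DPPs} together amount to an alternative proof of the sufficiency direction: one builds a density operator whose occupation-number measurements realise $\mathrm{DPP}(\bK)$ whenever $\mathbf{0}\preceq\bK\preceq\bI$, with Wick's theorem playing the role of your Cauchy--Binet identity. Your classical route is more elementary; the paper's quantum route is constructive in a way that feeds directly into the circuit design.
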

\corr{As a first comment, note that if $\bK$ is a Hermitian kernel associated with a DPP, the complex conjugate kernel $\overline{\bK}$ defines a DPP with the same law.
This is because the eigenvalues of any principal submatrix of $\bK$ are real.}
Second, as a particular case of \cref{th:macchi_soshnikov}, when the spectrum of $\bK$ is included in $\{0,1\}$, we call $\bK$ a \emph{projection} kernel, and the corresponding DPP a \emph{projection} DPP. 
Letting $r$ be the number of unit eigenvalues of its kernel, samples from a projection DPP have fixed cardinality $r$ with probability 1 \citep[Lemma 17]{HKPV06}.
In applications, projection kernels of rank $r$ are often available in one of two forms: either
\begin{equation}
  \label{e:projection_kernel_gram_schmidt}
  \bK = \bA (\bA^\adjoint \bA)^{+} \bA^\adjoint,
\end{equation}
where $\bA \in \mathbb{C}^{N \times M}$ is any matrix with rank $r\leq \min(N, M)$,
or in diagonalized form
\begin{equation}
  \label{e:projection_kernel_diagonalized}
  \bK = \bU \bU^\adjoint,
\end{equation}
where $\bU\in \mathbb{C}^{N\times r}$ has orthonormal columns.
We give an example application for each form.

\begin{example}[Uniform spanning trees]
    \label{ex:USTs}
    Consider a finite connected graph $G$ with $M$ vertices and $N$ edges, encoded by its oriented edge-vertex incidence matrix $\bA\in\{-1,0,1\}^{N\times M}$. 
    There are a finite number of spanning trees of $G$, and we draw one uniformly at random. 
    The edges in that random tree correspond to a subset $Y$ of the indices $[N]$ of the rows of $\bA$.
    It turns out \citep{Pemantle91} that $Y$ is a projection DPP with kernel \eqref{e:projection_kernel_gram_schmidt}.
\end{example}

Uniform spanning trees are useful in many contexts, e.g.\ to build preconditioners for certain linear systems \citep[Section 5]{KySo18}. 
Another example of application of DPPs is column-subset selection.

\begin{example}[Column subset selection]
    \label{ex:CSS}
    \cite{BeBaCh20b} propose to select $k$ columns of a ``fat" matrix\footnote{Note our different notation compared to \citep{BeBaCh20b}, who use $N$ for the number of rows.} $\bX \in\mathbb{R}^{n\times N}$, $N\gg n$, using the projection DPP with rank-$k$ kernel 
  \begin{equation}
      \label{e:ayoub_kernel}
      \bK = \bV_{:,[k]}\bV_{:,[k]}^\top,
  \end{equation}
  where $\bX = \bU \mathbf{\Sigma} \bV^\top$ is the singular value decomposition of $\bX$, \corr{and $\bV_{:,[k]}$ is the matrix given by the first $k$ columns of $\bV$}. 
  This is an example of DPP with a kernel specified by \eqref{e:projection_kernel_diagonalized}.
  \cite{BeBaCh20b} prove that the projection of $\bX$ onto the subspace spanned by the selected columns is essentially an optimal low-rank approximation of $\bX$. 
  This ensures statistical guarantees in sketched linear regression.
\end{example}

Because we assume that the kernel is Hermitian, a DPP can be seen as a \emph{repulsive} distribution, in the sense that for all distinct $i,j\in [N]$,
\begin{align}
  \mathbb{P}(\{i,j\} \subseteq Y) &= \bK_{i,i} \bK_{j,j} - \bK_{i,j}\overline{\bK_{i,j}}\label{e:proba_DPP_pair}\\
  &= \mathbb{P}(\{i\} \subseteq Y)\mathbb{P}(\{j\} \subseteq Y) - \vert\bK_{i,j}\vert^2\nonumber\\
  &\leq \mathbb{P}(\{i\} \subseteq Y)\mathbb{P}(\{j\} \subseteq Y).\nonumber
\end{align}

This repulsiveness enforces diversity in samples, and is particularly adequate in applications where a DPP is used to extract a small diverse subset of a large collection of $N$ items.
Beyond column subset selection, this diversity is natural in machine learning tasks such as summary extraction \citep*{KuTa12} or experimental design \citep{DLM20,PoiBa23}.

\subsection{Pfaffian point processes}
\label{s:PfPP_basics}
Similarly to the determinant, the Pfaffian of a $2k\times 2k$  matrix is a polynomial of the matrix entries
    \begin{align*}
        \pf(\mathbf{A}) = \frac{1}{2^k k!}\sum_{\sigma } \mathrm{sgn}(\sigma) \mathbf{A}_{\sigma(1)\sigma(2)}\dots \mathbf{A}_{\sigma(2k-1)\sigma(2k)}. 
    \end{align*}
  It is easy to see that the Pfaffian of $\mathbf{A}$ is equal to the Pfaffian of its antisymmetric part $(\mathbf{A}-\mathbf{A}^\top)/2$.
  For $\mathbf{A}$ skew-symmetric, the definition simplifies to
    \begin{align*}
        \pf(\mathbf{A}) = \sum_{\sigma \text{ contraction}} \mathrm{sgn}(\sigma) \mathbf{A}_{\sigma(1)\sigma(2)}\dots \mathbf{A}_{\sigma(2k-1)\sigma(2k)}. 
    \end{align*}
Recall that a contraction of order $m$ ($m$ even) is a permutation such that $\sigma(1)< \sigma(3) < ... < \sigma(m - 1)$, and $\sigma(2i - 1) < \sigma(2i)$ for $i \leq m/2$.
To relate to determinants, note that the Pfaffian of a skew-symmetric matrix $\mathbf{A}$ of even size satisfies $\det \mathbf{A} = (\pf \mathbf{A})^2$.

\begin{definition}[PfPP]
  Let $\pfK:[N]\times [N] \rightarrow \mathbb{C}^{2\times 2}$ satisfy $\pfK(i,j)^\top = -\pfK(j,i)$ for all $1\leq i,j\leq N$.
  A random subset $Y \subseteq [N]$ is drawn from the PfPP of marginal kernel $\pfK$, denoted by $Y\sim \pfPP(\pfK)$ if and only if
  \begin{equation}\label{eq:def_pfpp}
  \forall S \subseteq [N],\quad \mathbb{P}(S \subseteq Y) = \pf (\bK_S),
  \end{equation}
  where $\bK_S = [\pfK(i,j)]_{i,j\in S}$ is a complex matrix made of $|S|$ blocks of size $2\times 2$. 
\end{definition}
Sufficient conditions on $\pfK$ for the existence of $\pfPP(\pfK)$ were given by \citet[Theorem 1.3]{Kargin2014} when 
$\big(\begin{smallmatrix}
  0 & -1\\
  1 & 0
\end{smallmatrix}\big)\pfK(i,j)$ can be mapped to a self-adjoint quaternionic kernel taking values in the set of $2\times 2$ complex matrices.
Later \citet[Theorem 2.3]{Kassel15} gave an equivalent of the Macchi-Soshnikov Theorem~\ref{th:macchi_soshnikov} for this type of processes; see \citep[Theorem 7.6 and Proposition 7.11]{kaLe22}.
This class of Pfaffian PPs was also studied by \citet{BDQ21} in the continuous setting.

More recently, \citet{koshida21} gave another sufficient condition for the existence of a Pfaffian point process on a discrete ground set, which is well-suited to the processes considered in our paper.
The PfPPs of \citet{koshida21} correspond to the case where $\big(\begin{smallmatrix}
  0 & 1\\
  1 & 0
\end{smallmatrix}\big)\pfK(i,j)$ is a self-adjoint complex kernel.
The intersection of the classes of PfPPs studied by \citet{Kargin2014} and \citet{koshida21} is simply the set of PfPPs for which the $2\times 2$ matrix $\pfK(i,j)$ has a vanishing diagonal, i.e., DPPs with Hermitian kernels; see \cref{ex:vanish_diagonal} below.

Before going further, we introduce a few useful notations. 
Consider a $2N\times 2N$ complex matrix
$
    \bS,
$
viewed as made of four $N\times N$ blocks.
Define the following transformation of $\bS$, called here \emph{particle-hole transformation}, which consists in taking the complex conjugation and exchanging blocks along diagonals, i.e.
\[
    \ph(\bS) =     
    \bF
    \overline{\bS}
    \bF, \text{ with } \bF = 
    \begin{pmatrix}
      \mathbf{0} & \bI\\
      \bI & \mathbf{0}
  \end{pmatrix}.
\]
\begin{proposition}[\cite{koshida21}]\label{prop:existence_Koshida}
  Let $\bS = \left(\begin{smallmatrix}
    \bS_{11} & \bS_{12}\\
    \bS_{21} & \bS_{22}
  \end{smallmatrix}\right)$ be a Hermitian $2N\times 2N$ matrix such that $0 \preceq \bS \preceq \bI$ and $\ph(\bS) = \bI - \bS$.
  There exists a Pfaffian point process with the marginal kernel
  \[
    \pfK_{\corr{\bS}}(i,j) = \begin{pmatrix}
      \bS_{21}(i,j) & \bS_{22}(i,j)\\
      \bS_{11}(i,j) - \delta_{ij} & \bS_{12}(i,j)
    \end{pmatrix}, \quad 1\leq i,j \leq N.
  \]
\end{proposition}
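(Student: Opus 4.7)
The plan is to prove existence by explicit construction of a fermionic quasi-free (Gaussian) state whose particle–hole covariance matrix equals $\bS$, and then to identify its single-site occupation statistics with the claimed Pfaffians. Because the inclusion probabilities then arise from an honest density operator, positivity and consistency of the candidate probability measure on subsets of $[N]$ come for free, which sidesteps the need for a direct combinatorial verification.

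First, I would observe that $0 \preceq \bS \preceq \bI$ together with $\ph(\bS) = \bI - \bS$ are precisely the admissibility conditions for $\bS$ to be the two-point covariance matrix of a quasi-free state on $N$ fermionic modes in the Nambu extended basis $(c_1,\dots,c_N,c_1^*,\dots,c_N^*)$. Using the particle–hole constraint, one diagonalises $\bS$ by a Bogoliubov (unitary) transformation $\bR$ into the normal form $\bR\bS\bR^* = \mathrm{diag}(\lambda_1,\dots,\lambda_N,1-\lambda_1,\dots,1-\lambda_N)$ with $\lambda_k \in [0,1]$. The product state $\rho = \bigotimes_{k=1}^N\bigl(\lambda_k\ket{1}\bra{1} + (1-\lambda_k)\ket{0}\bra{0}\bigr)$, transported back to the original modes by the Gaussian unitary implementing $\bR$, is then a valid density operator whose two-point function reproduces $\bS$. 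This is the construction that Sections~\ref{s:qubits_to_fermions} and \ref{s:fermions_to_dpps} set up in detail, and I would freely rely on it.

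Second, I would apply the fermionic Wick theorem to expand
\[
    \mathbb{P}(S \subseteq Y) = \mathrm{tr}\Bigl(\rho \prod_{i\in S} c_i^* c_i\Bigr)
\]
as a sum over pairwise contractions of creation/annihilation operators under $\rho$. After using the canonical anticommutation relation $\{c_i,c_j^*\} = \delta_{ij}$ to rewrite the product in the interleaved form $\prod_{i\in S} c_i c_i^*$ up to a contact term, the Wick expansion collapses into the Pfaffian of a $2|S|\times 2|S|$ matrix whose $(i,j)$ block collects the four two-point functions $\langle c_j c_i\rangle$, $\langle c_j c_i^*\rangle$, $\langle c_j^* c_i\rangle - \delta_{ij}$, $\langle c_j^* c_i^*\rangle$, the $-\delta_{ij}$ being exactly the contact term produced by the anticommutator. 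Identifying these four entries with the blocks $\bS_{21}(i,j)$, $\bS_{22}(i,j)$, $\bS_{11}(i,j)-\delta_{ij}$, $\bS_{12}(i,j)$ yields $\pfK_{\bS}(i,j)$ as stated.

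The main obstacle I anticipate is purely bookkeeping: the exact block arrangement of $\pfK_{\bS}$, the signs, and the $\delta_{ij}$ shift are all dictated by the ordering chosen inside the Pfaffian and by its sign conventions under permutations of pairs of rows and columns. The cleanest way to avoid sign mistakes is to switch to self-adjoint Majorana operators, in which Wick's theorem takes its most symmetric form and the particle–hole map $\ph$ is realised as the linear involution $\bF\overline{\cdot}\bF$ exchanging the two halves of the Nambu basis; the block structure of $\pfK_\bS$ then follows simply by reading off the covariance entries restricted to $S$.
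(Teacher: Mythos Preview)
Your approach is correct and is essentially the route the paper takes. The paper does not give a self-contained proof of \cref{prop:existence_Koshida} (it is cited from \cite{koshida21}), but it develops precisely the machinery you describe: Sections~\ref{s:qubits_to_fermions}--\ref{s:fermions_to_dpps} build the quasi-free state, \cref{lem:bilinears_BdG} identifies its covariance with an admissible $\bS$, and the proof of \cref{prop:Pfaffian_PP} in Appendix~\ref{proof:Pfaffian_PP} carries out the Wick/Pfaffian computation you sketch. One small correction to your bookkeeping: the $-\delta_{ij}$ shift in the $(2,1)$ entry does not come from first rewriting $\prod c_i^* c_i$ as $\prod c_i c_i^*$; in the paper's derivation it appears when one completes the off-diagonal block of the contraction matrix so that $\pfK(i,j)^\top = -\pfK(j,i)$, and then uses \eqref{e:CAR} to reexpress $-\langle c_{j}^* c_{i}\rangle$ as $\langle c_i c_j^*\rangle - \delta_{ij}$.
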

A few remarks are in order. 
First, the properties of $\bS$ allow to simplify the expression of the kernel in
\begin{align}
  \pfK_{\corr{\bS}}(i,j) = \left(\begin{matrix}
    \bS_{21}(i,j) & \bS_{22}(i,j)\\
    -\bS_{22}(j,i) & \overline{\bS_{21}}(j,i)
  \end{matrix}\right), \label{eq:K_pfaffian_kernel}
\end{align}
where $\bS_{21}$ is skew-symmetric and $\bS_{22}$ is Hermitian.
\corr{Second, for a matrix $\bS$ as in \cref{prop:existence_Koshida}, we easily see that $\pfK_{\corr{\bS}}$ and $\pfK_{\corr{\overline{\bS}}}$ yield a PfPP with the \emph{same} law.
This is a consequence of the identity
\begin{align}
  \pfK_{\overline{\bS}}(i,j) = -\begin{pmatrix}
    0 & 1\\
    1 & 0
  \end{pmatrix}
  \pfK_{\bS}(i,j)
  \begin{pmatrix}
    0 & 1\\
    1 & 0
  \end{pmatrix}.\label{e:pf_kernel_conjugate}
\end{align}
When no ambiguity is possible, we suppress the dependence on $\bS$ for simplifying expressions. 
Third}, DPPs with Hermitian kernels appear as particular instances of the PfPPs of \cref{prop:existence_Koshida}.
\begin{example}[vanishing diagonal]\label{ex:vanish_diagonal}
  Let $\bS$ satisfy the conditions of \cref{prop:existence_Koshida}, and let $\mathbb{K}_{\corr{\bS}}$ be the corresponding Pfaffian kernel.
  If $\bS_{21} = 0$, $Y \sim \mathrm{PfPP}(\mathbb{K}_{\corr{\bS}})$ is distributed according to $\mathrm{DPP}(\bS_{22})$.
\end{example}
\corr{Fourth}, for $Y \sim \pfPP(\pfK_{\corr{\bS}})$ and $i\neq j$, the $2$-point correlation function is
  \begin{align}
    \mathbb{P}(\{i,j\} \subseteq Y) &= \bS_{22}(i,i)\bS_{22}(j,j) - |\bS_{22}(i,j)|^2 + |\bS_{21}(i,j)|^2\nonumber\\
    &= \mathbb{P}(\{i\} \subseteq Y)\mathbb{P}(\{j\} \subseteq Y) - |\bS_{22}(i,j)|^2 + |\bS_{21}(i,j)|^2.
    \label{e:2-point-correlation-PfPP}
  \end{align}
Compared with DPPs with Hermitian kernels, Equation~\eqref{e:2-point-correlation-PfPP} suggests that a Pfaffian point process as in Proposition~\ref{prop:existence_Koshida} is less repulsive than the related determinantal process $\mathrm{DPP}(\bS_{22})$  -- an intuition for this fact is given in \cref{s:PfPP_from_QM} below.
Relatedly, note that the $1$-point correlation functions of $\mathrm{DPP}(\bS_{22})$ and  $\pfPP(\pfK_{\corr{\bS}})$ for $\mathbb{K}_{\corr{\bS}}$ given in \cref{eq:K_pfaffian_kernel} are the same. 
In particular, the expected cardinality of $Y\sim \pfPP(\pfK_{\corr{\bS}})$ is simply $\E | Y | = \Tr(\bS_{22})$. 

We end this section with a result about sample parity, which will be further discussed later in \cref{s:PfPP_from_QM}.
\cref{lem:PfPP_parity} below gives the expected parity of the number of samples of a PfPP.
Since we are not aware of any similar statement in the literature, we also provide a short proof of \cref{lem:PfPP_parity}.
\begin{lemma}[Parity of PfPP samples]\label{lem:PfPP_parity}
  Let $\mathbb{K}$ be the kernel of a Pfaffian point process on $[N]$ and let $\mathbb{J}$ be a $2N\times 2N$ block matrix such that
  $
    \mathbb{J}(i,j) 
    =
    \delta_{ij}
     \left(\begin{smallmatrix}
      0 & 1\\
      -1 & 0
    \end{smallmatrix}\right).
  $
  The expected parity of the cardinality of a sample of $Y\sim \pfPP(\pfK)$ is
  $
    \E_{Y}(-1)^{|Y|} = \pf\left(\mathbb{J} - 2 \mathbb{K}\right).
  $
\end{lemma}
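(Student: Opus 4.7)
The plan is to expand $(-1)^{|Y|}$ into a finite sum of inclusion indicators, use the defining Pfaffian formula for $\mathbb{P}(S\subseteq Y)$, and then collapse the resulting sum of Pfaffians into a single Pfaffian via a Pfaffian analogue of the classical determinantal identity $\sum_S \det(\bK_S) = \det(\bI + \bK)$.

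First, since $1 - 2\mathbf{1}_{i\in Y} = (-1)^{\mathbf{1}_{i\in Y}}$, one writes
\[
(-1)^{|Y|} \;=\; \prod_{i=1}^N \bigl(1 - 2\,\mathbf{1}_{i\in Y}\bigr) \;=\; \sum_{S\subseteq [N]} (-2)^{|S|}\,\mathbf{1}_{S\subseteq Y}.
\]
Taking expectations and invoking $\mathbb{P}(S\subseteq Y) = \pf(\mathbb{K}_S)$ from the definition of a PfPP yields
\[
\mathbb{E}_Y(-1)^{|Y|} \;=\; \sum_{S\subseteq [N]} (-2)^{|S|}\,\pf(\mathbb{K}_S).
\]

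The technical heart of the argument, and the part that requires the most care, is the algebraic identity
\[
\pf(\mathbb{J} + c\,\mathbb{K}) \;=\; \sum_{S\subseteq [N]} c^{|S|}\,\pf(\mathbb{K}_S), \qquad c\in\mathbb{C},
\]
from which the lemma follows by specializing $c=-2$. To establish it, expand the left-hand side as a signed sum over perfect matchings of $[2N]$ and distribute each entry $(\mathbb{J}+c\mathbb{K})_{a,b} = \mathbb{J}_{a,b} + c\,\mathbb{K}_{a,b}$ across the product attached to each matching. Since $\mathbb{J}$ is supported only on the block-diagonal pairs $\{2k-1,2k\}$, where it takes the values $\pm 1$, the nonvanishing terms are in bijection with pairs $(S,E)$, where $S\subseteq [N]$ collects the blocks whose indices are paired by $\mathbb{K}$-factors and $E$ is a perfect matching of those $2|S|$ indices; the blocks outside $S$ are paired internally by $\mathbb{J}$ and each contributes $+1$. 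The sign attached to the full matching reduces to the sign attached to $E$ alone, because the block-diagonal pairs are already in the canonical order required by the contraction definition of the Pfaffian. Grouping terms by $S$ and recognizing the inner sum as $\pf(\mathbb{K}_S)$ finishes the identity.

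The main obstacle is the sign bookkeeping in the last step, which one can verify directly on, say, $N=2$, or circumvent cleanly via the Grassmann integral representation $\pf(A) = \int \exp\bigl(\tfrac12 \eta^\top A\,\eta\bigr)\,d\eta$: since $\tfrac12\eta^\top \mathbb{J}\,\eta = \sum_k \eta_{2k-1}\eta_{2k}$ and $\eta_i^2 = 0$, one has
\[
\exp\!\Bigl(\tfrac12 \eta^\top \mathbb{J}\,\eta\Bigr) \;=\; \prod_{k=1}^N \bigl(1+\eta_{2k-1}\eta_{2k}\bigr),
\]
whose expansion indexes the sum over $S$ automatically, after which the residual Gaussian Grassmann integral on the block-$S$ variables reproduces $c^{|S|}\pf(\mathbb{K}_S)$. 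Either route, combined with the elementary expansion above, gives $\mathbb{E}_Y(-1)^{|Y|} = \pf(\mathbb{J} - 2\mathbb{K})$.
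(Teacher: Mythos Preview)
Your proof is correct and follows essentially the same route as the paper: expand $(-1)^{|Y|}=\prod_i(1-2\,\mathbf{1}_{i\in Y})$ to get $\sum_S(-2)^{|S|}\pf(\mathbb{K}_S)$, then collapse via the identity $\pf(\mathbb{J}+\mathbb{M})=\sum_S\pf(\mathbb{M}_S)$ with $\mathbb{M}=-2\mathbb{K}$. The only difference is that the paper cites \citep{Rains2000} for the Pfaffian summation identity and \citep{koshida21} for the generating-function expansion, while you supply self-contained arguments (perfect matchings or Grassmann integrals) for both.
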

\begin{proof}
  We shall prove a slightly more general result, which relies on the identity 
  $$
    \pf(\mathbb{J} + \mathbb{M}) = \sum_{S\subseteq [N]}\pf\left(\mathbf{M}_S\right),
  $$ 
  which holds for any skew-symmetric block matrix $\mathbb{M}$; see \citep[Section 8]{Rains2000}. 
  Note that the term corresponding to $\emptyset$ (and equal to $1$) is included in the sum.
  
  Let $\alpha \neq 1$.
  We have 
  \begin{align*}
    \E_{Y}(\alpha^{|Y|})  = \sum_{S\subseteq [N]}(\alpha -1)^{|S|} \pf(\bK_S) = \sum_{S\subseteq [N]}\pf\left((\alpha -1)\bK_S\right) = \pf\left(\mathbb{J} + (\alpha - 1) \mathbb{K}\right),
  \end{align*}
  where the first equality can be derived from \citep[Sec. 2.3]{koshida21}.
  This the desired result if we take $\alpha = -1$.
\end{proof}
Below, in \cref{rem:projective_S_for_PfPP}, we will see that samples of a PfPP associated with a projective $\mathbf{S}$ have a fixed parity.

\section{From qubits to fermions}
\label{s:qubits_to_fermions}
The content of this section is standard; see e.g., the reference textbook \citep{NiCh10} for quantum computing basics and \citep{Nie05} for the Jordan-Wigner transform. 
We also refer to \citeus{}, which presents all the basic elements required in this section in the context of optical measurements and the resulting point processes.

\subsection{Models in quantum physics}
A quantum model is given by $(i)$ a Hilbert space $(\mathbb{H},\braket{\cdot}{\cdot})$ called the \emph{state space}, and $(ii)$ a collection of self-adjoint operators $\mathbb{H}\to \mathbb{H}$ called \emph{observables}, of which one particular observable $H:\mathbb{H}\rightarrow \mathbb{H}$ is singled out and called the \emph{Hamiltonian}.
Let $\psi \neq 0$ be an element of $\mathbb{H}$. 
All elements of the form $z \psi$ for a complex $z\neq 0$ represent the same quantum state, called a \emph{pure} quantum state, as opposed to more general states to be defined later.
To simplify expressions, it is conventional to only consider elements $\psi$ of unit norm, and to denote a unit-norm pure state by the ``ket'' $\ket{\psi}$, keeping in mind that, as long as $|z|=1$, all vectors $z\ket{\psi}\in\mathbb{H}$ represent the same state.
The corresponding ``bra" $\bra{\psi}$ is the linear form $\ket{x}\mapsto \braket{\psi}{x}$. 

\subsubsection{An observable and a state define a random variable}
Henceforth, we assume that $\mathbb{H}$ has finite dimension $d$. 
Take an observable $A$. 
By the spectral theorem, $A$ can be diagonalized in an orthonormal basis, say with eigenpairs $(\lambda_i, u_i)$ with $1\leq i \leq d$.
For simplicity, we momentarily assume that all the eigenvalues of $A$ have multiplicity $1$.
Together with a state $\ket\psi$, the observable $A = \sum_i \lambda_i u_i u_i ^*$ describes a random variable $X_{A,\psi}$ on $\spec(A) = \{\lambda_1, \dots, \lambda_d\}$, through 
\begin{equation}
    \label{e:born_pure_state}
    \mathbb{P}(X_{A, \psi} = \lambda_i) = \vert\braket{\psi}{u_i}\vert^2.
\end{equation}

When modeling statistical uncertainty on a state, like when describing the noisy output of an experimental device, physical states are not modelled as unit-norm vectors of $\mathbb{H}$, but rather as positive trace-one operators.
To see how, we first map $\ket{\psi}$ to the rank-one projector $\rho = \ketbra{\psi}$.
Then the distribution \eqref{e:born_pure_state} can be equivalently defined as
\begin{equation}
    \label{e:born}
    \mathbb{P}(X_{A, \rho} = \lambda_i) = \tr\left[ \rho 1_{\{\lambda_i\}}(A) \right], 
\end{equation}
where for any $f:\mathbb{R}\rightarrow \mathbb{R}$, we have $f(A) = \sum_i f(\lambda_i) u_i u_i ^*.$
In particular, the expectation of $X_{A, \ketbra{\psi}}$ is $\expval{A}{\psi}$.
Note that Definition \eqref{e:born} generalizes to operators $A$ with eigenvalues with arbitrary multiplicity, and to states $\rho$ beyond projectors. 
In particular, for $\mu$ a probability measure on $\mathbb{H}$, 
\begin{equation}
    \label{e:mixed_state}
    \rho = \E_{\psi\sim\mu} \ketbra{\psi}
\end{equation}
still defines a probability measure on the spectrum of $A$ through \eqref{e:born}. 
The expectation of that distribution, also called the \emph{expectation value} of operator $A$, is $\expval{A}_\rho \triangleq \tr \rho A$.
In physics, the association \eqref{e:born} of a state-observable pair $(\rho, A)$ with the random variable $X_{A,\rho}$ is known as \emph{Born's rule}.

Any $\rho$ that is not a rank-one projector is called a \emph{mixed} state, by opposition to rank-one projectors, which are \emph{pure} states.
Mixed states like \eqref{e:mixed_state} are commonly used to describe any uncertainty in the actual state of an experimental entity.\footnote{Statisticians would say \emph{epistemic} uncertainty, i.e., imprecise knowledge of the state.}

\subsubsection{Commuting observables and a state define a random vector}
When all pairs of a set of observables $A_1, \dots, A_p$ commute, then these observables can be diagonalized in the same orthonormal basis $(u_i)$, and \eqref{e:born} can be naturally generalized to describe a random vector $(X_{A_j,\rho})$ of dimension $p$, with values in the Cartesian product 
$$
    \spec(A_1) \times \dots \times \spec(A_p).
$$
More precisely, the law of $(X_{A_j,\rho})$ is given by
\begin{equation}
    \label{e:vector_born}
    \mathbb{P} (X_{A_1,\rho} = x_1, \dots, X_{A_p,\rho} = x_p) = \tr\left[ \rho 1_{\{x_1\}}(A_1) \dots 1_{\{x_p\}}(A_p) \right],
\end{equation}
see e.g. \citep{BVJ07} for an introduction.
In this paper, we will associate a Pfaffian point process to a particular mixed state and a set of commuting observables, which can respectively be efficiently prepared and measured on a quantum computer.
To define these objects, we first need to explain how physicists build Hamiltonians of fermionic systems.

\subsection{The canonical anti-commutation relations}
The Hamiltonian and its structure are often the key part in specifying a model, much like the factorization of a joint distribution in a probabilistic model.
In the case of fermions, a family of physical particles that includes electrons, Hamiltonians are typically built as polynomials of fermionic creation-annihilation operators, i.e., operators that satisfy the so-called \emph{canonical anti-commutation relations} (CAR).
\begin{definition}[CAR]
    Let $\mathbb{H}$ be a Hilbert space.
    The operators $c_j:\mathbb{H}\rightarrow \mathbb{H}$, $j=1, \dots, p$ and their adjoints are said to satisfy the canonical anti-commutation relations if
    \begin{align}
        \label{e:CAR}
        \tag{CAR}
        \lbrace c_i,c_j \rbrace = \lbrace c^\adjoint_i,c^\adjoint_j \rbrace  = 0
        \qquad \text{and} \qquad
        \lbrace c_i,c^{\adjoint}_j \rbrace = \delta_{ij} \mathbb{I},
    \end{align}
    where $\{u,v\} \triangleq uv+vu$ is the anti-commutator of operators $u$ and $v$.    
\end{definition}

Assuming existence\footnote{\cref{def:JWT} gives a realization of \cref{e:CAR}; see \cref{p:jw}.} for a moment, and limiting ourselves to a finite dimensional Hilbert space $\mathbb{H}$ of dimension $d=2^N$, one can say many things on $\mathbb{H}$ from the fact that there are $N$ operators $c_1, \dots, c_N$ satisfying \eqref{e:CAR}.
On that topic, we recommend reading \citep{Nie05}, from which we borrow the following lemma.

\begin{lemma}[Fock basis; see e.g. \citealp{Nie05}]
    Let $\mathrm{dim} \mathbb{H} = 2^N$, $N\in\mathbb{N}^*$, and assume that $c_1, \dots, c_N$ are distinct operators on $\mathbb{H}$ that satisfy \eqref{e:CAR}.
    First, there is a vector $\ket{\emptyset}\in\mathbb{H}$, called the \emph{vacuum}, which is a simultaneous eigenvector of all $c_i^\adjoint c_i$, $i=1, \dots, N$, always with eigenvalue $0$.
    Second, for $\bn =(n_1, \dots, n_N)\in\{0,1\}^N$, consider 
    $$
    \ket{\bn} \triangleq \prod_{i=1}^N (c_i^\adjoint)^{n_i} \ket{\emptyset}.
    $$
    Then $\mathcal{B}_{\mathrm{Fock}} = (\ket{\bn})$ is an orthonormal basis of $\mathbb{H}$. Third, for all $1\leq i \leq N$,
    \begin{equation}
        c_i^\adjoint c_i \ket{\bn} = n_i \ket{\bn},
        \label{e:nonzero_terms}
    \end{equation}
    and, for $i\neq j$,
    \begin{equation} 
        c_i^\adjoint c_j \ket{\bn} =\pm n_j (1-n_i) \ket{\widetilde\bn},
        \label{e:zero_terms}
    \end{equation}
    where $\widetilde n_i =1$, $\widetilde n_j =0$, and $\widetilde n_k = n_k$ for $k\neq i,j$.
    \label{l:representation_theory}
\end{lemma}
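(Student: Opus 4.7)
The plan is to extract structural consequences from \eqref{e:CAR}, use them to decompose $\mathbb{H}$ into joint eigenspaces of the number operators $N_i \triangleq c_i^\adjoint c_i$, invoke a dimension-counting argument to show these are all one-dimensional, and finally verify the action formulas by pushing operators through the product defining $\ket{\bn}$.

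First I would extract three elementary consequences of \eqref{e:CAR}: that $c_i^2 = 0$ (since $\{c_i,c_i\} = 2c_i^2$); that each $N_i$ is a self-adjoint projector, via the computation $N_i^2 = c_i^\adjoint(c_ic_i^\adjoint)c_i = c_i^\adjoint(\mathbb{I}-N_i)c_i = N_i - c_i^{\adjoint 2}c_i^2 = N_i$; and that $[N_i,N_j]=0$ for $i\neq j$, obtained by pushing $c_i$ past $c_j^\adjoint c_j$ using two anticommutations. Being a commuting family of self-adjoint projectors, the $N_i$ admit a simultaneous spectral decomposition $\mathbb{H} = \bigoplus_{\bn\in\{0,1\}^N} \mathbb{H}_\bn$, where $N_i$ acts as the scalar $n_i$ on $\mathbb{H}_\bn$.

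Next I would prove that all $2^N$ eigenspaces $\mathbb{H}_\bn$ have the same dimension. Using $\{c_i^\adjoint,c_j^\adjoint\} = \{c_i^\adjoint,c_j\} = 0$ for $j\neq i$, one checks that $c_i^\adjoint$ commutes with every $N_j$ with $j\neq i$, while $N_i c_i^\adjoint = c_i^\adjoint(\mathbb{I} - N_i)$; hence $c_i^\adjoint$ sends $\mathbb{H}_\bn$ with $n_i=0$ into $\mathbb{H}_{\bn^{(i)}}$, where $\bn^{(i)}$ denotes $\bn$ with the $i$-th coordinate switched to $1$. Because $c_ic_i^\adjoint = \mathbb{I}-N_i$ acts as the identity on such a subspace, this map is an isometry; the symmetric calculation with $c_i^\adjoint c_i = N_i$ shows $c_i$ is an isometric inverse in the opposite direction. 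Iterating gives $\dim \mathbb{H}_\bn = \dim \mathbb{H}_{\bn'}$ for all $\bn,\bn'$, and the constraint $\sum_\bn \dim \mathbb{H}_\bn = 2^N$ forces each $\mathbb{H}_\bn$ to be one-dimensional. In particular, $\mathbb{H}_{(0,\ldots,0)}$ is one-dimensional; fixing a unit vector there and calling it $\ket{\emptyset}$ yields the vacuum and proves the first statement.

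With the vacuum in hand, iterating the isometry property shows that $\ket{\bn} = \prod_i (c_i^\adjoint)^{n_i}\ket{\emptyset}$ is a unit vector lying in $\mathbb{H}_\bn$, so $\mathcal{B}_{\mathrm{Fock}}$ is orthonormal and spans $\mathbb{H}$. Equation~\eqref{e:nonzero_terms} is then immediate from the definition of $\mathbb{H}_\bn$. For \eqref{e:zero_terms} with $i\neq j$, I would slide $c_j$ leftwards through $\prod_k (c_k^\adjoint)^{n_k}$ using the anticommutators in \eqref{e:CAR}: if $n_j = 0$ there is no $c_j^\adjoint$ to collide with and $c_j$ reaches $\ket{\emptyset}$, where $c_j\ket{\emptyset}=0$ because $\|c_j\ket{\emptyset}\|^2 = \langle\emptyset|N_j|\emptyset\rangle = 0$; if $n_j = 1$, the collision with $c_j^\adjoint$ produces a sign and removes both factors; then applying $c_i^\adjoint$ either annihilates the result when $n_i = 1$ (by $c_i^{\adjoint 2}=0$) or inserts $c_i^\adjoint$ in position, yielding $\pm\ket{\widetilde\bn}$ with sign fixed by the parity of the anticommutations performed.

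The only non-mechanical step is the equidimensionality argument in the second paragraph: this is where $\dim\mathbb{H} = 2^N$ enters the proof and is what simultaneously forces the vacuum to exist and to be unique (up to phase), after which everything else is bookkeeping with the \eqref{e:CAR} relations.
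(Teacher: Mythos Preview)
The paper does not actually prove this lemma; it is stated as a standard result with a citation to \citet{Nie05}, and no argument is given either in the main text or in Appendix~\ref{s:mathematical_details}. Your proof is correct and self-contained: the key step---using the isometry property of $c_i^\adjoint$ and $c_i$ between joint eigenspaces to force all $\mathbb{H}_\bn$ to be equidimensional, then invoking $\dim\mathbb{H}=2^N$---is exactly the standard argument one finds in the reference, and the remaining verifications of \eqref{e:nonzero_terms} and \eqref{e:zero_terms} are handled cleanly.
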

The basis built in the lemma in called the Fock basis. 
Its construction depends on the choice of the operators $c_1, \dots, c_N$. 
When there is a risk of confusion, we shall thus further denote $\ket{\bn}$ and $\ket{\emptyset}$ as $\ket{\bn_c}$ and $\ket{\emptyset_c}$, respectively.
Second, because applying $c_i$ to a vector of the Fock basis has the effect of zeroing the $i$th component if it was $1$, and mapping to zero otherwise, we call the $c_i$'s \emph{annihilation operators}.
Similarly, we call their adjoints \emph{creation operators}.

\subsection{Fermionic operators acting on qubits}
Henceforth, we let $\mathbb{H} \triangleq (\mathbb{C}^2)^{\otimes N}$.
This is the state space describing $N$ qubits, of dimension $2^N$.
A qubit corresponds to any physical system, the state of which is described by one out of two levels.
We associate these two levels with a distinguished orthonormal basis $(\ket{0}, \ket{1})$ of $\mathbb{C}^2$, commonly called the \emph{computational basis}.

Consider the so-called \emph{Pauli operators} on $\mathbb{C}^{2}$, given in the computational basis as
\begin{equation}
	\bI = \begin{pmatrix} 1 & 0\\ 0 & 1\end{pmatrix},\quad
	\sigma_x = \begin{pmatrix} 0 & 1\\ 1 & 0\end{pmatrix}, \quad
	\sigma_y = \begin{pmatrix} 0 & -\i\\ \i & 0\end{pmatrix}, \quad
	\sigma_z = \begin{pmatrix} 1 & 0\\ 0 & -1\end{pmatrix}.
	\label{e:pauli_matrices}
\end{equation}
Note that $\sigma_x^2 = \sigma_y^2 = \sigma_z^2= \bI$, and that $\sigma_x\sigma_y = \i\sigma_z$.

\begin{definition}[{The Jordan-Wigner (JW) transformation}] \label{def:JWT}
    Define the JW annihilation operators $a_j$ on $\mathbb{H}$, for $j\in\{1, \dots N\}$, by their matrices in the computational basis
    \begin{equation}
        \label{e:jordan_wigner}
        \underbrace{\sigma_z \otimes \dots \otimes \sigma_z}_{j-1\text{ times}} \otimes \left(\frac{\sigma_x+\i \sigma_y}{2}\right) \otimes \bI \otimes \dots \otimes \bI.
    \end{equation}
    The creation operator $a_j^*$, adjoint of $a_j$, has matrix 
    $$
        \underbrace{\sigma_z \otimes \dots \otimes \sigma_z}_{j-1\text{ times}} \otimes \left(\frac{\sigma_x-\i \sigma_y}{2}\right) \otimes \bI \otimes \dots \otimes \bI.
    $$
    \label{d:jw}
\end{definition}

It is easy to check the following properties of the Jordan-Wigner operators.
\begin{proposition}
    \label{p:jw}
    Let $a_1, \dots, a_N$ be the JW operators from~\cref{d:jw}.
    They satisfy \eqref{e:CAR}. 
    Moreover, the so-called \emph{number} operators $N_j= a_j^\adjoint a_j$ have the simple matrix form
    $$ 
        \underbrace{\bI \otimes \dots \otimes \bI}_{j-1\text{ times}} \otimes \frac{\bI - \sigma_z}{2}  \otimes \bI \dots\otimes \bI.
    $$
    Finally, the Fock basis $\ket{\bn}$ obtained by setting $c_j = a_j$ in~\cref{l:representation_theory} is actually the computational basis, i.e.,
    $$
    \ket{\bn} = \ket{n_1}\otimes \dots \otimes \ket{n_N}.
    $$
\end{proposition}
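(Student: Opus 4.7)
The plan is to verify each of the three claims by direct computation, exploiting the tensor product structure of the JW operators and the elementary single-qubit identities
\[
\sigma_z^2 = \bI, \quad \sigma_x \sigma_y + \sigma_y \sigma_x = 0, \quad \sigma_x \sigma_z + \sigma_z \sigma_x = 0, \quad \sigma_y \sigma_z + \sigma_z \sigma_y = 0, \quad \sigma_x \sigma_y = \i \sigma_z.
\]

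\textbf{Step 1 (CAR relations).} I would start with the single-site computation that $((\sigma_x \pm \i \sigma_y)/2)^2 = 0$, which follows from $\sigma_x^2 = \sigma_y^2 = \bI$ combined with $\{\sigma_x, \sigma_y\} = 0$. This immediately yields $a_j^2 = (a_j^*)^2 = 0$. Next, for $i < j$, I would write out $a_i a_j$ and $a_j a_i$ and observe that they differ only at site $i$ by a factor of $\sigma_z(\sigma_x + \i\sigma_y)$ versus $(\sigma_x + \i\sigma_y)\sigma_z$; the anticommutation of $\sigma_z$ with $\sigma_x$ and $\sigma_y$ then gives $\{a_i, a_j\} = 0$, and similarly for the mixed case $\{a_i, a_j^*\}$ with $i\neq j$. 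Finally, at $i = j$ I would compute at site $j$ the product
\[
\tfrac14\bigl[(\sigma_x+\i\sigma_y)(\sigma_x-\i\sigma_y) + (\sigma_x-\i\sigma_y)(\sigma_x+\i\sigma_y)\bigr] = \tfrac14\cdot 2(\sigma_x^2+\sigma_y^2) = \bI,
\]
and use $\sigma_z^2=\bI$ to conclude that all other tensor factors collapse to $\bI$, giving $\{a_j, a_j^*\} = \bI$.

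\textbf{Step 2 (Number operators).} For the same single-site computation of $((\sigma_x-\i\sigma_y)/2)\cdot ((\sigma_x+\i\sigma_y)/2)$, I would expand and use $\sigma_x\sigma_y = \i\sigma_z$, $\sigma_y\sigma_x = -\i\sigma_z$ to obtain $(\bI - \sigma_z)/2$ at site $j$. The Jordan-Wigner string of $\sigma_z$'s at sites $1,\dots,j-1$ appearing in both $a_j^*$ and $a_j$ squares to the identity, and sites $j+1,\dots,N$ are already identities. So
\[
N_j = a_j^* a_j = \underbrace{\bI \otimes \cdots \otimes \bI}_{j-1} \otimes \tfrac{\bI-\sigma_z}{2} \otimes \bI \otimes \cdots \otimes \bI.
\]

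\textbf{Step 3 (Fock basis is computational basis).} Using the explicit form of $N_j$ and the single-qubit identities $(\bI-\sigma_z)/2 \ket{0} = 0$ and $(\bI-\sigma_z)/2 \ket{1} = \ket{1}$, the unique (up to phase) simultaneous eigenvector of all $N_j$ with eigenvalue $0$ is $\ket{0}^{\otimes N}$, which I identify with $\ket{\emptyset}$. Then I would verify that $a_j^*$ acts on a computational basis vector $\ket{m_1}\otimes\cdots\otimes\ket{m_N}$ with $m_j=0$ by flipping $m_j$ from $0$ to $1$ and introducing a sign $(-1)^{m_1+\cdots+m_{j-1}}$, coming from $\sigma_z \ket{m}=(-1)^m\ket{m}$ on the first $j-1$ sites, and from $(\sigma_x-\i\sigma_y)/2\ket{0}=\ket{1}$ on site $j$. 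Applying $(a_1^*)^{n_1}(a_2^*)^{n_2}\cdots(a_N^*)^{n_N}$ to $\ket{0}^{\otimes N}$ in right-to-left order, each $(a_j^*)^{n_j}$ acts on a state whose sites $1,\dots,j-1$ are still all $\ket{0}$, so every sign is $+1$ and we obtain exactly $\ket{n_1}\otimes\cdots\otimes\ket{n_N}$.

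\textbf{Main obstacle.} There is no real obstacle: every step is an elementary tensor-product manipulation reducible to Pauli algebra on a single site. The only care needed is bookkeeping of the Jordan-Wigner strings of $\sigma_z$'s when checking anticommutation for $i < j$, and fixing conventions so that the product $\prod_{i=1}^N (a_i^*)^{n_i}$ is unambiguous (taken in increasing order of $i$ from the left), which ensures that the JW strings act trivially on the $\ket{0}$'s at sites yet to be filled, eliminating any sign ambiguities in Step~3.
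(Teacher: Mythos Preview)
Your proposal is correct. The paper does not actually spell out a proof of this proposition, merely prefacing it with ``It is easy to check the following properties''; your direct Pauli-algebra verification is precisely the routine check the authors had in mind, and your care with the sign bookkeeping in Step~3 (applying the creation operators in increasing index order so that the Jordan--Wigner $\sigma_z$ strings always see $\ket{0}$'s) is the one place where a hasty reader could slip.
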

There are alternative ways to define fermionic operators on $\mathbb{H}$ using Pauli matrices, like the parity or Bravyi-Kitaev transformations \citep{BraKit2002}; or the Ball-Verstraete-Cirac transformation \citep{Ball2005,VerCir2005}. 
In particular, and at the expense of simplicity, the Bravyi-Kitaev transformation yields operators with smaller \emph{weight} than Jordan-Wigner: only $\mathcal{O}(\log N)$ terms are allowed to differ from the identity in the equivalent of \eqref{e:jordan_wigner}.

\section{From fermions to Pfaffian point processes}
\label{s:fermions_to_dpps}
We first show in \cref{s:DPP_from_QM} how to build any discrete DPP with Hermitian kernel from a mixed state corresponding to a particle number preserving Hamiltonian and a set of commuting observables. 
This connection between DPPs and quasi-free states was observed by
\citet{olshanski2020}.
Then we do the same for a class of discrete PfPPs in \cref{s:PfPP_from_QM} associated to Hamiltonians without particle number conservation.

\subsection{Building a DPP from a quantum measurement} \label{s:DPP_from_QM}
Let $\bH\in\mathbb{C}^{N\times N}$ be Hermitian, and consider the operator 
\begin{align}
H &= \sum_{i,j=1}^N c_i^\adjoint \bH_{ij} c_j,\label{eq:quad_Hamiltonian_vanilla}
\end{align}
which we think of as a Hamiltonian acting on $\mathbb{H}$.
This Hamiltonian preserves the particle number, i.e., $H$ commutes with the number operator $\sum_i c_i^\adjoint c_i$.
Let $(\lambda_k, \bu_{k})_{1\leq k \leq N}$ be the eigenpairs of $\bH$ where  the eigenvalues satisfy $\lambda_1 \leq \dots \leq \lambda_N$. 
The Hamiltonian \cref{eq:quad_Hamiltonian_vanilla} can thus be rewritten in ``diagonalized" form
\begin{align*}
H = \sum_{i,j=1}^N  c_i^\adjoint \left( \sum_{k=1}^N \lambda_k \mathbf u_{k} \mathbf u_k^*\right)_{ij} c_j= \sum_{i,j,k=1}^N  \lambda_k c_i^\adjoint \mathbf u_{ki} \mathbf u_{kj}^* c_j= \sum_{k=1}^N  \lambda_k b_k^\adjoint b_k,
\end{align*}
where we defined the operators 
\begin{align}
    b_k = \sum_{j=1}^N \mathbf u_{kj}^* c_j,
    \label{e:new_annihilation_operators}
\end{align}
which can be checked to satisfy \cref{e:CAR}.

In what follows, we consider in \cref{s:L-ensemble} DPPs for which the correlation kernel has a spectrum in $[0,1)$, referred to as \emph{L-ensembles} in the literature.
Next, in \cref{s:projDPP}, we discuss projection DPPs.

\subsection{The case of L-ensembles}
\label{s:L-ensemble}

Below, \cref{prop:DPP_from_rho} states that there is a natural DPP associated with a Hamiltonian like \eqref{eq:quad_Hamiltonian_vanilla}, whose marginal kernel $\mathbf{K}$ is obtained by applying a sigmoid to the spectrum of $\bH$; formally $\mathbf{K} = \sigma(-\beta \bH)$ with $\beta >0$ being a parameter called \emph{inverse temperature}.
\begin{proposition}[DPP kernel by taking the sigmoid]\label{prop:DPP_from_rho}
    Let $\beta>0$ and $\mu \geq 0$, $H$ and $(b_k)$ be respectively defined by \eqref{eq:quad_Hamiltonian_vanilla} and \eqref{e:new_annihilation_operators}.
    Consider the mixed state
    \begin{align}
        \label{e:gaussian}
        \rho =\frac{1}{Z} \e^{ -\beta \left( H - \mu \sum_{j=1}^N b_j^\adjoint   b_j \right) },
    \end{align} 
    where the normalization constant $Z$ ensures that $\tr \rho = 1$.
    For $i\in[N]$, the observable $N_i=c_i^\adjoint c_i$ is a projector, so that the random variable $X_{N_i,\rho}$ associated to $N_i$ and the state \eqref{e:gaussian} by \eqref{e:born} takes values in $\{0,1\}$.    
    Moreover, all $N_i$'s commute pairwise, and thus define a joint Boolean vector $(X_{N_i,\rho})_{i\in[N]}$ through \eqref{e:vector_born}.
    Consider the point process 
    $$
        S = \{i\in[N]: X_{N_i,\rho}=1\},
    $$
    corresponding to jointly observing all operators $N_i$ in the state \eqref{e:gaussian}, and reporting the indices corresponding to a $1$.
    Then $S$ is determinantal with correlation kernel 
    $$
        \bK = \bU\, \Diag\left( \frac{\e^{-\beta (\lambda_k-\mu)}}{1 + \e^{-\beta (\lambda_k-\mu)}}\right)\, \mathbf U^\adjoint,
    $$
    where $\bU$ and $(\lambda_k)_k$ are obtained by the diagonalization $\bH = \bU\, \Diag\left( \lambda_k\right) \bU^\adjoint$.
\end{proposition}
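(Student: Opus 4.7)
The plan is to diagonalize $\rho$ in the Fock basis of the $b$-operators, identify it as a quasi-free (Gaussian) state, and then apply fermionic Wick's theorem to reduce inclusion probabilities to determinants of two-point correlators.

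First, I would rewrite the exponent of $\rho$ in terms of the $b$-modes. Using \eqref{e:new_annihilation_operators} and the CAR, one checks that $H - \mu \sum_j b_j^\adjoint b_j = \sum_k (\lambda_k - \mu) b_k^\adjoint b_k$, because $\sum_j c_j^\adjoint c_j = \sum_k b_k^\adjoint b_k$ (the total number operator is basis-independent). Since the operators $b_k^\adjoint b_k$ are commuting projectors with spectrum $\{0,1\}$, the state $\rho$ is diagonal in the $b$-Fock basis $\ket{\bn_b}$ of \cref{l:representation_theory}, with eigenvalues $Z^{-1}\prod_k \e^{-\beta(\lambda_k-\mu) n_k}$. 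This factorizes: under $\rho$, the occupation numbers $n_k = b_k^\adjoint b_k$ are independent Bernoulli with parameter $p_k = \sigma(-\beta(\lambda_k-\mu))$, and $Z = \prod_k (1 + \e^{-\beta(\lambda_k-\mu)})$.

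Second, since all $N_i = c_i^\adjoint c_i$ commute pairwise, \eqref{e:vector_born} defines a joint Boolean random vector, and $\mathbb{P}(S \subseteq Y) = \tr\bigl[\rho \prod_{i \in S} N_i\bigr]$. After re-ordering with the CAR, $\prod_{i \in S} N_i = c_{i_1}^\adjoint \cdots c_{i_k}^\adjoint c_{i_k} \cdots c_{i_1}$ for any ordering $i_1,\dots,i_k$ of $S$. This is precisely the kind of normal-ordered product to which fermionic Wick's theorem applies, because $\rho$ is a Gaussian (quasi-free) state, i.e., the exponential of a quadratic expression in the creation/annihilation operators. I would invoke Wick's theorem to obtain
\begin{equation*}
\mathbb{P}(S \subseteq Y) = \det\bigl[\langle c_i^\adjoint c_j\rangle_\rho\bigr]_{i,j \in S}.
\end{equation*}

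Third, I would compute the two-point correlator. Inverting \eqref{e:new_annihilation_operators}, $c_j = \sum_k \bU_{jk} b_k$, so $c_i^\adjoint c_j = \sum_{k,\ell} \overline{\bU_{ik}} \bU_{j\ell}\, b_k^\adjoint b_\ell$. Since $\rho$ is diagonal in $\ket{\bn_b}$, only $k=\ell$ contributes, and $\langle b_k^\adjoint b_k \rangle_\rho = p_k$. Hence
\begin{equation*}
\langle c_i^\adjoint c_j\rangle_\rho = \sum_k \bU_{jk}\, p_k\, \overline{\bU_{ik}} = \bigl(\bU \Diag(p_k)\, \bU^\adjoint\bigr)_{j,i}.
\end{equation*}
This is the transpose of the announced kernel. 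To finish, I would invoke the remark made just after \cref{th:macchi_soshnikov} that a Hermitian kernel and its entrywise conjugate define the same DPP law; since the matrix $\bU \Diag(p_k) \bU^\adjoint$ is Hermitian, its transpose equals its complex conjugate, so the same determinantal statistics are obtained with $\bK = \bU \Diag(p_k) \bU^\adjoint$.

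The main obstacle is the clean invocation of Wick's theorem for the quasi-free state \eqref{e:gaussian}: one must be explicit that $\rho$ is the exponential of a quadratic form in fermionic operators so that all higher correlators factor as determinants of two-point functions. Everything else is bookkeeping with the CAR and the change of basis to the $b$-modes.
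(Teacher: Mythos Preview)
Your proposal is correct and follows essentially the same route as the paper: invoke Wick's theorem for the Gaussian state $\rho$ to reduce the inclusion probabilities to a determinant of two-point functions, compute $\langle c_i^\adjoint c_j\rangle_\rho$ by passing to the diagonal $b$-modes, and then resolve the transpose/conjugate discrepancy via the remark after \cref{th:macchi_soshnikov} that $\bK$ and $\overline{\bK}$ define the same DPP. The only cosmetic difference is that the paper computes the trace over the $b$-Fock basis and the partition function $Z$ explicitly, while you summarize this as ``independent Bernoullis with parameter $p_k$''; both arrive at the same $\overline{\bU}\,\Diag(p_k)\,\overline{\bU}^\adjoint$ before conjugating.
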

\begin{proof}
    All number operators $N_i = c_i^\adjoint c_i$ commute pairwisely and have spectrum in $\{0,1\}$; see~\cref{l:representation_theory}.
    Consequently, their joint measurement indeed describes a random binary vector $X=(X_{N_i, \rho})$. 
    By Born's rule \eqref{e:vector_born}, the correlation function of the point process corresponding to the indices of the $1$s in $X$ are given by
    $$
        \mathbb{P} \left ( \{i_1, \dots, i_k\}\subseteq X \right ) = \tr \left(\rho N_{i_1}\dots N_{i_k}\right) \triangleq \expval{N_{i_1}\dots N_{i_k}}_\rho.
    $$
    Because of the anti-commutation relations \eqref{e:CAR}, an explicit computation known as Wick's theorem implies that for any $i_1,\dots,i_k\in [N]$,
    \begin{equation}
        \label{e:wick}
        \expval{N_{i_1}\dots N_{i_k}}_\rho = \det \left(\expval{c_{i_m}^\adjoint c_{i_n}}_\rho\right)_{m,n}.
    \end{equation}
    Wick's theorem is a standard result in quantum physics; see e.g.\ \citep[Section 3.5]{BFBDHRRSW22Sub} for a rewriting with our notations of one of the canonical references \citep{CTDL77}.

    Now, Equation \eqref{e:wick} implies that the point process $X$, consisting of simultaneously measuring the occupation of all qubits using $(N_i)$, is determinantal with correlation kernel
    \begin{align}
        \mathbf{K}_{ij} &= \expval{c_{i}^\adjoint c_{j}}_\rho.
        \label{e:kernel_intermediate}
    \end{align}
    In order to provide an explicit computation of $\mathbf{K}_{ij}$, we use~\cref{l:representation_theory} with the operators $b_1, \dots, b_N$, to obtain a basis $(\ket{\bn}) = (\ket{\bn}_b)$ of eigenvectors of all operators of the form $b_i^\adjoint b_i$, $i=1, \dots, N$.
    We now proceed to computing the trace in \eqref{e:kernel_intermediate} by summing over that basis.
    We write
    \begin{align*}
        \mathbf{K}_{ij} &= \tr \left[\rho c_{i}^\adjoint c_{j}\right] \\
        &= \tr \left[\rho c_{i}^\adjoint c_{j} \sum_{\bn} \ketbra{\bn}\right]\\
        & = Z^{-1} \sum_{\bn} \expval{\e^{-\beta \sum_{p} (\lambda_p-\mu) b_p^\adjoint b_p} c_i^\adjoint c_j}{\bn}.
    \end{align*}
    Now note that $c_j = \sum_{\ell=1}^N \bu_{j\ell}b_\ell$, so that $c_i^\adjoint c_j = \sum_{k,\ell=1}^N \bu^*_{ik} \bu_{j\ell}  b_k^\adjoint b_\ell$, and
    \begin{align*}
        \mathbf{K}_{ij} &= Z^{-1}\sum_{k,\ell=1}^N \bu^*_{ik} \bu_{j\ell}  \sum_{\bn} \expval{\e^{-\beta \sum_{p} (\lambda_p-\mu) b_p^\adjoint b_p} b_k^\adjoint b_\ell}{\bn}.
    \end{align*}
    Because of \eqref{e:zero_terms} and the orthonormality of the basis $(\ket{\bn})$, for $k\neq \ell$,
    $$ 
    \expval{\e^{-\beta \sum_{p} (\lambda_p-\mu) b_p^\adjoint b_p} b_k^\adjoint b_\ell}{\bn}=0.
    $$ 
    Therefore,
    \begin{align}
        \mathbf{K}_{ij} &= Z^{-1}\sum_{k=1}^N \bu^*_{ik} \bu_{jk}  \sum_{\bn} \expval{\e^{-\beta \sum_{p} (\lambda_p-\mu) b_p^\adjoint b_p} b_k^\adjoint b_k}{\bn} \label{e:kernel}\\
        &= Z^{-1}\sum_{k=1}^N \bu^*_{ik} \bu_{jk}  \sum_{\bn} n_k \e^{-\beta \sum_p (\lambda_p-\mu)n_p}. \label{e:kernel_new_intermediate}
    \end{align}
    Now, we pause and compute the normalization constant $Z$ of $\rho$.
    Starting from $\tr\rho=1$, we write
    \begin{align*} 
        Z = \tr \e^{-\beta \sum_{p} (\lambda_p-\mu) b_p^\adjoint b_p}
        = \sum_{\bn} \expval{\e^{-\beta \sum_{p} (\lambda_p-\mu) b_p^\adjoint b_p}}{\bn}
        = \sum_{\bn} \e^{-\beta \sum_p (\lambda_p-\mu)n_p}.
    \end{align*}
    Rewriting the sum as a product, we obtain
    \begin{align}
        Z &= \prod_{p=1}^N \sum_{n_p \in \{0,1\}} \e^{-\beta (\lambda_p-\mu)n_p} = \prod_{p=1}^N (1+ \e^{-\beta (\lambda_p-\mu)}).
        \label{e:constant_intermediate}
    \end{align}
    Now, explicitly writing the dependence of $Z=Z(\blambda)$ to $\blambda = (\lambda_1, \dots, \lambda_N)$, Equations~\eqref{e:kernel_new_intermediate} and \eqref{e:constant_intermediate} together yield 
    \begin{align}
        \mathbf{K}_{ij} &= \sum_{k=1}^N \bu^*_{ik} \bu_{jk}  \frac{\sum_{\bn} n_k \e^{-\beta \sum_p (\lambda_p-\mu)n_p}} {\sum_{\bn} \e^{-\beta \sum_p (\lambda_p-\mu)n_p}}\\
        &= \sum_{k=1}^N \bu^*_{ik} \bu_{jk} \left[-\frac{1}{\beta}\frac{\d}{\d\lambda_k} \log Z(\blambda)\right]\\
        &= \sum_{k=1}^N \bu^*_{ik} \bu_{jk} \frac{\e^{-\beta (\lambda_k-\mu)}}{1 + \e^{-\beta (\lambda_k-\mu)}},
        \label{e:kernel_final}
    \end{align}
    for any $1\leq i,j \leq N$.
    \corr{Thus, we find $\bK = \overline{\bU}\, \Diag\left( \frac{\e^{-\beta (\lambda_k-\mu)}}{1 + \e^{-\beta (\lambda_k-\mu)}}\right)\, \overline{\mathbf U}^\adjoint$. 
    Since  $\overline{\bK}$ and $\bK$ define the same DPP, we have the desired result.}
\end{proof}

\begin{corollary}[Hamiltonian by taking the logit]\label{c:sampling_a_DPP}
    Let $\bK = \mathbf{V} \mathbf{D} \mathbf{V}^\adjoint$ be a Hermitian DPP kernel with $0\prec \mathbf{D}\prec \bI$. 
    Then Proposition~\ref{prop:DPP_from_rho} yields a DPP with kernel $\mathbf{K}$ provided we choose $\mathbf{V} = \bU$ and the eigenvalues $\blambda$ so that
    $$
        \beta(\lambda_i-\mu) = \log \frac{1-d_i}{d_i}.
    $$
    Furthermore, assuming $d_1 \geq \dots \geq d_N \geq 0$, and $d_{r} < \mu < d_{r+1}$, $\bK$ converges to the projection kernel onto the first $r$ columns of $\mathbf{V}$. 
\end{corollary}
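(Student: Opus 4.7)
The plan is to prove the two claims separately, both reducing to elementary algebraic manipulations of the sigmoid that appears in \cref{prop:DPP_from_rho}.

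For the first claim, \cref{prop:DPP_from_rho} yields a DPP with correlation kernel $\bU\,\Diag(\sigma(\beta(\mu-\lambda_k)))\,\bU^\adjoint$. To match a prescribed spectral decomposition $\bK = \mathbf{V}\mathbf{D}\mathbf{V}^\adjoint$ with $0\prec\mathbf{D}\prec\bI$, I would set $\bU = \mathbf{V}$ and then require the diagonals to coincide, i.e., $\sigma(\beta(\mu-\lambda_i)) = d_i$ for every $i$. Since $0<d_i<1$, the sigmoid is invertible and the logit gives
$$
\beta(\lambda_i - \mu) = \log\frac{1-d_i}{d_i},
$$
which is precisely the formula in the statement. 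This step is pure bookkeeping and carries no real difficulty.

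For the second claim, my reading of the ordering convention is that $d_1\geq\dots\geq d_N$ on the kernel side corresponds to $\lambda_1\leq\dots\leq\lambda_N$ on the Hamiltonian side (because $d \mapsto -\log((1-d)/d)/\beta + \mu$ is strictly decreasing), so that the stated separation condition should be interpreted on the $\lambda_i$'s as $\lambda_r < \mu < \lambda_{r+1}$. I would then take the limit $\beta \to \infty$ with $\mu$ fixed in this gap. For $i\leq r$, $\beta(\mu-\lambda_i)\to+\infty$, hence $\sigma(\beta(\mu-\lambda_i))\to 1$; for $i>r$, $\beta(\mu-\lambda_i)\to-\infty$, hence the sigmoid tends to $0$. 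Passing this pointwise convergence through the finite sum of rank-one terms yields
$$
\bK \longrightarrow \sum_{i=1}^{r} \mathbf{v}_i \mathbf{v}_i^\adjoint,
$$
which is the orthogonal projector onto the span of the first $r$ columns of $\mathbf{V}$.

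There is no substantial technical obstacle here; the one point that needs care is lining up the two opposite orderings (decreasing $d_i$ versus increasing $\lambda_i$) so that ``first $r$ columns of $\mathbf{V}$'' indeed corresponds to the indices kept in the $\beta\to\infty$ limit. Everything else follows from monotonicity and continuity of the sigmoid.
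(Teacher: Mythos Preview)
Your proposal is correct and follows essentially the same approach as the paper: invert the sigmoid to obtain the logit formula for the first claim, and observe that each eigenvalue of $\bK$ tends to $0$ or $1$ as $\beta\to\infty$ for the second. The paper phrases the convergence via the Frobenius norm of $\bK-\mathbf{P}$, but this is the same argument in finite dimensions. You also correctly note that the separation condition must be read as $\lambda_r<\mu<\lambda_{r+1}$ (consistent with the discussion immediately following the corollary), since the stated inequality $d_r<\mu<d_{r+1}$ is incompatible with the ordering $d_1\geq\dots\geq d_N$.
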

\begin{proof}
    The first part is a consequence of \eqref{e:kernel}. 
    The second statement is maybe easiest to see in terms of Frobenius norm, to wit $\Vert \mathbf{A}\Vert^2 = \sum_{i=1}^N \sigma_i^2(\mathbf{A})$, where $\sigma_i(\mathbf{A})$ are the singular values of $\mathbf{A}$.
    If $\mathbf{P}$ denotes the projector into the first $r$ columns of $\mathbf{V}$, then the absolute values of the eigenvalues of the Hermitian matrix $\mathbf{K}-\mathbf{P}$ are its singular values, and all of them converge to $0$.
\end{proof}

\subsection{The case of projection DPPs}\label{s:projDPP}

With the notation of \cref{s:L-ensemble}, we note that $(\ket{\bn}) = (\ket{\bn}_b)$ is a basis of eigenvectors of $\rho$, and that
$$
\rho\ket{\bn} = \frac{\e^{-\beta \sum_p (\lambda_p - \mu) n_p}}{\sum_{\bn'} \e^{-\beta \sum_p (\lambda_p - \mu)n'_p}} \ket{\bn}.
$$ 
In particular, the only eigenvalue that does not vanish when $\beta\rightarrow \infty$ is that of $\ket{1,\dots, 1, 0, \dots, 0}$, where the $1$s occupy the first $r$ components, and $r\in\{1, \dots, N\}$ is such that with $\lambda_{r} < \mu < \lambda_{r+1}$. 
Indeed, the ratio of the eigenvalue of $\ket{1,\dots, 1, 0, \dots, 0}$ with that of any other eigenvector diverges to $+\infty$, while all eigenvalues remain in $[0,1]$ and the trace is fixed to $1$.
Thus, denoting $\ket{\psi} = \ket{1,\dots, 1, 0, \dots, 0}$, 
$$
\rho \rightarrow \ketbra{\psi}
$$
in Frobenius norm as $\beta\rightarrow \infty$. 
In particular, 
$$
\tr N_{i_1} \dots N_{i_k} \rho \rightarrow \tr N_{i_1} \dots N_{i_k} \ketbra{\psi}
$$
as $\beta\rightarrow \infty$. 
Combined with the second statement of \cref{c:sampling_a_DPP}, we know that the correlation functions of the point process corresponding to preparing $\ketbra{\psi}$ and measuring the occupation of all qubits and recording where the $1s$ occur is the projection DPP of kernel $\mathbf{P} = \mathbf{V}_{:,[r]} \mathbf{V}^*_{:,[r]}$.
\corr{Recall that $\mathbf{V}_{:,[r]}$ is the matrix obtained by selecting the $r$ first columns of $\mathbf{V}$.}
\begin{remark}[Beyond Gaussian states]
    We have shown that the kernels of projection DPPs are obtained as limits of kernels associated with Gaussian states \cref{e:gaussian}.
    Actually, a DPP kernel can be associated with a density matrix of a quasi-free state generalizing the Gaussian state, for which Wick's theorem also holds; see e.g.\ \citep[Section 1.3]{koshida21}.
    In particular, every quasi-free state is a convex combination of pure states. 
    We give now a few details by following \citep{DFP08}.
    Let $\mathcal{C} \subseteq [N]$ and let  $\ket{\psi_\mathcal{C}} = \prod_{i\in \mathcal{C}} c_i^\adjoint \ket{\emptyset}$.
    Let $\bK$ be a Hermitian matrix with eigenvalues $(\nu_p)$ in $[0,1]$.
    The density matrix associated with $\bK$ is the quasi-free state
    \[
        \rho = \sum_{\mathcal{C}\subseteq [N]} \alpha_\mathcal{C} \ketbra{\psi_\mathcal{C}} \text{ with } \alpha_\mathcal{C} = \prod_{p\in \mathcal{C}}\nu_p \prod_{q\in [N]\setminus\mathcal{C}}(1-\nu_q). 
    \]
    \corr{Furthermore,  in \cref{a:proof_without_wick}, we establish for the interested reader the determinantal formula for $\expval{N_{i_1}\dots N_{i_k}}_\rho$ for a projective $\rho$ without resorting to Wick's theorem as in the proof of \cref{prop:DPP_from_rho}}.
\end{remark} 
\subsection{Building a PfPP from the BdG Hamiltonian} \label{s:PfPP_from_QM}
We shall see that Pfaffian point processes appear using slightly more sophisticated Hamiltonians. 
Unlike Hamiltonian \eqref{eq:quad_Hamiltonian_vanilla}, the Hermitian operator
\begin{align}
    H = \frac{1}{2}\sum_{i,j=1}^N \mathbf{M}_{ij} (c_i^\adjoint c_j - c_j c_i^\adjoint) + \frac{1}{2}\sum_{i,j=1}^N \left( \bm{\Delta}_{ij} c_i^\adjoint c_j^\adjoint + \bm{\Delta}^*_{ij} c_i c_j\right), \label{e:H_general}
\end{align}
with complex matrices $\mathbf{M} = \mathbf{M}^\adjoint$ and $\bm{\Delta}$, does not commute with the total number operator $N = \sum_i c_i^* c_i$.
Physicists say that $H$ does not ``preserve" the total number of particles.

Note that, due to the anti-commutation relation $c_i^\adjoint c_j^\adjoint = - c_j^\adjoint c_i^\adjoint$,  we can simply redefine $\bm{\Delta}$ so that $\bm{\Delta} = - \bm{\Delta}^\top$.
The Hamiltonian~\eqref{e:H_general} becomes the so-called Bogoliubov-de Gennes (BdG) Hamiltonian
\begin{align}
    H_{\mathrm{BdG}} = \frac{1}{2}\sum_{i,j=1}^N \mathbf{M}_{ij} (c_i^\adjoint c_j - c_j c_i^\adjoint) + \frac{1}{2}\sum_{i,j=1}^N \left( \bm{\Delta}_{ij} c_i^\adjoint c_j^\adjoint - \overline{\bm{\Delta}_{ij}} c_i c_j\right), \label{e:H_BdG}
\end{align} 
which is a model for superconductors; see e.g.\ \citep{SRFL08} for a modern overview.
\begin{remark}[Fermion parity]\label{rem:fermion_parity}
    \corr{The Hamiltonian $H_{\mathrm{BdG}}$ commutes with the parity operator $(-1)^{\sum_{i=1}^N c_i^* c_i}$; see e.g.\ \citep{grabsch2019pfaffian}. 
    Therefore, the eigenfunctions of $H_{\mathrm{BdG}}$ are also eigenfunctions of the parity operator.}
\end{remark}
We now investigate the point process associated to the occupation numbers of the Gaussian state
$
     Z^{-1} \exp(-\beta H_{\mathrm{BdG}}), 
$
for  $\beta >0$, as we did for \eqref{e:gaussian}.
It is convenient to stack the operators $c_1, \dots, c_N$ and their adjoints in column vectors, and write $\mathbf{c} = [c_1 \dots c_N ]^\top$ and $\mathbf{c}^\adjoint = [c_1^\adjoint \dots c_N^\adjoint]^\top$.
We can then rewrite the Hamiltonian more compactly as
\begin{align}
    H_{\mathrm{BdG}} = \frac{1}{2}\Big(\begin{smallmatrix}
        \mathbf{c}^\adjoint\\
        \mathbf{c}
    \end{smallmatrix}\Big)^\adjoint \mathbf{H}_{\mathrm{BdG}}
    \Big(\begin{smallmatrix}
        \mathbf{c}^\adjoint\\
        \mathbf{c}
    \end{smallmatrix}\Big)
     \text{ with } \mathbf{H}_{\mathrm{BdG}} = \begin{pmatrix}
        -\overline{\mathbf{M}} & -\overline{\bm{\Delta}}\\
        \bm{\Delta} & \mathbf{M}
    \end{pmatrix}.\label{eq:H_matrix}
\end{align}
By construction, the matrix $\bH_{\mathrm{BdG}}$ obeys the \emph{particle-hole symmetry}, namely 
$
     \bF\overline{\bH}_{\mathrm{BdG}}\bF = -\bH_{\mathrm{BdG}}
$
where we recall the definition of the involution
$
\bF = \big(\begin{smallmatrix}
    \bm{0} &\bI\\
    \bI & \bm{0}
\end{smallmatrix}\big)
$.
The name \emph{particle-hole symmetry} comes from the fact that 
$$
    \bF\Big(\begin{smallmatrix}
    \mathbf{c}^\adjoint\\
    \mathbf{c}
\end{smallmatrix}\Big) = \Big(\begin{smallmatrix}
    \mathbf{c}\\
    \mathbf{c}^\adjoint
\end{smallmatrix}\Big)
$$
flips the role of creation and annihilation operators.

Before going further, we introduce a group of transformations preserving \cref{e:CAR} that are used to diagonalize the Bogoliubov-de Gennes Hamiltonian; see \citep[Section 18.4.3]{Moore2014}.
\begin{definition}[orthogonal complex matrix]\label{def:orthogonal_complex}
     A complex invertible $2N\times 2N$ matrix $\mathbf{W}$ satisfying $\mathbf{W}^\top \bF \mathbf{W} = \bF$ is called an \emph{orthogonal complex matrix}.
     The group of orthogonal complex matrices is denoted by $O(\bF,\mathbb{C})$.
\end{definition}
For any orthogonal complex matrix $\mathbf{W}$ and any $N$ operators $c_1, \dots, c_N$ satisfying \cref{e:CAR}, another set of creation-annihilation operators satisfying \cref{e:CAR} is given by the so-called \emph{Bogoliubov transformation}
\begin{align}
    \Big(\begin{smallmatrix}
            \bm{b}^\adjoint\\
            \bm{b}
        \end{smallmatrix}\Big)
        = \bm{W}         \Big(\begin{smallmatrix}
            \mathbf{c}^\adjoint\\
            \mathbf{c}
        \end{smallmatrix}\Big),\label{e:Bog_transform}
\end{align}
and by further requiring that $\mathbf{W}$ is \emph{unitary},  $b_k^\adjoint$ is the adjoint of $b_k$ for all $ k \in [N]$.
Hence, in what follows, we only consider transformations $\mathbf{W}\in O(\bF,\mathbb{C})\cap U(2N)$.
%
\begin{lemma}\label{lem:unitary_Bog}
    A unitary orthogonal complex matrix is of the form 
        $   \mathbf{W}
            =
            \big(\begin{smallmatrix}
                \mathbf{U} & \mathbf{V} \\
                \overline{\mathbf{V}} & \overline{\mathbf{U}}
            \end{smallmatrix}\big),
        $
        with $\mathbf{U} \mathbf{U}^\adjoint + \mathbf{V} \mathbf{V}^\adjoint = \bI$ and $\mathbf{U} \mathbf{V}^\top + \mathbf{V} \mathbf{U}^\top = \bm{0}$.
        Furthermore, $\det \mathbf{W} \in \{-1,1\}$.
\end{lemma}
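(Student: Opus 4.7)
The plan is to unpack the two conditions defining the object (the orthogonality relation $\mathbf{W}^\top \bF \mathbf{W} = \bF$ and unitarity $\mathbf{W}^* \mathbf{W} = \bI$), translate them into block-matrix equations, and read off the stated structure.

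First, I would write $\mathbf{W}$ in block form as $\mathbf{W} = \big(\begin{smallmatrix} A & B \\ C & D \end{smallmatrix}\big)$ with $N\times N$ blocks. Combining unitarity with the orthogonality relation gives $\bF \mathbf{W} = (\mathbf{W}^\top)^{-1} \bF = \overline{\mathbf{W}} \bF$, because $(\mathbf{W}^\top)^{-1} = \overline{\mathbf{W}}$ for unitary $\mathbf{W}$. Computing both sides in block form and equating yields $C = \overline{B}$ and $D = \overline{A}$ (the other two block identities are automatic). Setting $\mathbf{U} \triangleq A$ and $\mathbf{V} \triangleq B$ produces the announced block structure.

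Next, I would impose $\mathbf{W} \mathbf{W}^* = \bI$ directly with the block form just obtained. The $(1,1)$ block gives $\mathbf{U}\mathbf{U}^* + \mathbf{V}\mathbf{V}^* = \bI$, and the $(1,2)$ block gives $\mathbf{U}\mathbf{V}^\top + \mathbf{V}\mathbf{U}^\top = \mathbf{0}$; the $(2,1)$ and $(2,2)$ blocks are the complex conjugates of these, so they bring no new information. This yields the two stated identities.

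Finally, for the determinant, I would take determinants in $\mathbf{W}^\top \bF \mathbf{W} = \bF$: this gives $\det(\mathbf{W})^2 \det(\bF) = \det(\bF)$. Since $\bF$ is a permutation matrix exchanging the two $N$-blocks, $\det(\bF) = (-1)^N \neq 0$, so $\det(\mathbf{W})^2 = 1$, hence $\det(\mathbf{W}) \in \{-1,1\}$.

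There is no real obstacle here: the whole argument is essentially bookkeeping in block form. The only mildly delicate point is making sure one uses $\mathbf{W}\mathbf{W}^* = \bI$ rather than $\mathbf{W}^* \mathbf{W} = \bI$ at the right moment, so that the two identities appear directly in $\mathbf{U},\mathbf{V}$ rather than in $\mathbf{U}^*, \mathbf{V}^*$; the two formulations are of course equivalent but the chosen one leads to the form stated in the lemma with no further rearrangement.
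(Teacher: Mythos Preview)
Your argument is correct and complete. The paper states this lemma without proof (treating it as a standard structural fact about the group $O(\bF,\mathbb{C})\cap U(2N)$), so there is no approach to compare against; your derivation via the identity $\bF\mathbf{W}=\overline{\mathbf{W}}\bF$ followed by reading off the blocks of $\mathbf{W}\mathbf{W}^*=\bI$ and taking determinants in $\mathbf{W}^\top\bF\mathbf{W}=\bF$ is exactly the expected bookkeeping.
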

Next, following \citet{JSKSB18}, we diagonalize \eqref{e:H_BdG} by a  convenient change of variables, which consists in finding a suitable Bogoliubov transformation.
The upshot is that we have the decomposition
\begin{equation}
    \mathbf{H}_{\mathrm{BdG}} = \mathbf{W}^*     \begin{pmatrix}
        \Diag(-\epsilon_k) & 0\\
        0 & \Diag(\epsilon_k)
    \end{pmatrix}
    \mathbf{W},\label{e:eigen_decomp_HBDG}
\end{equation}
as shown in \cref{lem:diag_H_BdG} below.
\begin{lemma}[Diagonalization of BdG Hamiltonian]\label{lem:diag_H_BdG}
    Let $\bm{\Omega} = \frac{1}{\sqrt{2}} \Big(\begin{smallmatrix}
        \bI & \bI\\
        \i \bI & -\i \bI
    \end{smallmatrix}\Big)$ and define $\mathbf{A} = -\i \overline{\bm{\Omega}} \Big(\begin{smallmatrix}
        \bm{\Delta} & \mathbf{M}\\
        -\overline{\mathbf{M}} & -\overline{\bm{\Delta}} 
    \end{smallmatrix}\Big) \bm{\Omega}^\adjoint$.
    The following statements hold.
    \begin{enumerate}
        \item[(i)]
        $
            H_{\mathrm{BdG}} = \frac{1}{2} \i\bm{f}^\top \mathbf{A} \bm{f}
        $
        where $\bm{f} = \bm{\Omega} \Big(\begin{smallmatrix}
            \mathbf{c}^\adjoint\\
            \mathbf{c}
        \end{smallmatrix}\Big)$ and $\mathbf{A}$ is real skew-symmetric. 
        \item[(ii)] There exists a real orthogonal matrix $\bm{R}$ and a real vector $\bm{\epsilon} = [\epsilon_1, \dots, \epsilon_N]^\top$ \corr{such that $0\leq \epsilon_1\leq  \dots\leq \epsilon_N$} and
        $
            \bm{R}\mathbf{A}\bm{R}^\top = 
            \Big(\begin{smallmatrix}
                0 & \Diag(\epsilon_k)\\
                -\Diag(\epsilon_k) & 0
            \end{smallmatrix}\Big).
        $
        \item[(iii)]
        Another set of creation-annihilation operators satisfying \cref{e:CAR} is given by 
        $
        \Big(\begin{smallmatrix}
                \bm{b}^\adjoint\\
                \bm{b}
            \end{smallmatrix}\Big)
            = \bm{W}         \Big(\begin{smallmatrix}
                \mathbf{c}^\adjoint\\
                \mathbf{c}
            \end{smallmatrix}\Big)
        $,
        where $\bm{W} = \bm{\Omega}^\adjoint \bm{R} \bm{\Omega} \in O(\bF,\mathbb{C}) \cap U(2N)$.
        \item[(iv)] We have the diagonalization
        $
            H_{\mathrm{BdG}} = \frac{1}{2}     
            \Big(\begin{smallmatrix}
                \mathbf{b}^\adjoint \\
                \mathbf{b}
            \end{smallmatrix}\Big)^\top
            \Big(\begin{smallmatrix}
                \mathbf{0} & \Diag(\epsilon_k)\\
                -\Diag(\epsilon_k) & \mathbf{0}
            \end{smallmatrix}\Big)
            \Big(\begin{smallmatrix}
                \mathbf{b}^\adjoint \\
                \mathbf{b}
            \end{smallmatrix}\Big).
        $
    \end{enumerate}
\end{lemma}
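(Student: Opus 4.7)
The plan is to treat the four items in order. Each reduces to an explicit matrix manipulation that uses the Hermiticity of $\mathbf{M}$, the anti-symmetry of $\bm{\Delta}$, and three identities involving the fixed matrix $\bm{\Omega}$: namely $\bm{\Omega}\bm{\Omega}^\adjoint = \bI$ (unitarity), $\bm{\Omega}^\top \bm{\Omega} = \bF$, and $\overline{\bm{\Omega}}\bF = \bm{\Omega}$, each of which is a direct $2\times 2$ block computation.

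For (i), I would rewrite the quadratic form in \eqref{eq:H_matrix} in the Majorana basis $\bm{f} = \bm{\Omega}\Big(\begin{smallmatrix}\mathbf{c}^\adjoint\\ \mathbf{c}\end{smallmatrix}\Big)$. Each component $f_k$ is self-adjoint, so $\bm{f}^\adjoint = \bm{f}^\top$, and combined with unitarity of $\bm{\Omega}$ this yields $H_{\mathrm{BdG}} = \tfrac{1}{2}\bm{f}^\top(\bm{\Omega}\mathbf{H}_{\mathrm{BdG}}\bm{\Omega}^\adjoint)\bm{f}$. It remains to show $\bm{\Omega}\mathbf{H}_{\mathrm{BdG}}\bm{\Omega}^\adjoint = \i\mathbf{A}$ with $\mathbf{A}$ real skew-symmetric. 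The skew-symmetry follows from Hermiticity of $\mathbf{H}_{\mathrm{BdG}}$; the realness follows from the particle-hole symmetry $\bF\overline{\mathbf{H}_{\mathrm{BdG}}}\bF = -\mathbf{H}_{\mathrm{BdG}}$, combined with $\overline{\bm{\Omega}}\bF = \bm{\Omega}$. Rewriting $\mathbf{A} = -\i\overline{\bm{\Omega}}(\bF\mathbf{H}_{\mathrm{BdG}})\bm{\Omega}^\adjoint$ and computing $\bF\mathbf{H}_{\mathrm{BdG}} = \big(\begin{smallmatrix}\bm{\Delta} & \mathbf{M}\\ -\overline{\mathbf{M}} & -\overline{\bm{\Delta}}\end{smallmatrix}\big)$ recovers the expression for $\mathbf{A}$ stated in the lemma.

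Item (ii) is the classical real normal form for skew-symmetric matrices of even size: the spectrum of the Hermitian matrix $\i\mathbf{A}$ consists of pairs $\pm\epsilon_k$ with $\epsilon_k \geq 0$, and pairing conjugate eigenvectors yields a real orthogonal change of basis that puts $\mathbf{A}$ in block-diagonal form with $2\times 2$ blocks $\big(\begin{smallmatrix}0 & \epsilon_k\\ -\epsilon_k & 0\end{smallmatrix}\big)$; a suitable permutation of basis vectors and ordering of the $\epsilon_k$ then produces the stated form of $\bm{R}\mathbf{A}\bm{R}^\top$. For (iii), unitarity of $\bm{W}$ is immediate as a product of unitaries, and the constraint $\bm{W}^\top \bF \bm{W} = \bF$ follows by direct expansion: $\bm{W}^\top\bF\bm{W} = \bm{\Omega}^\top\bm{R}^\top(\overline{\bm{\Omega}}\bF\bm{\Omega}^\adjoint)\bm{R}\bm{\Omega} = \bm{\Omega}^\top\bm{R}^\top\bm{R}\bm{\Omega} = \bm{\Omega}^\top\bm{\Omega} = \bF$, using the three identities above together with $\bm{R}^\top\bm{R} = \bI$.

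Finally, for (iv), inverting the Bogoliubov transformation gives $\bm{f} = \bm{R}^\top\bm{\Omega}\Big(\begin{smallmatrix}\bm{b}^\adjoint\\ \bm{b}\end{smallmatrix}\Big)$; plugging this into (i) and using (ii) produces an expression of the form $\tfrac{\i}{2}\Big(\begin{smallmatrix}\bm{b}^\adjoint\\ \bm{b}\end{smallmatrix}\Big)^\top \bm{\Omega}^\top \mathbf{M}_\epsilon \bm{\Omega}\Big(\begin{smallmatrix}\bm{b}^\adjoint\\ \bm{b}\end{smallmatrix}\Big)$, where $\mathbf{M}_\epsilon = \big(\begin{smallmatrix}0 & \Diag(\epsilon_k)\\ -\Diag(\epsilon_k) & 0\end{smallmatrix}\big)$. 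A brief block computation gives $\bm{\Omega}^\top \mathbf{M}_\epsilon\bm{\Omega} = -\i\,\mathbf{M}_\epsilon$, and the factor $-\i$ cancels the overall prefactor $\i$, yielding the announced diagonalized form. The main obstacle is not conceptual but notational: careful bookkeeping of complex conjugates, transposes, and block permutations is required throughout, with the only substantive ingredient being the real normal form of a skew-symmetric matrix invoked in (ii).
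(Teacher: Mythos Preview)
Your proposal is correct. Each of the four items is handled by the natural argument: item (i) is the change to Majorana variables using the identities $\bm{\Omega}\bm{\Omega}^\adjoint=\bI$, $\overline{\bm{\Omega}}\bF=\bm{\Omega}$, and $\bm{\Omega}^\top\bm{\Omega}=\bF$, together with Hermiticity and particle-hole symmetry of $\mathbf{H}_{\mathrm{BdG}}$; item (ii) is the standard real canonical form of a skew-symmetric matrix; item (iii) is a direct verification that $\bm{W}=\bm{\Omega}^\adjoint\bm{R}\bm{\Omega}$ is unitary and satisfies $\bm{W}^\top\bF\bm{W}=\bF$; and item (iv) is obtained by substitution and the block computation $\bm{\Omega}^\top\mathbf{M}_\epsilon\bm{\Omega}=-\i\,\mathbf{M}_\epsilon$.

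There is nothing to compare against in the paper itself: the paper does not give a proof of this lemma and simply writes ``We refer to \citep{JSKSB18} for a proof sketch.'' Your argument is precisely the kind of direct verification one expects for this statement, and it is in fact more explicit than what the paper offers. One small presentational remark: in (i) you deduce skew-symmetry of $\mathbf{A}$ from Hermiticity of $\mathbf{H}_{\mathrm{BdG}}$ and realness from the particle-hole symmetry, but strictly speaking Hermiticity of $\i\mathbf{A}$ only gives $\mathbf{A}^\adjoint=-\mathbf{A}$, so skew-symmetry $\mathbf{A}^\top=-\mathbf{A}$ follows \emph{after} realness is established; the logic is fine, just make the order explicit when you write it up.
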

We refer to \citep{JSKSB18} for a proof sketch.

Now, we leverage these results to compute the expectation value of bilinears under $\rho_{\mathrm{BdG}}$.
\cref{lem:bilinears_BdG} states a result analogous to~\cref{prop:DPP_from_rho}, and its proof, given in \cref{proof:bilinears_BdG}, relies on similar techniques such as Wick's theorem. 
\begin{lemma} \label{lem:bilinears_BdG}
    Let $\bm{\epsilon} = [\epsilon_1, \dots, \epsilon_N]^\top$ be the eigenvalues of $H_{\mathrm{BdG}}$ \corr{such that $0\leq \epsilon_1\leq  \dots\leq \epsilon_N$} as given by~\cref{lem:diag_H_BdG} and let 
    $
        \rho  = \frac{1}{Z} \exp( -\beta H_{\mathrm{BdG}}).
    $
    We have
    \begin{align*}
        \mathbf{S} 
        \triangleq
        \begin{pmatrix}
            (\left\langle c_i c^\adjoint_j \right\rangle_{\rho})_{i,j} &(\left\langle c_i c_j\right\rangle_{\rho})_{i,j} \\
            (\left\langle c^\adjoint_i c^\adjoint_j\right\rangle_{\rho})_{i,j} & (\left\langle c^\adjoint_i c_j\right\rangle_{\rho})_{i,j}
        \end{pmatrix} 
        = \overline{\mathbf{W}}^{\adjoint}
        \begin{pmatrix}
             \Diag(\sig(\beta \epsilon_k))& \mathbf{0}\\
            \mathbf{0} & \Diag(\sig(-\beta \epsilon_k))
        \end{pmatrix}
        \overline{\mathbf{W}},
    \end{align*}
    where $\sigma(x)$ is the sigmoid function; see Section~\ref{s:introduction}.
    Furthermore, we have that $\bS = \sigma(-\beta \overline{\mathbf{H}_{\mathrm{BdG}}})$ and $\bS$ satisfies $0\preceq \mathbf{S} \preceq \bI$, as well as $\mathbf{C}\overline{\mathbf{S}}\mathbf{C} = \bI - \mathbf{S}$.
\end{lemma}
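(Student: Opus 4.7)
The plan is to diagonalize $H_{\mathrm{BdG}}$ via the Bogoliubov transformation of \cref{lem:diag_H_BdG}(iii), compute the two-point functions in the $b$-basis where $\rho$ factorizes into independent fermionic modes, and then translate back to the $c$-basis via $\mathbf{W}$.

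First, \cref{lem:diag_H_BdG}(iv) combined with the canonical anti-commutation relations yields $H_{\mathrm{BdG}} = \sum_k \epsilon_k b_k^* b_k$ up to an additive constant, so $\rho$ factors into one-mode thermal states. A mode-by-mode computation in the spirit of the proof of \cref{prop:DPP_from_rho} then gives $\langle b_k^* b_\ell\rangle_\rho = \delta_{k\ell}\sigma(-\beta\epsilon_k)$, $\langle b_k b_\ell^*\rangle_\rho = \delta_{k\ell}\sigma(\beta\epsilon_k)$, and $\langle b_k b_\ell\rangle_\rho = \langle b_k^* b_\ell^*\rangle_\rho = 0$. Assembling these into the analogue $\mathbf{T}$ of $\mathbf{S}$ for the $b$-operators reproduces exactly the block-diagonal matrix $\mathbf{T} = \big(\begin{smallmatrix}\Diag(\sigma(\beta\epsilon_k)) & \mathbf{0}\\ \mathbf{0} & \Diag(\sigma(-\beta\epsilon_k))\end{smallmatrix}\big)$ appearing in the statement.

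To return to the $c$-basis, I would introduce the operator-valued columns $\Psi = \big(\begin{smallmatrix}\mathbf{c}^*\\ \mathbf{c}\end{smallmatrix}\big)$ and $\Phi = \big(\begin{smallmatrix}\mathbf{b}^*\\ \mathbf{b}\end{smallmatrix}\big)$, and note that $\mathbf{S} = \mathbf{C}\,\langle \Psi \Psi^\top\rangle_\rho$ and $\mathbf{T} = \mathbf{C}\,\langle \Phi \Phi^\top\rangle_\rho$, since $\mathbf{C}$ swaps the two blocks of $\Psi$ to produce $\big(\begin{smallmatrix}\mathbf{c}\\ \mathbf{c}^*\end{smallmatrix}\big)$. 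Unitarity of $\mathbf{W}$ gives $\Psi = \mathbf{W}^*\Phi$, hence $\langle\Psi\Psi^\top\rangle_\rho = \mathbf{W}^*\langle \Phi\Phi^\top\rangle_\rho \overline{\mathbf{W}} = \mathbf{W}^*\mathbf{C}\mathbf{T}\overline{\mathbf{W}}$. The key algebraic identity to apply next is $\mathbf{C}\mathbf{W}^*\mathbf{C} = \overline{\mathbf{W}}^*$, which follows from the block form of $\mathbf{W}$ in \cref{lem:unitary_Bog} (or, equivalently, by conjugate-transposing $\mathbf{C}\mathbf{W}\mathbf{C} = \overline{\mathbf{W}}$, itself a restatement of $\mathbf{W}\in O(\mathbf{C},\mathbb{C})\cap U(2N)$). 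Substituting gives the desired $\mathbf{S} = \overline{\mathbf{W}}^*\mathbf{T}\overline{\mathbf{W}}$.

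The three remaining claims are then short. For $\mathbf{S} = \sigma(-\beta\overline{\mathbf{H}_{\mathrm{BdG}}})$, I would take the entrywise conjugate of the spectral decomposition \eqref{e:eigen_decomp_HBDG}, use $\overline{\mathbf{W}^*} = \overline{\mathbf{W}}^*$ (both equal $\mathbf{W}^\top$) and the reality of the $\epsilon_k$, and apply functional calculus with $\sigma$. The spectral inequality $0\preceq \mathbf{S}\preceq \mathbf{I}$ then follows because $\mathbf{S}$ is unitarily similar to $\mathbf{T}$, whose diagonal entries lie in $(0,1)$. For $\mathbf{C}\overline{\mathbf{S}}\mathbf{C} = \mathbf{I}-\mathbf{S}$, I would combine $\mathbf{C}\mathbf{W}\mathbf{C} = \overline{\mathbf{W}}$ with $\mathbf{C}\mathbf{T}\mathbf{C} = \mathbf{I} - \mathbf{T}$ (a direct consequence of $\sigma(x)+\sigma(-x)=1$ applied to each diagonal block) by inserting $\mathbf{C}^2 = \mathbf{I}$ twice in $\mathbf{C}\mathbf{W}^*\mathbf{T}\mathbf{W}\mathbf{C}$, yielding $\overline{\mathbf{W}}^*(\mathbf{I}-\mathbf{T})\overline{\mathbf{W}} = \mathbf{I}-\mathbf{S}$. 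The main obstacle is not conceptual but notational: carefully tracking conjugations, transpositions and $\mathbf{C}$-swaps between the $\Psi$- and $\Phi$-vectors so as not to end up with $\mathbf{W}$ where $\overline{\mathbf{W}}$ should appear. Beyond this bookkeeping, the argument is essentially a mode-by-mode calculation paralleling \cref{prop:DPP_from_rho}.
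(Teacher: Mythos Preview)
Your proposal is correct and follows essentially the same route as the paper's proof: diagonalize $\rho$ via the Bogoliubov transformation, compute the $b$-basis two-point functions mode by mode (as in \cref{prop:DPP_from_rho}), and change back to the $c$-basis, arriving at $\mathbf{S}=\mathbf{W}^\top\mathbf{T}\,\overline{\mathbf{W}}=\overline{\mathbf{W}}^{\adjoint}\mathbf{T}\,\overline{\mathbf{W}}$. Your bookkeeping via $\Psi,\Phi$ and the identity $\mathbf{C}\mathbf{W}^{\adjoint}\mathbf{C}=\overline{\mathbf{W}}^{\adjoint}$ is a slightly more explicit version of the same change of basis the paper performs in one line, and you additionally spell out the ``Furthermore'' claims that the paper leaves implicit.
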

\corr{Note that the process $\pfPP(\mathbb{K}_{\bS})$ with the $\bS$ matrix given in \cref{lem:bilinears_BdG} has the same law as $\pfPP(\mathbb{K}_{\overline{\bS}})$ in the light of the behaviour of this type of Pfaffian kernels under complex conjugation; see \cref{e:pf_kernel_conjugate}.}

For convenience, we introduce the following notations, for $1\leq i,j\leq N$,
\[
        \begin{pmatrix}
            \left\langle c_i c^\adjoint_j\right\rangle_{\rho} & \left\langle c_i c_j\right\rangle_{\rho} \\
            \left\langle c^\adjoint_i c^\adjoint_j\right\rangle_{\rho}& \left\langle c^\adjoint_i c_j\right\rangle_{\rho}
        \end{pmatrix} 
        \triangleq
        \begin{pmatrix}
            \delta_{ij}-\mathbf{K}^T_{ij} &-\overline{\mathbf{P}_{ij}} \\
            \mathbf{P}_{ij} &\mathbf{K}_{ij} 
        \end{pmatrix},
\]
where  $\mathbf{0} \preceq \mathbf{K}\preceq \bI$ is Hermitian and 
$\mathbf{P}$ is skew-symmetric.
By definition, we have 
\[  
    \mathbf{S}
    = 
    \begin{pmatrix}
         \bI -\overline{\mathbf{K}}& \mathbf{P}^\adjoint\\
        \mathbf{P} & \mathbf{K}
    \end{pmatrix}.
\]
The particle-hole transformation amounts to replacing in $\mathbf{S}$ each $c_i$ by $c_i^\adjoint$ and vice-versa.
As a consequence of \eqref{e:CAR}, $\mathbf{S}$ satisfies $\mathbf{C}\overline{\mathbf{S}}\mathbf{C} = \bI - \mathbf{S}$; see \cref{s:PfPP_basics}.

\Cref{prop:Pfaffian_PP} can be seen as a generalization to mixed states of the analysis of \citet{TerDiVi02}, and restates the results of \citet{koshida21}.

\begin{proposition}\label{prop:Pfaffian_PP}
    Let $N_i = c_i^\adjoint c_i$ for $1\leq i\leq N$ and let $\rho  = Z^{-1} \exp( -\beta H_{\mathrm{BdG}})$.
    For any $i_1,\dots,i_k\in [N]$, we have
    \begin{equation}\label{eq:correlation_Pfaffian}
        \expval{N_{i_1}\dots N_{i_k}}_{\rho} = \pf \left(\pfK(i_m,i_n)\right)_{1\leq m,n \leq k},
    \end{equation}
    where each block of the above $2k \times 2k$ matrix is given in terms of the $2\times 2$-matrix-valued kernel
    \begin{align}
        \pfK(i,j) = 
        \begin{pmatrix}
            \mathbf{P}_{i j} & \mathbf{K}_{i j}\\
            - \mathbf{K}_{j i} & -\overline{\mathbf{P}_{i j}}
        \end{pmatrix}. \label{eq:Pfaffian_kernel}
    \end{align}
    The latter satisfies $\pfK(i,j)^\top = -\pfK(j,i)$ for $1\leq i,j \leq N$.
\end{proposition}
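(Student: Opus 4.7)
The strategy parallels the proof of~\cref{prop:DPP_from_rho}, but replaces the determinantal Wick identity (valid for particle-number preserving states) by its Pfaffian counterpart, which applies to general Gaussian (quasi-free) fermionic states. Since $H_{\mathrm{BdG}}$ is quadratic in the creation/annihilation operators, $\rho = Z^{-1}\exp(-\beta H_{\mathrm{BdG}})$ is quasi-free (see e.g.~\citep[Section 1.3]{koshida21}), and the generalized Wick theorem asserts that for any operators $X_1,\dots,X_{2k}$ that are linear combinations of $c_j$ and $c_j^\adjoint$,
\[
    \expval{X_1 X_2 \cdots X_{2k}}_\rho = \pf(\mathbf{C}),
\]
where $\mathbf{C}$ is the $2k\times 2k$ skew-symmetric matrix with $\mathbf{C}_{mn} = \expval{X_m X_n}_\rho$ for $m<n$, $\mathbf{C}_{nm} = -\mathbf{C}_{mn}$, and vanishing diagonal. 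This is the Pfaffian analogue of the identity \eqref{e:wick} used previously.

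Assume first that $i_1,\dots,i_k$ are pairwise distinct, as is the case when computing an inclusion probability for a set. Apply the theorem with the choice $X_{2l-1} = c_{i_l}^\adjoint$ and $X_{2l} = c_{i_l}$, so that $X_1 X_2 \cdots X_{2k} = N_{i_1}\cdots N_{i_k}$. Viewing $\mathbf{C}$ as a $k\times k$ array of $2\times 2$ blocks, the block $(l,l')$ with $l<l'$ reads, thanks to the two-point function table preceding the proposition,
\[
    \mathbf{C}^{(l,l')} = \begin{pmatrix} \expval{c_{i_l}^\adjoint c_{i_{l'}}^\adjoint}_\rho & \expval{c_{i_l}^\adjoint c_{i_{l'}}}_\rho \\ \expval{c_{i_l} c_{i_{l'}}^\adjoint}_\rho & \expval{c_{i_l} c_{i_{l'}}}_\rho \end{pmatrix} = \begin{pmatrix} \mathbf{P}_{i_l i_{l'}} & \mathbf{K}_{i_l i_{l'}} \\ -\mathbf{K}_{i_{l'} i_l} & -\overline{\mathbf{P}_{i_l i_{l'}}} \end{pmatrix} = \pfK(i_l,i_{l'}),
\]
where the Kronecker term drops because $i_l\neq i_{l'}$. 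For the diagonal blocks $l=l'$, antisymmetry of $\mathbf{C}$ yields $\left(\begin{smallmatrix} 0 & \mathbf{K}_{i_l i_l} \\ -\mathbf{K}_{i_l i_l} & 0 \end{smallmatrix}\right)$, which coincides with $\pfK(i_l,i_l)$ because $\mathbf{P}_{ii}=0$ by skew-symmetry of $\mathbf{P}$. The sub-diagonal blocks ($l>l'$) are then determined by $\mathbf{C}^{(l,l')} = -(\mathbf{C}^{(l',l)})^\top = -\pfK(i_{l'},i_l)^\top$, which equals $\pfK(i_l,i_{l'})$ by the antisymmetry property of $\pfK$ stated in the proposition.

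To close the loop, this remaining antisymmetry $\pfK(i,j)^\top = -\pfK(j,i)$ is a one-line direct verification from the definition \eqref{eq:Pfaffian_kernel}, using $\mathbf{P}_{ij}=-\mathbf{P}_{ji}$ (hence also $\overline{\mathbf{P}_{ij}} = -\overline{\mathbf{P}_{ji}}$). Once this is established, the $k\times k$ block matrix $(\pfK(i_m,i_n))_{m,n}$ is exactly $\mathbf{C}$, so the Pfaffian identity gives the claimed correlation formula. The only nontrivial ingredient in the whole argument is the Pfaffian Wick theorem for quasi-free states without particle number conservation, which is standard but must be invoked with care regarding signs; everything else is a mechanical unpacking of the $2\times 2$ block structure.
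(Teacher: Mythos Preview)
Your proposal is correct and follows essentially the same approach as the paper: both invoke the Pfaffian Wick theorem for the Gaussian state $\rho$, take $X_{2\ell-1}=c_{i_\ell}^\adjoint$ and $X_{2\ell}=c_{i_\ell}$, and identify the resulting $2\times 2$ blocks of the skew-symmetric contraction matrix with $\pfK(i_\ell,i_{\ell'})$ using the two-point table and the skew-symmetry of $\mathbf{P}$. Your write-up is in fact slightly cleaner on the $(2,1)$ entry of the off-diagonal blocks, which you compute directly as $\expval{c_{i_l}c_{i_{l'}}^\adjoint}_\rho=-\mathbf{K}_{i_{l'}i_l}$ (for $i_l\neq i_{l'}$), whereas the paper fills it in via the skew-symmetry requirement; the outcome is identical.
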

The proof of this result, given in \cref{proof:Pfaffian_PP}, is also based on Wick's theorem.

More can also be said about sample parity.
\begin{proposition}[Sample parity \corr{and Majorana quadratic form}]\label{prop:Pfaffian_sample_parity}
    In the setting of \cref{lem:bilinears_BdG}, for $Y \sim \pfPP(\mathbb{K})$, \corr{we have that}
    \begin{equation}
        \E_Y (-1)^{|Y|} = \det(\mathbf{W})\prod_{k = 1}^N \tanh(\beta \epsilon_k/2).\label{e:expected_parity_tanh}
    \end{equation}
    Assuming that $\epsilon_k >0$ for all $k$, an alternative expression for the expectation of the parity is obtained by using the identity
    $
        \det \mathbf{W} = \sign \pf(\mathbf{A}_{M}).
    $
    Here, the skew-symmetric matrix $\mathbf{A}_{M}$ is the quadratic form of the Majorana representation of the Hamiltonian, namely
    \[
    H_{\mathrm{BdG}} = \frac{\i}{2} \bm{\gamma}^\top \mathbf{A}_{M} \bm{\gamma},
    \]
    where $\gamma_{2k-1} = \frac{1}{\sqrt{2}}(c_k^*+c_k)$ and $\gamma_{2k} = \frac{\i}{\sqrt{2}}(c^*_k - c_k)$ for $k\in [N]$.
\end{proposition}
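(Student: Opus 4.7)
The approach combines \cref{lem:PfPP_parity}, which expresses the expected parity as the single Pfaffian $\E_Y(-1)^{|Y|} = \pf(\mathbb{J} - 2\pfK)$, with the spectral decomposition of $\bS$ furnished by \cref{lem:bilinears_BdG}. I first assemble the $2N \times 2N$ matrix $\mathbb{J}-2\pfK$ blockwise from \eqref{eq:K_pfaffian_kernel} and observe that, up to the perfect-shuffle permutation $P$ that converts the interleaved $2\times 2$-block indexing of $\pfK$ into the $N\times N$-block indexing of $\bS$, one has $\mathbb{J} - 2\pfK = P^\top \bF (\bI - 2\bS) P$. Skew-symmetry of $\bF(\bI-2\bS)$ follows from $\bS^{\adjoint}=\bS$ together with the particle-hole symmetry $\bF\overline{\bS}\bF = \bI-\bS$, so $\pf(\mathbb{J}-2\pfK) = \det(P)\,\pf(\bF(\bI-2\bS))$ with $\det(P) = (-1)^{N(N-1)/2}$.

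Next, I plug in $\bS = \overline{\mathbf{W}}^{\adjoint} \bLambda \overline{\mathbf{W}}$ with $\bLambda = \Diag(\sig(\beta\epsilon_k),\sig(-\beta\epsilon_k))$. The identity $1-2\sig(x) = -\tanh(x/2)$ yields $\bI-2\bS = \overline{\mathbf{W}}^{\adjoint} \Diag(-T,T)\overline{\mathbf{W}}$ with $T = \Diag(\tanh(\beta\epsilon_k/2))$. A short computation from $\mathbf{W}^\top \bF \mathbf{W} = \bF$ combined with unitarity gives $\bF \overline{\mathbf{W}}^{\adjoint} = \overline{\mathbf{W}}^\top \bF$, and I obtain $\bF(\bI - 2\bS) = \overline{\mathbf{W}}^\top \left(\begin{smallmatrix}\mathbf{0} & T\\ -T & \mathbf{0}\end{smallmatrix}\right) \overline{\mathbf{W}}$. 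This is a congruence of the form $B^\top A B$ with $A$ skew-symmetric, so $\pf(B^\top AB) = \det(B)\pf(A)$. Using the block Pfaffian $\pf\left(\begin{smallmatrix}\mathbf{0} & T\\ -T & \mathbf{0}\end{smallmatrix}\right) = (-1)^{N(N-1)/2}\det T$ and $\det\overline{\mathbf{W}} = \det\mathbf{W} \in\{\pm 1\}$ from \cref{lem:unitary_Bog}, I find $\pf(\bF(\bI-2\bS)) = (-1)^{N(N-1)/2}\det(\mathbf{W})\prod_k\tanh(\beta\epsilon_k/2)$. Multiplying by $\det(P)$ from the previous step cancels the two $(-1)^{N(N-1)/2}$ factors and gives \eqref{e:expected_parity_tanh}.

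For the alternative Majorana expression, I use the factorisation $\mathbf{W} = \bm{\Omega}^{\adjoint}\bm{R}\bm{\Omega}$ from \cref{lem:diag_H_BdG}(iii). Since $\bm{\Omega}$ is unitary, $|\det \bm{\Omega}|^2 = 1$ and $\det\mathbf{W} = \det\bm{R}$. The matrix $\mathbf{A}_M$ (in the $\bm{\gamma}$-ordering of the proposition) and the matrix $\mathbf{A}$ of \cref{lem:diag_H_BdG}(i) (in the $\bm{f}$-ordering) are related by the same interleaved-to-block permutation $Q$ of sign $(-1)^{N(N-1)/2}$, so $\mathbf{A}_M = Q^\top \mathbf{A} Q$. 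Combining with \cref{lem:diag_H_BdG}(ii), $\bm{R}\mathbf{A}\bm{R}^\top = \left(\begin{smallmatrix}\mathbf{0} & D\\-D & \mathbf{0}\end{smallmatrix}\right)$ with $D = \Diag(\epsilon_k)$, and the same block-Pfaffian formula, one gets $\pf(\mathbf{A}_M) = \det(\bm{R})\prod_k\epsilon_k$, so $\sign\pf(\mathbf{A}_M) = \det\bm{R} = \det\mathbf{W}$ whenever all $\epsilon_k > 0$. The main obstacle throughout is the bookkeeping of sign factors: two perfect-shuffle signs and two block-Pfaffian signs $(-1)^{N(N-1)/2}$ arise, and the clean forms of \eqref{e:expected_parity_tanh} and of the Majorana identity emerge only from their pairwise cancellation.
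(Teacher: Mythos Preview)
Your proof is correct and follows essentially the same approach as the paper's: both start from \cref{lem:PfPP_parity}, apply the perfect-shuffle permutation of signature $(-1)^{N(N-1)/2}$, use the congruence identity $\pf(B^\top A B)=\det(B)\pf(A)$ together with the block-Pfaffian formula, and handle the Majorana part via $\mathbf{W}=\bm{\Omega}^{\adjoint}\bm{R}\bm{\Omega}$ and \cref{lem:diag_H_BdG}(ii). The only difference is cosmetic: you work directly with the already skew-symmetric matrix $\bF(\bI-2\bS)$, whereas the paper first passes to $\bF\bS$ (pulling out a factor $(-2)^N$ and adding the symmetric matrix $\tfrac12\bF$) and only then skew-symmetrises, producing a compensating $(-1/2)^N$; your route avoids this detour but is otherwise the same computation.
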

The proof of \cref{prop:Pfaffian_sample_parity} given in \cref{proof:Pfaffian_sample_parity} relies on the expected parity formula of \cref{lem:PfPP_parity}.
Incidentally, note that this result implies Kitaev's formula \citep[Eq.\ (19)]{Kitaev01} for the parity of a one-dimensional chain of fermions in a limit case that we discuss in \cref{rem:projective_S_for_PfPP}\corr{; see also \citep{grabsch2019pfaffian} for a detailed study of Pfaffian formulae for fermion parity}.
\section{Quantum circuits to sample DPPs and PfPPs}
\label{s:quantum_circuits}
Armed with the connections between PfPPs and fermions in Section~\ref{s:fermions_to_dpps}, it remains to connect fermions and quantum circuits.
Quantum circuits are briefly introduced in Section~\ref{s:what_is_a_circuit} for self-containedness.
In Section~\ref{s:circuit_proj_DPP}, we describe the quantum circuit of \cite{WHWCNT15}, later modified by \citet{JSKSB18}, that corresponds to a projection DPP given the diagonalized form of its kernel.
In Section~\ref{s:parallel_QR}, we depart from \citep{JSKSB18} and highlight the connections of their construction with a classical parallel QR algorithm in numerical algebra \citep{SaKu78}.
Consequently, we propose to take inspiration from more recent parallel QR algorithms, such as \citep{DGHL12}, to construct circuits with different constraints on the communication between qubits. 
In particular, we recover circuits with depth the shortest depth reported in \citep{KerPra22}, but with QR-style arguments rather than sophisticated data loaders.
Moreover, and maybe more importantly for DPP sampling, while sampling from a DPP with \emph{non-diagonalized} kernel remains limited by the initial (classical) diagonalization step, we argue that if one chooses the right avatar in the available distributed QR algorithms, we can even give a hybrid pipeline of a classical parallel and a quantum algorithm to sample some projection DPPs with \emph{non-diagonalized} kernel, with a linear speedup compared to the vanilla classical DPP sampler of \citet{HKPV06}.
The covered DPPs include practically relevant cases such as the uniform spanning trees and column subsets of Examples~\ref{ex:USTs} and \ref{ex:CSS}.
In Sections~\ref{s:reduction_mixture_DPPs} and \ref{s:dilation_argument}, we give two standard arguments, respectively due to \citep{HKPV06} and \citep{lyons2003}, to reduce the treatment of (non-projection) Hermitian DPPs to projection DPPs.
This concludes our treatment of DPPs.
In Section~\ref{s:quantum_circuit_PfPPs}, we go back to connecting point processes to the work of \citet{JSKSB18}, showing how one can use their circuit corresponding to the BdG Hamiltonian to sample a Pfaffian PP.

%
\subsection{Quantum circuits}
\label{s:what_is_a_circuit}
We refer to \citep[Part II]{NiCh10} for a description of quantum circuits and the basic building blocks.
In short, in the quantum circuit model of quantum computation, one describes a computation by the initialization of a set of $N$ qubits, a sequence of unitary operators among a small set of physically-implementable operators called \emph{gates}, and a physically-implementable observable called \emph{measurement}.
Not all gates can be implemented on every quantum computer, but software development kits like Qiskit \citep{Qiskit} allow the user to define a quantum circuit using a large enough set of gates.
The latter usually include any tensor product of identity matrices and Pauli matrices \eqref{e:pauli_matrices} (the so-called $X$, $Y$, and $Z$ gates, respectively corresponding to $\sigma_x$, $\sigma_y$, and $\sigma_z$ in \eqref{e:pauli_matrices}) and a few two-qubit gates such as the ubiquitous CNOT.
Qiskit then ``transpiles" the resulting circuit into a sequence of actually-implementable gates for a given machine. 
For instance, with the current technology, not all qubits can be jointly operated on a given machine, e.g., one may not be able to apply a gate that acts on two qubits that are physically too far from each other in the actual quantum machine, or only be able to do so with a significant chance of error.
Assuming the transpiling process preserves the dimensions of the circuit, one measures the complexity of a quantum circuit by quantities such as its total number of gates and its depth, i.e., the largest number of elementary gates applied to any single qubit.

Finally, in order to judge the possibility of sampling DPPs \emph{today}, and be able to estimate what we can do in the future as hardware and software improve, it is important to keep in mind the main sources of error and their order of magnitude in current quantum hardware; see Appendix~\ref{s:errors_in_quantum_computers}. 

\subsection{The case of a projection DPP}
\label{s:circuit_proj_DPP}
Consider again $(a_j)$ to be the Jordan-Wigner operators, which satisfy \eqref{e:CAR}, and $(\ket{\bn})$ to be the corresponding Fock basis.
Let $r\in\mathbb{N}^*$, $\bQ\in\mathbb{C}^{r\times N}$ have orthonormal rows, and $b_j^\adjoint = \sum_{\ell=1}^N q_{j\ell} a_\ell^\adjoint$ for $1\leq j\leq r$. 
Let further 
\begin{equation}
    \label{e:slater_determinant}
    \ket{\psi} = b_1^\adjoint \dots b_r^\adjoint \ket{\emptyset}.
\end{equation}
From Section~\ref{s:projDPP}, we know that simultaneously measuring $N_i = a_i^*a_i$, $i=1, \dots, N$ on $\rho = \ketbra{\psi}$ samples the projection DPP with kernel  $\bK = \bQ^\adjoint \bQ$. 
Measuring $N_i$ is easy, because the Fock basis of the JW operators is the computational basis, see \cref{p:jw}, and measurement in the computational basis is a basic operation on any quantum computer. 
For the same reason, implementing $a_1^* \dots a_r^*\ket\emptyset$ simply amounts to initializing the first $r$ qubits to $\ket{1}$, and the rest to $\ket{0}$.
It remains to be able to prepare the state \eqref{e:slater_determinant}, where the $b_i$'s intervene, not the $a_i$'s.
This is done by a procedure akin to the QR algorithm by successive Givens rotations;
a standard reference on QR algorithms and matrix computations in general is \citep{GoVa12}. 

\subsubsection{QR by Givens rotations}
\label{s:unitary_U_slater}
\begin{proposition}[\citealp{WHWCNT15,KMWGACGB18,JSKSB18}]
    \label{p:jiang}
    There is a unitary operator $\mathcal{U}(\bQ)$ on $\mathbb{H}$ such that 
    $$
        b_1^\adjoint \dots b_r^\adjoint \ket{\emptyset} = \mathcal{U}(\bQ) a_1^* \dots a_r^*\ket\emptyset,
    $$
    and $\mathcal{U}(\bQ)$ is a product of unitaries corresponding to elementary quantum gates.
\end{proposition}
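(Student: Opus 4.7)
The plan is to construct $\mathcal{U}(\bQ)$ as the second-quantization on Fock space of a single-particle unitary on $\mathbb{C}^{N}$ that extends $\bQ$, and then to decompose that extension as a product of Givens rotations, each of which lifts to an elementary two-qubit gate.

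\emph{Step 1 (existence of $\mathcal{U}(\bQ)$).} Since $\bQ$ has orthonormal rows, I can pick any $\tilde{\bQ}\in U(N)$ whose first $r$ rows coincide with those of $\bQ$, for instance by completing the rows to an orthonormal basis via Gram--Schmidt. Setting $\tilde b_j^\adjoint \triangleq \sum_{\ell=1}^N \tilde q_{j\ell}\, a_\ell^\adjoint$ for $1\leq j\leq N$, unitarity of $\tilde{\bQ}$ ensures that the $\tilde b_j$'s satisfy \eqref{e:CAR} and that the total number operator is preserved, $\sum_j \tilde b_j^\adjoint \tilde b_j = \sum_j a_j^\adjoint a_j$. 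Applying \cref{l:representation_theory} to the $\tilde b_j$'s yields a Fock basis $(\ket{\bn_{\tilde b}})$ whose vacuum $\ket{\emptyset_{\tilde b}}$ sits in the zero-particle subspace of this common number operator, which is one-dimensional and hence spanned by $\ket{\emptyset}$ up to a phase. Defining $\mathcal{U}(\bQ)$ as the unitary mapping each $\ket{\bn_a}$ onto $\ket{\bn_{\tilde b}}$, with the phase fixed so that $\mathcal{U}(\bQ)\ket{\emptyset} = \ket{\emptyset}$, we get $\mathcal{U}(\bQ) a_j^\adjoint \mathcal{U}(\bQ)^\adjoint = \tilde b_j^\adjoint$ for every $j$. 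Because only the first $r$ rows of $\tilde{\bQ}$ enter, applying this repeatedly to $a_1^\adjoint\dots a_r^\adjoint\ket{\emptyset}$ produces $\tilde b_1^\adjoint\dots \tilde b_r^\adjoint\ket{\emptyset} = b_1^\adjoint\dots b_r^\adjoint\ket{\emptyset}$.

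\emph{Step 2 (Givens decomposition).} Any unitary in $U(N)$ factors as a product of $\mathcal{O}(N^2)$ adjacent Givens rotations $G_{k,k+1}(\theta,\phi)$, each acting non-trivially on a single pair of consecutive coordinates, by QR-style elimination of sub-diagonal entries; this is a classical fact in matrix computations. Since $b_1^\adjoint\dots b_r^\adjoint\ket{\emptyset}$ depends only on the row-span of $\bQ$, a more economical variant (as in \citep{WHWCNT15,JSKSB18}) applies Givens rotations from the right to reduce $\bQ$ itself to the rectangular identity $[\bI_r\,|\,\bzero]$, requiring only $\mathcal{O}(rN)$ rotations. In either case one obtains $\tilde{\bQ} = G^{(1)} G^{(2)}\cdots G^{(M)}$ with each factor an adjacent Givens rotation.

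\emph{Step 3 (lifting to elementary gates).} It remains to check that $V\mapsto\mathcal{U}(V)$ is a (projective) group homomorphism on $U(N)$, so that $\mathcal{U}(\tilde{\bQ}) = \mathcal{U}(G^{(1)})\cdots\mathcal{U}(G^{(M)})$, and that each $\mathcal{U}(G_{k,k+1})$ is a two-qubit gate. The homomorphism property is verified on Givens generators by a direct computation of $\mathcal{U}(G)\, a_j^\adjoint\, \mathcal{U}(G)^\adjoint$ on the Fock basis and matching the result against the prescribed transformation $a_j^\adjoint \mapsto \sum_\ell G_{j\ell}\, a_\ell^\adjoint$. For locality, since $G_{k,k+1}$ only mixes modes $k$ and $k+1$, its lift is a bilinear in $\{a_k, a_k^\adjoint, a_{k+1}, a_{k+1}^\adjoint\}$; after Jordan--Wigner, the $\sigma_z$-strings to the left of site $k$ cancel within number-preserving bilinears, leaving a unitary acting only on qubits $k$ and $k+1$. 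On that two-qubit subspace, $\mathcal{U}(G_{k,k+1})$ fixes $\ket{00}$, applies a phase on $\ket{11}$, and rotates the span of $\ket{01}$ and $\ket{10}$ by the $2\times 2$ Givens block, i.e., it is the standard ``Givens'' two-qubit gate. The main obstacle in this step is the careful bookkeeping of sign and phase conventions (and, if one insists on the $\mathcal{O}(rN)$ count, of the order in which Givens rotations are applied), so that the local lifts multiply consistently into the full $\mathcal{U}(\bQ)$.
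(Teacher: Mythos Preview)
Your proposal is correct and takes essentially the same route as the paper: a Givens/QR decomposition of the single-particle unitary, lifted factor by factor to two-qubit Givens operators on $\mathbb{H}$ via Jordan--Wigner. The only organizational difference is that the paper never completes $\bQ$ to a full unitary $\tilde{\bQ}$; it right-multiplies the rectangular $\bQ$ directly by Givens rotations until it reaches $[\bLambda\mid\bzero]$ with $\bLambda$ diagonal unitary, and \emph{defines} $\mathcal{U}(\bQ)$ from the outset as the product $\mathcal{G}_{n_G}\cdots\mathcal{G}_1$ of the corresponding Givens operators, thereby bypassing your abstract-existence step and homomorphism argument.
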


\begin{proof}
    The Givens rotation with parameters $\theta,\phi$ and indices $\ell^1, \ell^2\in[N]$ is the unitary matrix
\begin{equation}
    \bG = \bG_{\ell^1,\ell^2}(\theta,\phi) = \mathbf{P}
    \begin{pmatrix}
        \bm{\gamma} & \mathbf{0}\\
        \mathbf{0} & \mathbb{I}
    \end{pmatrix}
    \mathbf{P}^{-1}
    \text{ with }
    \bm{\gamma} = 
    \begin{pmatrix} \cos\theta & \e^{-\i \phi}
        \sin\theta \\ -\sin\theta \e^{\i \phi}& \cos\theta \end{pmatrix},
    \label{e:givens}
\end{equation}
where $\mathbf{P}$ is the matrix of the permutation $(1\ \ell^1)(2\ \ell^2)$, allowing to select the vectors on which the rotation is applied.\footnote{The parametrization of Givens rotations slightly differs from \citep{JSKSB18}.}
Choosing $\theta, \phi$ relevantly, one can make sure that $\bQ\bG^*$ has a zero in position $(\ell^1, \ell^2)$, while all columns other than $(\ell_1, \ell_2)$ are left unchanged.
Iteratively choosing $(\ell^1, \ell^2)$, we can sequentially introduce zeros in $\bQ$ by multiplying it by $n_{G}$ Givens rotations $\bG_1, \dots,  \bG_{n_G}$, never changing an entry that we previously set to zero, until 
\begin{equation}
    \bQ \bG_1^* \dots \bG_{n_G}^* =  
    \begin{pmatrix}
        \bLambda \vert \mathbf{0}
    \end{pmatrix},
    \label{e:Givens_chain}
\end{equation}
where $\bLambda$ is an $r\times r$ diagonal unitary matrix, and the right block is the $r\times (N-r)$ zero matrix.

Now, to each Givens rotation $\mathbf{G} = \mathbf{G}_{\ell^1,\ell^2}(\theta,\phi)$, we associate a unitary operator $\mathcal{G} = \mathcal{G}_{\ell^1, \ell^2}(\theta, \phi)$ on $\mathbb{H}$. 
We call $\mathcal{G}$ a \emph{Givens operator}, as it realizes the Givens rotation $\mathbf{G}$ by a conjugation of creation operators, i.e.
\begin{align*}
    \mathcal{G} a_{\ell^1}^\adjoint\mathcal{G}^\adjoint &= \cos \theta a_{\ell^1}^\adjoint + \e^{-\i\phi}\sin \theta a_{\ell^2}^\adjoint,\\
    \mathcal{G} a_{\ell^2}^\adjoint\mathcal{G}^\adjoint &= -\e^{\i\phi}\sin \theta a_{\ell^1}^\adjoint +  \cos \theta a_{\ell^2}^\adjoint,\\
    \mathcal{G} a_{i}^\adjoint\mathcal{G}^\adjoint &= a_{i}^\adjoint, \quad \forall i \notin\{\ell^1, \ell^2\}.
\end{align*}
We refer to Appendix~\ref{a:more_about_Bogoliubov} for an explicit construction.
For future convenience, we note that this action can be compactly summarized if one stacks up the creation operators in a vector $\mathbf{a}^\adjoint = (a_1^\adjoint \cdots a_N^\adjoint)^\top$, and writes 
\begin{align}
    \mathcal{G} \mathbf{a}^\adjoint \mathcal{G}^\adjoint = \bG \cdot \mathbf{a}^\adjoint,\label{e:givens_operator_definition}
\end{align}
with $\cdot$ defined by matrix-vector multiplication.

Finally, consider the unitary
\begin{align}
    \mathcal{U}(\bQ) \triangleq \mathcal{G}_{n_G} \dots \mathcal{G}_1,
    \label{e:Givens_factorization_on_H}
\end{align}
where $\mathcal{G}_i$ is the Givens operator corresponding to $\bG_i$ in \eqref{e:Givens_chain}.
Then, by construction and up to a phase factor, for all $1\leq k \leq r$,
\[
    \mathcal{U}(\bQ) a_k^\adjoint \mathcal{U}(\bQ)^\adjoint
    = b_k^\adjoint, \text{ and } \mathcal{U}(\bQ) \ket{\emptyset} = \ket{\emptyset}.
\]
In particular, 
\begin{align*}
    \ket{\psi} = b_1^\adjoint \dots b_r^\adjoint \ket{\emptyset} &= 
    \mathcal{U}(\bQ) a_1^\adjoint \mathcal{U}(\bQ)^* \dots \mathcal{U}(\bQ) a_r^\adjoint\mathcal{U}(\bQ)^* \ket{\emptyset}\\
    &= 
    \mathcal{U}(\bQ) a_1^\adjoint \dots a_r^\adjoint\ket{\emptyset},
\end{align*}
again up to a phase factor, which is irrelevant in the resulting state $\ketbra{\psi}$.

%

\end{proof}

The operator $\mathcal{U}(\bQ)$ in Proposition~\ref{p:jiang} is indeed a product of elementary two-qubit gates, because any Givens operator $\mathcal{G}_{\ell^1,\ell^2}(\theta,\phi)$ can be implemented as such, as first put forward by \citep{WHWCNT15}.
We discuss gate details in Appendix~\ref{s:gates_details}.

To see how many gates are required for a given DPP kernel, we need to discuss a final degree of freedom: the Givens chain of rotations \eqref{e:givens} is not unique.
This is where constraints on which qubits can be jointly acted upon enter the picture.



\subsubsection{Parallel QR algorithms and qubit operation constraints}\label{s:parallel_QR}

The product of Givens rotations in \eqref{e:givens} can give rise to a quantum circuit of short depth if there are subproducts of rotations that can be performed on disjoint pairs of qubits. 
Independently of quantum computation, this is exactly the same kind of constraint that numerical algebraists have been studying in a long string of works on parallel QR factorization;\footnote{Note that, unlike the common QR decomposition, the matrix $R$ obtained here has more structure: it contains only one non-zero element per row.
To some extent, the decomposition we discuss is close to a singular value decomposition; see the end of \cref{s:parallel/quantum_proj_DPPs} for a short discussion.} see \citep{DGHL12} and references therein.

As a first example, after a preprocessing phase, the preparation of \eqref{e:slater_determinant} in \citep[Section III]{JSKSB18} implicitly implements a parallel QR algorithm known as Sameh-Kuck \citep{SaKu78}.
More precisely, in a preprocessing phase, \cite{JSKSB18} zero out $r(r-1)/2$ entries in the upper-right corner of $\bQ$ by pre-multiplying $\bQ$ by a product of (complex) Givens rotations.
Let us write this product of unitary matrices by $\mathbf{V}$.
This preprocessing requires at most\footnote{
    There might be zeros that appear out of collinearity, and thus do not need to be forced.
} 
$r(r-1)/2$ Givens rotations, which are implemented thanks to a classical computer.
To fix ideas, when $r = 3$ and $N = 6$, this results in a matrix of the following form
\begin{equation}
    \bV \bQ = \begin{pmatrix}
        * & * & *  & * & 0 & 0\\
        * & * & *  &* & * & 0\\
        * & * & * & * & * & *
    \end{pmatrix}.
    \label{e:jiang's_example}
\end{equation}
Note that replacing $\bQ$ by $\bV\bQ$ does not change the kernel of the resulting DPP since $\bV^\adjoint \bV = \bI$.
Now, \cite{JSKSB18} apply QR to $\bV \bQ$ as in the proof of Proposition~\ref{p:jiang}, applying Givens rotations on disjoint pairs of neighbouring columns, as they become available. 
Rather than giving a formal description of the algorithm, available in \citep{SaKu78}, we follow \cite{JSKSB18} and depict its application on Example \eqref{e:jiang's_example}.  
We use the convenient notation of \cite{JSKSB18}, i.e.\ bold characters to show the most recently actively updated entries of the matrix, and underlined characters to show entries that automatically result from the rows of $\bQ$ being orthonormal. 
The successive parallel rounds of the algorithm are then
\begin{align}
    \bV\bQ 
    & =  
    \begin{pmatrix}
        * & * & *  & * & 0 & 0\\
        * & * & *  &* & * & 0\\
        * & * & * & * & * & *
    \end{pmatrix}
    \rightarrow
    \begin{pmatrix}
        * & * & *  & \bzero & 0 & 0\\
        * & * & *  &* & * & 0\\
        * & * & * & * & * & *
    \end{pmatrix}\nonumber\\
    &\rightarrow
    \begin{pmatrix}
        * & * & \bzero  & 0 & 0 & 0\\
        * & * & *  &* & \bzero & 0\\
        * & * & * & * & * & *
    \end{pmatrix}
    \rightarrow
    \begin{pmatrix}
        \underline{\lambda_1} & \bzero & 0  & 0 & 0 & 0\\
        \underline{0} & * & *  &\bzero & 0 & 0\\
        \underline{0} & * & * & * & * & \bzero
    \end{pmatrix}\nonumber\\
    &\rightarrow
    \begin{pmatrix}
        \lambda_1 & 0 & 0  & 0 & 0 & 0\\
        0 & \underline{\lambda_2} & \bzero  & 0 & 0 & 0\\
        0 & \underline{0} & * & * & \bzero & 0
    \end{pmatrix}
    \rightarrow
    \begin{pmatrix}
        \lambda_1 & 0 & 0  & 0 & 0 & 0\\
        0 & \lambda_2 & 0  & 0 & 0 & 0\\
        0 & 0 & \underline{\lambda_3} & \bzero & 0 & 0
    \end{pmatrix}.\label{e:QR_nnb_Jiang}
\end{align}
The factors $\lambda_i$ are of unit modulus, as in the construction behind Proposition~\ref{p:jiang}.
The upshot is that only a lower triangular matrix remains whose only non-zero entries are on the diagonal, while the other entries of the lower triangle automatically vanish due to row orthonormality.

The resulting circuit, an example of which is given in Figure~\ref{f:dpp_circuit}, requires $\cO(Nr)$ gates, and has depth $\cO(N)$.
The $\mathcal{O}(r^2)$ preprocessing, which is optional and could also be performed in the quantum circuit, allows getting rid of a handful of (necessarily faulty) quantum gates at a small classical cost. 
Finally, constraining Givens operators to act on neighbouring qubits, or equivalently, Givens rotations to act on neighbouring columns, is practically relevant if the actual quantum computer at disposal favours two-qubit gates acting on neighbouring qubits.

As a second extreme example, assume that we have no constraint on which pairs of qubits can be jointly operated.
We can then perform parallel QR on $\bQ$ using only $\cO(r\log N)$ parallel rounds, simply by acting on all available disjoint pairs in a single row until all but one entry are zeroed, keeping in mind that the leftmost $r$ entries of each row will be automatically updated because of orthonormality constraints.
On an example with $N=8$ and $r=3$, this would yield
\begin{align*}
    \bQ 
    &\rightarrow 
    \begin{pmatrix}
        * & \bzero & *&  \bzero & *  & \bzero & * & \bzero\\
        * & * & * & * & * & * & * & *\\
        * & * & * & * & * & * & * & *\\
    \end{pmatrix}
    \rightarrow 
    \begin{pmatrix}
        * & 0 & \bzero &  0 & *  & 0 & \bzero & 0\\
        * & * & * & * & * & * & * & *\\
        * & * & * & * & * & * & * & *\\
    \end{pmatrix}\\
    &\rightarrow 
    \begin{pmatrix}
        \underline{\lambda_1} & 0 & 0 &  0 & \bzero  & 0 & 0 & 0\\
        \underline{0} & * & * & * & * & * & * & *\\
        \underline{0} & * & * & * & * & * & * & *\\
    \end{pmatrix}
    \rightarrow
    \begin{pmatrix}
        \underline{\lambda_1} & 0 & 0 &  0 & 0  & 0 & 0 & 0\\
        0 & * & * & \bzero & * & \bzero & * & \bzero\\
        0 & * & * & * & * & * & * & *\\
    \end{pmatrix}\\
    &\rightarrow 
    \begin{pmatrix}
        {\lambda}_1 & 0 & 0 &  0 & 0  & 0 & 0 & 0\\
        0 & * & \bzero & 0& * & 0 & \bzero & 0\\
        0 & * & * & * & * & * & * & *\\
    \end{pmatrix}
    \rightarrow
    \begin{pmatrix}
        {\lambda}_1 & 0 & 0 &  0 & 0  & 0 & 0 & 0\\
        0 & \underline{\lambda_2} & 0 & 0& \bzero & 0 & 0 & 0\\
        0 & \underline{0} & * & * & * & * & * & *\\
    \end{pmatrix}\\        
    &\rightarrow 
    \begin{pmatrix}
        {\lambda}_1 & 0 & 0 &  0 & 0  & 0 & 0 & 0\\
        0 & \lambda_2 & 0 & 0& 0 & 0 & 0 & 0\\
        0 & 0 & * & \bzero & * & \bzero & * & \bzero\\
    \end{pmatrix}        
    \rightarrow
    \begin{pmatrix}
        {\lambda}_1 & 0 & 0 &  0 & 0  & 0 & 0 & 0\\
        0 & \blambda_2 & 0 & 0& 0 & 0 & 0 & 0\\
        0 & 0 & * & 0 & * & 0 & \bzero & 0\\
    \end{pmatrix}\\
    &\rightarrow
    \begin{pmatrix}
        {\lambda}_1 & 0 & 0 &  0 & 0  & 0 & 0 & 0\\
        0 & \lambda_2 & 0 & 0& 0 & 0 & 0 & 0\\
        0 & 0 & \underline{\lambda_3} & 0 & \bzero & 0 & 0 & 0\\
    \end{pmatrix}.       
\end{align*}
The resulting quantum circuit has $\cO(Nr)$ gates as the one of \cite{JSKSB18}, but a depth of only $\cO(r\log_2 N)$.
This depth is similar to the shortest depth obtained by \cite{KerPra22} with a similar tree-like structure called ``parallel Clifford loaders".

Just like in parallel QR, we argue that the choice of the chain \eqref{e:givens} of rotations should depend on the particular hardware constraints, like which qubits we allow to be jointly operated.
Parallel QR algorithms allow quite arbitrary dependency structures, see e.g. \citep[Section 4]{DGHL12}.
While initially motivated by communication- or storage constraints in parallel classical computing, these dependency structures are actually tailored to designing quantum circuits to prepare states like \eqref{e:slater_determinant} depending on a given quantum architecture.
Taking the analogy with parallel QR even one step further, we now show that the initial bottleneck of diagonalizing the kernel of a DPP can be combined with the design of the quantum circuit in a single run of a parallel QR algorithm.

\begin{figure}
    \includegraphics[width=\textwidth]{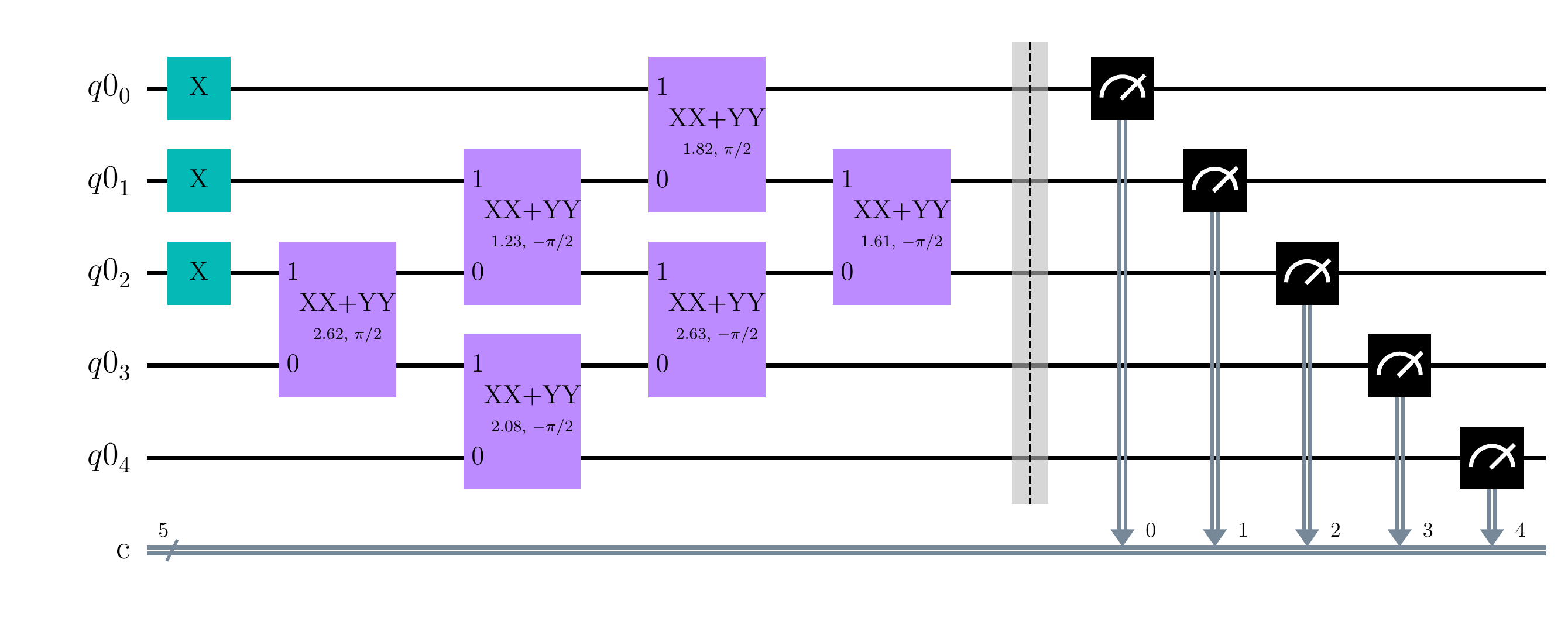}
    \caption{A circuit sampling a DPP with projection kernel of rank $r = 3$, with $N = 5$ items.
    On the left-hand side, Pauli $X$ gates are used to create fermionic modes on the three first qubits.
    Note also the parallel QR Givens rotations on neibouring qubits indicated by parametrized $XX+YY$ gates.
    See \cref{s:gates_details} for gate details.
    On the right-hand side, measurements of occupation numbers are denoted by black squares.
    }
    \label{f:dpp_circuit}
\end{figure}

%
\subsubsection{A hybrid parallel/quantum algorithm for projection DPPs}\label{s:parallel/quantum_proj_DPPs}

The link of Givens-based quantum circuits with parallel QR algorithms suggests pipelines to sample from projection DPPs, even when the kernel is not yet in diagonalized form. 

\begin{proposition}
    Assume that we have access only to $\bA \in \mathbb{R}^{d\times N}$, and that we want to sample from DPP($\bK$), where $\bK = \bV_{:,[r]} \bV_{:,[r]}^\adjoint$ and $\bV$ is defined by the singular value decomposition $\bA = \bU \bSigma \bV^\adjoint$.
    Then, given $P$ classical processors, for a run time in $\cO(Nd^2/P)$ up to logarithmic factors in $P$, we can design a quantum circuit with depth $\cO(N\log P/P)$ and $\cO(Nd)$ gates, such that simultaneous measurements in the computational basis at the output of the circuit yield a sample of DPP($\bK$).
\label{p:hybrid_sampler}
\end{proposition}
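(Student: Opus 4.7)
The plan is to merge the classical preprocessing and the quantum circuit design into a single run of a communication-avoiding parallel QR algorithm on $\bA^\top$, so that the Givens rotations produced classically are reused as Givens operators in the quantum circuit.

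First, I would apply a tall-skinny parallel QR, for instance one of the TSQR or CAQR variants of \citep{DGHL12}, to the tall matrix $\bA^\top \in \mathbb{R}^{N\times d}$. Splitting its $N$ rows into $P$ blocks, each processor computes a local Givens QR on its block of size $(N/P)\times d$, after which a binary tree of $\log P$ merging layers combines pairs of $d\times d$ upper triangular factors. The output is $\bA^\top = \bQ_A \bR_A$, with $\bQ_A \in \mathbb{R}^{N\times d}$ having orthonormal columns, represented implicitly by the $\cO(Nd)$ Givens rotations used in the reduction, and $\bR_A \in \mathbb{R}^{d\times d}$ upper triangular. A small SVD $\bR_A = \bU_R \bSigma_R \bV_R^\top$ performed sequentially in time $\cO(d^3)$ then identifies the top-$r$ right singular subspace of $\bA$ as the column span of $\bQ_A \bU_{R,:,[r]}$, i.e.\ $\bV_{:,[r]} = \bQ_A \bU_{R,:,[r]}$, without ever forming $\bV_{:,[r]}$ explicitly. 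The overall classical cost is $\cO(Nd^2/P + d^3 \log P)$, which is $\cO(Nd^2/P)$ up to log factors in $P$ as announced.

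Next, I would build the quantum circuit in two Givens-based stages following \cref{p:jiang}. Stage A acts only on the first $d$ qubits and prepares, from the fermionic vacuum, the Slater determinant associated with the $d\times r$ isometry $\bU_{R,:,[r]}$, in $\cO(dr)$ gates and depth $\cO(r\log d)$. Stage B then applies, on all $N$ qubits, the Givens operators corresponding via the identification \cref{e:givens_operator_definition} to the rotations $\bG_1,\dots,\bG_m$ that the TSQR has recorded, in reverse order. The key identity is that the same rotations which reduce $\bA^\top$ to $\big(\begin{smallmatrix}\bR_A\\ 0\end{smallmatrix}\big)$ also reduce $\bV_{:,[r]}^\top = \bU_{R,:,[r]}^\top \bQ_A^\top$ to $\big[\bU_{R,:,[r]}^\top \mid 0\big]$, so that Stage B undoes this reduction at the level of fermionic creation operators and produces the Slater state $b_1^\adjoint\cdots b_r^\adjoint\ket{\emptyset}$ with $b_j^\adjoint = \sum_\ell (\bV_{:,[r]})_{\ell j}\, a_\ell^\adjoint$. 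A measurement of all qubits in the computational basis then samples $\mathrm{DPP}(\bK)$ by the analysis of \cref{s:projDPP}.

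For the circuit dimensions, the local Givens operators of Stage B act on disjoint $N/P$-tuples of neighbouring qubits and can be scheduled in parallel with depth $\cO(N/P)$ in Sameh-Kuck fashion, while the $\log P$ merging layers each touch $d$ qubits with depth $\cO(d)$, mirroring the dependency graph of the parallel QR \citep[Section 4]{DGHL12}. Summing the two contributions yields total depth $\cO(N/P + d\log P) = \cO(N\log P/P)$ in the regime $P\leq N/d$ of interest, and a total gate count of $\cO(Nd)$. The main obstacle I anticipate is the bookkeeping of Givens operators through the merging tree: at each merge, two previously disjoint triangular blocks come together on qubits that must be re-indexed, and one must verify that the composition of fermionic Givens operators \cref{e:givens_operator_definition} still realizes the intended unitary change of creation operators. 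This ultimately reduces to checking associativity of the Givens product in the Jordan-Wigner picture, which follows from \cref{p:jiang}.
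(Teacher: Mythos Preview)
Your proposal is correct and follows essentially the same route as the paper: parallel tall-skinny QR on $\bA^\top$ \`a la \citep{DGHL12}, followed by a small $d\times d$ SVD of the triangular factor to identify $\bV_{:,[r]}=\bQ_A\bU_{R,:,[r]}$, and then reuse of the very Givens rotations produced by TSQR as fermionic Givens operators in the circuit. Your two-stage breakdown (Stage A on $d$ qubits for $\bU_{R,:,[r]}$, Stage B on $N$ qubits for $\bQ_A$) and the depth accounting $\cO(N/P+d\log P)=\cO(N\log P/P)$ under $P\le N/d$ match the paper's argument almost verbatim, with only cosmetic differences in presentation.
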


Three comments are in order. 
First, specifying the kernel $\bK$ as in Proposition~\ref{p:hybrid_sampler}, where $\bK$ is the projection kernel onto the first $r$ principal components of a full-rank rectangular matrix $\bA$, is common in practice.
In particular, it covers the column subset selection of Example~\ref{ex:CSS} and the uniform spanning forests of Example~\ref{ex:USTs}.
Second, for simplicity we have omitted the additional $\cO(r^2)$ preprocessing introduced by \cite{JSKSB18}.
Third, at least if we neglect physical sources of noise in the quantum circuit (see Appendix~\ref{s:errors_in_quantum_computers}), the bottleneck in the hybrid sampler remains the initial classical cost $\cO(Nd^2/P)$.
Compared with the vanilla classical approach, we have gained a linear speedup thanks to parallelization. 
Note that it is not clear how such a linear speedup can be gained in the vanilla classical algorithm, but that it can be gained in non-quantum randomized variants \citep{DCMW19,BTA22}.
Dequantizing Proposition~\ref{p:hybrid_sampler}, to obtain a non-quantum algorithm with a similar cost, is actually an interesting question, which we leave to future work. 

\begin{proof}
To prove Proposition~\ref{p:hybrid_sampler}, we first use a Givens-based parallel QR algorithm to compute $\bA^\adjoint = \bQ\bR$; see \citep{DGHL12} and references therein for a recent entry. 
One can incorporate here communication constraints that will turn into qubit-communication constraints in the final algorithm.
For simplicity, we assume $N\gg d$ and use the parallel tall-skinny QR of \citep[Section 2.1]{DGHL12}, with $P\leq N/d$ processors. 
Without going into details, for $P$ a power of two, the idea is to partition $\bA^\adjoint$ into $P$ equal $N/P\times d$ blocks, perform QR for each block using a user-chosen QR algorithm, and for $\log P$ stages, group the resulting $R$ matrices in pairs and perform QR. 
The run time is $\cO(Nd^2/P)$, up to logarithmic factors, which is a linear speedup compared to classical, non-parallel QR. 
Now, perform the singular value decomposition of the $d\times d$ matrix $\bR = \tilde\bU \tilde\bSigma \tilde\bV^\adjoint$; this costs $\cO(d^3)$.
Then 
$$
    \bA^\adjoint\bA = \bQ \bR\bR^\adjoint \bQ^\adjoint = (\bQ\tilde\bU) \tilde\bSigma^2 (\bQ\tilde\bU)^\adjoint,
$$
so that the first $r$ principal components of $\bA$ are $\bQ\tilde\bU_{:,[r]}$.
Essentially, we have used the QR algorithm to compute the principal directions: this detour is valuable because we simultaneously $(i)$ benefit from parallelization and $(ii)$ obtain $\bQ$ as a product of Givens rotations by construction.
If we now decompose $\tilde\bU_{:,[r]}$ as a product of Givens rotations, say using the Sameh-Kuck algorithm like in \citep{JSKSB18}, we paid a (classical) preprocessing cost of $\cO(Nd^2/P + d^3)$, up to logarithmic factors in $P$, to obtain all the information needed to create a quantum circuit with Givens gates that samples DPP($\bK$) starting from the knowledge of $\bA$.
The depth depends on the blockwise QR algorithm used in the initial parallel tall-skinny QR step. 
If we use Givens rotations à la Sameh-Kuck to enforce only nearest-neighbour qubit interactions, we obtain a depth of $\cO(N\log P/P)$, and a number of gates of $\cO(Nd)$. 
The limiting factor of the overall pipeline remains the QR factorization of $\bA$, in $\cO(Nd^2/P)$ flops, but we have gained a linear speedup.
\end{proof}
\subsection{Reducing general DPPs to mixture DPPs}\label{s:reduction_mixture_DPPs}
In Section~\ref{s:circuit_proj_DPP}, we gave quantum circuits to sample projection DPPs. 
If the kernel $\bK$ is not that of a projection, a standard argument by \citet{HKPV06} allows writing the corresponding DPP as a statistical mixture of projection DPPs, thus yielding a quantum sampler for DPP($\bK$) with a further classical preprocessing step, the latter implementing the choice of the mixture component.

More precisely, assume $\bK$ is available in diagonalized form
\begin{align}
    \bK = \bU\, \Diag\left( \nu_k\right)\, \mathbf U^\adjoint,
    \label{eq:correlation_kernel_DPP_sampling}
\end{align}
with $\nu_1, \dots, \nu_N \in [0,1]$.
Then sampling DPP($\bK)$ can be done in two steps.
First, sample $N$ independent Bernoulli random variables with respective parameters $\nu_1, \dots, \nu_N$ on a classical computer.
Let $\mathcal{C}\subseteq [N]$ be the set of indices of successful Bernoulli trials.
Second, measure all the occupation numbers $N_i$ for $1\leq i \leq N$ in the quantum circuit of \cref{s:circuit_proj_DPP}; in other words, sample from the projection DPP of correlation kernel $\mathbf{Q}^* \mathbf{Q}$ with $\mathbf{Q} = (\bU_{:\mathcal{C}})^*$, where $\bU_{:\mathcal{C}}$ is the set of columns indexed by $\mathcal{C}$.

To conclude this section, we note how one can arrive at this mixture construction by inspecting the quantum state ``above'' DPP($\bK$).
Consider the factorization of the state \cref{e:gaussian} obtained thanks to the diagonalization of $H$, i.e.,
\begin{align}
    \rho = \prod_{k = 1}^N \frac{\exp(-\beta (\lambda_k-\mu) b_k^\adjoint b_k)}{1 + \exp(-\beta (\lambda_k-\mu))},\label{eq:rho_DPP_as_product}
\end{align}
where we used the formula \cref{e:constant_intermediate} for the normalization constant.
Note that the eigenvalues of the correlation kernel \cref{eq:correlation_kernel_DPP_sampling} are obtained by taking the sigmoid of the Hamiltonian eigenvalues, i.e., $\nu_k = \sigma(-\beta(\lambda_k-\mu))$, in the light of \cref{prop:DPP_from_rho}.
By a simple inspection, the factor $k \in [N]$ in the product \cref{eq:rho_DPP_as_product}, is given by 
\[
    \frac{\exp(-\beta (\lambda_k-\mu) b_k^\adjoint b_k)}{1 + \exp(-\beta (\lambda_k-\mu))} = (1-\nu_k)\times \pi_{\Ker b_k} + \nu_k \times \pi_{\Ker b_k^\adjoint},
\]
where $\pi_{\Ker b_k}$  (resp. $\pi_{\Ker b_k^\adjoint}$) is the orthogonal projector onto the null space of $b_k$ (resp. $b_k^\adjoint$).
Inspecting Born's rule reveals that $\tr \rho N_{i_1}\dots N_{i_\ell}$ is equal to the correlation function of a statistical mixture of projection DPPs, each coming with a weight of the form 
$$
    \prod_{i\in I} \nu_i \prod_{j\notin I} (1-\nu_j) 
$$
for some $I\subset[N]$. 
These weights correspond to the independent Bernoulli trials introduced by \cite{HKPV06} and discussed immediately after \cref{eq:correlation_kernel_DPP_sampling}. 

\subsection{Dilating a general DPP to a projection DPP}
\label{s:dilation_argument}
In principle, there exists another strategy to sample from a general DPP by only using a quantum circuit such as described in \cref{s:circuit_proj_DPP} -- thus bypassing the first step of the sampling algorithm consisting of the classical sampling of Bernoulli random variables.
Any DPP on the ground set $\{1,\dots,N\}$ can be dilated to a projection DPP on $\{1, \dots, N, N+1, \dots, 2N\}$ by the transformation
\[
    \bK 
    \mapsto
    \bK_{\rm dil}
    = 
    \begin{pmatrix}
        \bK & (\bK(\bI-\bK))^{1/2}\\
        (\bK(\bI-\bK))^{1/2} & \bI - \bK
    \end{pmatrix},
\]
see \citep[Section 8]{lyons2003}.
The points of a sample of DPP($\bK_{\rm dil}$) which belong to $\{1,\dots,N\}$ yield a sample of DPP($\bK$).
\subsection{Sampling the PfPP corresponding to the BdG Hamiltonian}
\label{s:quantum_circuit_PfPPs}
We now turn to PfPPs. 
As above, we can write the mixed state of the Bogoliubov-de Gennes Hamiltonian at inverse temperature $\beta > 0$ as 
\begin{equation}
    \frac{\exp(-\beta H_{\mathrm{BdG}})}{Z} = \prod_{k=1}^N \frac{\exp(-\beta \epsilon_k b_k^\adjoint b_k)}{1 + \exp(-\beta \epsilon_k)}, \label{e:mixture_pfaffian}
\end{equation}
where $\epsilon_k$ is the energy of the eigenmode $k$.
Hence, in complete analogy with the case of DPPs in \cref{s:reduction_mixture_DPPs}, the product in \cref{e:mixture_pfaffian} entails a statistical mixture formula.
Therefore, sampling the corresponding Pfaffian point process amounts to proceed as follows:
\begin{enumerate}
    \item sample $N$ independent Bernoullis with success probability $\sigma(-\beta \epsilon_k)$ on a classical computer, to obtain the set of successful indices $\mathcal{C}\subseteq \{1, \dots, N\}$,
    \item jointly measure the occupation numbers $N_i = a_i^\adjoint a_i$ for all  $1\leq i\leq N$ in the pure state $ \prod_{i\in \mathcal{C}} b_{i}^\adjoint \ket{\emptyset_b}$ with a quantum circuit,
\end{enumerate}
where $\ket{\emptyset_b}$ is the joint Fock vacuum of the $b_k$'s.
Note that the preparation of $ \prod_{i\in \mathcal{C}} b_{i}^\adjoint \ket{\emptyset_b}$ is discussed in \cref{s:double_Givens}.

The output of this sampling algorithm is the set of indices for which the measure has given an occupation number equal to $1$ rather than equal to $0$.
Again, this can be done easily thanks to the representation of the occupation numbers $N_i$'s in \cref{p:jw}.
\begin{remark}[Projective $\bS$]\label{rem:projective_S_for_PfPP}
    In the light of \cref{prop:Pfaffian_PP}, the Pfaffian point process associated with the pure state $    \prod_{i\in \mathcal{C}}b_i^\adjoint\ket{\emptyset_b}$ is determined by the
    orthogonal projection matrix
    \begin{equation}
        \mathbf{S}(\mathcal{C}) = \overline{\mathbf{W}}^{\adjoint}
        \begin{pmatrix}
             \mathbf{I}_{\bar{\mathcal{C}}}& \mathbf{0}\\
            \mathbf{0} & \mathbf{I}_{\mathcal{C}}
        \end{pmatrix}
        \overline{\mathbf{W}},\label{e:S(C)}
    \end{equation}
    where $\mathbf{I}_{\mathcal{C}}$ is the diagonal matrix with entry $(i,i)$ equal to $1$ if $i\in \mathcal{C}$ and zero otherwise, whereas $ \mathbf{I}_{\bar{\mathcal{C}}} = \mathbf{I} - \mathbf{I}_{\mathcal{C}}$. 
    Notice that $\mathbf{C}\overline{\mathbf{S}}\mathbf{C} = \bI - \mathbf{S}$ holds.

    We now shortly discuss a simple consequence of \cref{prop:Pfaffian_sample_parity} in the case of the zero temperature limit a the PfPP, denoted by $Y_\beta$ and associated with the density matrix $\rho  = Z^{-1} \exp( -\beta H_{\mathrm{BdG}})$, to highlight the role of the Majorana quadratic form of $H_{\mathrm{BdG}}$, namely $\mathbf{A}$, see \cref{lem:diag_H_BdG}.
    This PfPP is associated with the matrix
        \begin{align*}
        \mathbf{S} 
        = \overline{\mathbf{W}}^{\adjoint}
        \begin{pmatrix}
             \Diag(\sig(\beta \epsilon_k))& \mathbf{0}\\
            \mathbf{0} & \Diag(\sig(-\beta \epsilon_k))
        \end{pmatrix}
        \overline{\mathbf{W}}.
    \end{align*}
    For simplicity, assume now that the eigenvalues of $H_{\mathrm{BdG}}$ are such that: $\epsilon_k < 0$ for all $k\in \mathcal{C}$ and $\epsilon_k > 0$ otherwise\footnote{An alternative strategy uses a chemical potential as in \cref{s:projDPP}.}; and take the limit of $\rho$ for $\beta\to +\infty$.
    
    The correlation functions of $Y_\beta$ tend to the correlation functions of the process $Y_\infty$ associated to the orthogonal projection $\mathbf{S}(\mathcal{C})$ given in \cref{e:S(C)}.
    Meanwhile, the expectation of the parity w.r.t.\ $Y_\beta$ tends to
    \[
        \E_{Y_\infty} (-1)^{|Y_\infty|} = (-1)^{|\mathcal{C}|} \sign \pf(\mathbf{A}) = (-1)^{|\mathcal{C}|} \det \mathbf{W},
    \]
   \corr{in the light of \cref{lem:diag_H_BdG}.}
   Since $\det \mathbf{W} \in \{-1,1\}$, the right-hand side of this equality is equal to either $1$ or $-1$. 
    But $(-1)^{|Y_\infty|}$ can only take values in $\{-1,1\}$, so that the parity of the cardinality of $Y_\infty$ is almost surely fixed.

    In particular, when $\mathcal{C} = \emptyset$, the Pfaffian process $Y_\infty$ is determined by the lowest energy eigenvector of $H_{\mathrm{BdG}}$, namely $\ket{\emptyset_b}$, and its expected parity is $\sign \pf(\mathbf{A})$, which corresponds to the formula derived by \citet[Eq.\ (19)]{Kitaev01} in the setting of a topological superconductor in one dimension.
    \corr{Note the direct connection with \cref{rem:fermion_parity} about the fermionic parity of eigenvectors of $H_{\mathrm{BdG}}$.}
    Thus, in this case, the parity is odd if and only if $\bm{W} = \bm{\Omega}^\adjoint \bm{R} \bm{\Omega}$ is such that the (real) orthogonal matrix $\bm{R}$ is in the connected component of $O(2N)$ not containing the identity element.
\end{remark}
\subsubsection{QR decomposition with double Givens rotations}\label{s:double_Givens}
We now describe how to adapt the discussion of \cref{s:circuit_proj_DPP} to the case where the particle number is not conserved.
To simplify notation, we suppose that $\mathcal{C} = \{1, \dots, r\}$, and we prepare the state
\begin{align}
    b_1^\adjoint\dots b_r^\adjoint\ket{\emptyset_b}, \label{e:fermionic_gaussian_state_W}
\end{align}
where the creation-annihilation operators $b_k$'s are obtained from the $a_k$'s thanks to a Bogoliubov transformation\footnote{Recall that the $a_k$'s are the Jordan-Wigner operators representing the $c_k$'s in \cref{e:Bog_transform}.} \cref{e:Bog_transform} with a unitary orthogonal complex matrix $\mathbf{W}$
given in \cref{lem:diag_H_BdG}. 
Due to particle number non-conservation, the Fock vacuum of the $b_i$'s, denoted by $\ket{\emptyset_b}$ and given below, is not annihilated by the $a_i$'s. 
Also, recall that the Bogoliubov transformation is given by a matrix of the form
\begin{equation}
    \label{e:w_matrix}
    \mathbf{W} =\begin{pmatrix}
        \overline{\mathbf{W}}_1 & \overline{\mathbf{W}}_2 \\
        \mathbf{W}_2 & \mathbf{W}_1
    \end{pmatrix},
\end{equation}
where the blocks are such that this transformation is unitary; see \cref{def:orthogonal_complex}.
Explicitly, as explained in \citep[Section IV]{JSKSB18}, the full transformation is determined by the lower blocks of $\mathbf{W}$, which encode the expression of $\mathbf{b}$ in terms of $\mathbf{a}^\adjoint$ and $\mathbf{a}$.
Now, we give the following result that we formalize and adapt from \citep[Equation (31)]{JSKSB18} \corr{by including cases with fewer than $N$ particle-hole transformations which were not explicitly considered by \citet{JSKSB18}}.
\begin{lemma}\label{lem:lower_block}
    There exists a $2N \times 2N$ orthogonal complex matrix $\mathbf{O}$ and an $N\times N$ unitary matrix $\bV$ such that
    \[
    \bV \begin{pmatrix}
        \mathbf{W}_2  \vert \mathbf{W}_1
    \end{pmatrix} \mathbf{O}^\adjoint = \begin{pmatrix}
        \mathbf{0}  \vert \mathbf{D}
    \end{pmatrix},
    \]
    where $\mathbf{D} = \Diag(z_k)$ is a diagonal matrix with $|z_k| = 1$ for $1\leq k \leq N$.
    Furthermore, the matrix $\mathbf{O}$ is here the following product of Givens rotations \emph{and} particle-hole transformations
\begin{align}
    \mathbf{O} =  \mathbf{B}^{\corr{\mathtt{m}_{N}}}\bm{M}_{N-1}\mathbf{B}^{\corr{\mathtt{m}_{N-1}}}  \dots \mathbf{B}^{\corr{\mathtt{m}_{2}}}\bm{M}_1\mathbf{B}^{\corr{\mathtt{m}_{1}}},\label{eq:unitary_U_slater_ph}
\end{align}
the \corr{terms} of which we now explain. 
The matrix $\bm{M}_i$ is a product of \corr{at most} $N-i$ double Givens rotation matrices. 
The double Givens rotation $\bm{\Gamma}$ associated with $(\theta, \phi)$ and vector indices $1\leq \ell^1 < \ell^2\leq N$ is defined as
\[
    \bm{\Gamma} =
    \begin{pmatrix}
        \bG_{\ell^1,\ell^2}(\theta,\phi) & \mathbf{0}\\
        \mathbf{0} & \overline{\bG}_{\ell^1,\ell^2}(\theta,\phi)
    \end{pmatrix}
    \text{ with $\bG$ defined in \cref{e:givens}}.
\]
The matrix $\mathbf{B}$ in \eqref{eq:unitary_U_slater_ph} is a permutation matrix exchanging the last vector of the first block with the last vector of the second block.
Formally, it is given by the so-called \emph{particle-hole} matrix
\[
    \mathbf{B} = \begin{pmatrix}
        \bI - \mathbf{e}_N \mathbf{e}_N^\top & \mathbf{e}_N \mathbf{e}_N^\top\\
        \mathbf{e}_N \mathbf{e}_N^\top & \bI - \mathbf{e}_N \mathbf{e}_N^\top
    \end{pmatrix}.
\]
\corr{The Booleans $(\mathtt{m}_1, \dots, \mathtt{m}_N) \in \{0,1\}^N$ simply indicate whether the corresponding \emph{particle-hole} matrices appear or not in the decomposition \cref{eq:unitary_U_slater_ph}.}
Finally, note that because we use the Jordan-Wigner transform, $\mathbf{B}$ can be implemented by a Pauli $X$ gate.
\end{lemma}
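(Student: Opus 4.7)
The plan is to construct $\mathbf{O}$ and $\bV$ explicitly via a QR-like iterative reduction of the $N \times 2N$ matrix $(\mathbf{W}_2 \mid \mathbf{W}_1)$ into the target form $(\mathbf{0} \mid \mathbf{D})$, processing one row at a time. First, I would establish the structural facts underpinning the reduction. Right-multiplication by $\bm{\Gamma}^\adjoint$ acts as $\bG^\adjoint$ on columns $\{\ell^1, \ell^2\}$ inside the left block $\mathbf{W}_2$ and as $\overline{\bG}^\adjoint$ on the same columns inside the right block $\mathbf{W}_1$, while right-multiplication by the particle-hole matrix $\mathbf{B}$ swaps column $N$ of $\mathbf{W}_2$ with column $N$ of $\mathbf{W}_1$. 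A direct calculation (akin to the one verifying $\mathbf{B}^\top \bF \mathbf{B} = \bF$) shows that $\bm{\Gamma}$ and $\mathbf{B}$ both belong to $O(\bF,\mathbb{C}) \cap U(2N)$, so any product built from them stays in this group, ensuring $\mathbf{O}$ has the claimed type. Moreover, the rows of $(\mathbf{W}_2 \mid \mathbf{W}_1)$ are orthonormal, being the lower block-rows of the unitary $\mathbf{W}$, and this orthonormality is preserved throughout the reduction.

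Second, I would describe the iterative algorithm. At step $i = 1, \dots, N-1$, focus on a single remaining row. The aim is to push all its entries inside the left block into column $N$ using at most $N - i$ double Givens rotations (forming $\bm{M}_i$). If the residual column-$N$ entry in the left block is non-zero, a particle-hole swap (encoded by $\mathtt{m}_{i+1} = 1$) then transfers it into the right block; otherwise no swap is performed. The initial $\mathbf{B}^{\mathtt{m}_1}$ handles a possible pre-swap to get the first iteration started, and the final $\mathbf{B}^{\mathtt{m}_N}$ disposes of the last row if needed. This matches the factorisation in \cref{eq:unitary_U_slater_ph}.

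The main obstacle, and the heart of the argument, is verifying the claim that $\bm{M}_i$ needs no more than $N - i$ double Givens rotations. I would prove this by induction on $i$: once the first $i-1$ rows have been reduced to their target form (a single unit-modulus entry in the right block, zeros elsewhere), the orthonormality of the row space of the updated matrix forces the $i$-th row to have support on at most $N - i + 1$ of the remaining column positions. Collapsing these into one non-zero entry then requires at most $N - i$ rotations. After the $(N-1)$-th step, the matrix has the form $(\mathbf{0} \mid \mathbf{D})$, and the row orthonormality constraint $\sum_j |\mathbf{D}_{ij}|^2 = 1$ immediately forces each remaining non-zero entry to satisfy $|z_k| = 1$.

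Finally, to conclude, I would use the $N \times N$ unitary $\bV$ on the left to permute the single non-zero entries of the rows of $\mathbf{D}$ so that they align along its diagonal (a simple row permutation suffices, since each row already contains a single unit-modulus entry). The resulting $\mathbf{O}$ is then a product of the asserted form, and $\bV$ is unitary, completing the proof.
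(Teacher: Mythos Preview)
Your overall strategy of reducing $(\mathbf W_2\mid\mathbf W_1)$ row by row is in the right spirit, but there is a genuine gap in the counting argument, and the role you assign to $\bV$ differs from what actually makes the proof work.

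The crucial issue is your claim that orthonormality forces row $i$ to have left-block support of size at most $N-i+1$. Suppose your induction hypothesis holds, so each of rows $1,\dots,i-1$ has a single unit-modulus entry at some position $j_k$ of the \emph{right} block and zeros elsewhere. Hermitian orthogonality of row $i$ to row $k$ only yields $(\mathbf r_{2,i})_{j_k}=0$, a forced zero in the right block; it says nothing about the left block. To obtain $(\mathbf r_{1,i})_{j_k}=0$ one needs the additional skew relation $\mathbf U\mathbf V^\top+\mathbf V\mathbf U^\top=\mathbf 0$ from \cref{lem:unitary_Bog}, which in row form reads $\mathbf r_{1,i}\mathbf r_{2,k}^\top=-\mathbf r_{2,i}\mathbf r_{1,k}^\top$. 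Without invoking this, row $i$'s left block can have up to $N$ nonzero entries and the bound of $N-i$ double Givens rotations in $\bm M_i$ fails. A second gap is that your algorithm never establishes the single-entry induction hypothesis to begin with: after collapsing the left block of row $i$ into column $N$ and applying $\mathbf B$, row $i$ has left block zero, but its right block typically still carries several nonzero entries. Hence your final claim that ``each row already contains a single unit-modulus entry'' (so $\bV$ can be a mere permutation) is not justified.

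The paper avoids both problems by reversing the order of operations. The unitary $\bV$ is applied \emph{first}, as a preprocessing product of (single) Givens rotations on the rows, which places an upper triangle of zeros in $\mathbf W_2$. The subsequent right-multiplication by the factors of $\mathbf O$ then zeroes the remaining entries of the left block one row at a time, and the key is that many zeros appear \emph{automatically} via the two identities $\mathbf r_{1,i}\mathbf r_{2,j}^\top=-\mathbf r_{2,i}\mathbf r_{1,j}^\top$ and $\mathbf r_{1,i}\mathbf r_{1,j}^\adjoint=\delta_{ij}-\mathbf r_{2,i}\mathbf r_{2,j}^\adjoint$. These automatic zeros simultaneously guarantee the diagonal final form $(\mathbf 0\mid\mathbf D)$ and the ``at most $N-i$'' count for each $\bm M_i$. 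Your proposal can be repaired by incorporating the skew relation explicitly and moving $\bV$ to the front as preprocessing rather than treating it as a terminal permutation.
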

Let us give a few details that sketch the proof of \cref{lem:lower_block}.
On the one hand, the matrix $\bV$ is a product of single Givens rotations applied to the rows of 
$
(\begin{smallmatrix}
    \mathbf{W}_2 & \mathbf{W}_1
\end{smallmatrix}),
$
to yield an upper triangle of zeros in the left block. 
For instance, when $N=3$, and assuming for simplicity that $\mathbf{W}_2$ is full-rank,\footnote{
    An example illustrating a case where $\mathbf{W}_2$ is rank-deficient is given in \cref{app:rank_defficient_ph}.
}we obtain 
\begin{align}
\bV\begin{pmatrix}
        \mathbf{W}_2  \vert \mathbf{W}_1
    \end{pmatrix}
    =
    \left(\begin{array}{ccc|ccc}
        0 & 0 & \square   & * & * & \underline{0}\\
        0 & \square & * &* & * & * \\
        \square & * & * & * & * & *
    \end{array}\right),\label{e:Givens_first_step}
\end{align}
where we have highlighted the diagonal of the first block as squares. 
The full-rank assumption implies that each square in \eqref{e:Givens_first_step} corresponds to a nonzero element.
This in turn implies that there is a zero at the top right corner of the second block, as we now explain.
Denote by $\mathbf{r}_{1,i}$ and $\mathbf{r}_{2,i}$, the $i$th \emph{row} vector of the first and second block, respectively. 
Thanks to  \cref{lem:unitary_Bog}, we have the following inner product identities between rows
\begin{align}
    \mathbf{r}_{1,i}\mathbf{r}_{2,j}^\top = -\mathbf{r}_{2,i}\mathbf{r}_{1,j}^\top, \label{e:ortho_diff_rows}\\
    \mathbf{r}_{1,i}\mathbf{r}_{1,j}^\adjoint  = \delta_{ij} -  \mathbf{r}_{2,i}\mathbf{r}_{2,j}^\adjoint.\label{e:ortho_same_rows}
\end{align}
Property \cref{e:ortho_diff_rows} explains why we necessarily already have a zero in the right-hand block of \cref{e:Givens_first_step}.

Starting from \eqref{e:Givens_first_step}, we now explain how to build the matrix $\mathbf{O}$ in \eqref{eq:unitary_U_slater_ph}. 
More precisely, the next step consists in filling the remainder of the left block of \cref{e:Givens_first_step} with zeros, one row at a time. 
We start with a right-multiplication with the particle-hole transformation $\mathbf{B}$ to zero the top right element. 
Then right-multiplication with double Givens matrices are used to zero the second row of the left block, from left to right, until we reach the last element of the second row. 
If that last element is not zero, orthogonality of the rows implies that the last element of the row of the second block is zero, so that a right-multiplication with the particle-hole transformation $\mathbf{B}$ finishes our procedure of zeroing the second row of the left block. 
We repeat the operation row after row, using $\mathbf{B}$ to zero the right-most element of each row if it is not already zero.

To make things more visual, we show the right multiplication by $\mathbf{O}^*$ on our example \eqref{e:Givens_first_step}, starting with a particle-hole transformation on \cref{e:Givens_first_step}.
For simplicity of the display, we assume that particle-hole transformations are necessary for each row.
This gives
\begin{align*}
    &\bV\begin{pmatrix}
        \mathbf{W}_2  \vert \mathbf{W}_1
    \end{pmatrix}\\
    &\xrightarrow{\times \mathbf{B}} \left(\begin{array}{ccc|ccc}
        0 & 0 & \mathbf{0}   & * & * & *\\
        0 & * & *  &* & * & * \\
        * & * & * & * & * & *
    \end{array}\right)
    \xrightarrow{\times\Gamma^\adjoint} \left(\begin{array}{ccc|ccc}
        0 & 0 & 0   & * & * & \underline{0}\\
        0 & \mathbf{0} & *  &* & * & \underline{0} \\
        * & * & * & * & * & *
    \end{array}\right) \text{ by \cref{e:ortho_diff_rows}}\\
    &\xrightarrow{\times\Gamma^\adjoint} \left(\begin{array}{ccc|ccc}
        0 & 0 & 0   & \underline{\alpha}_{(2)} & \underline{0}_{(1)} & 0\\
        0 & 0 & *  &* & * & \underline{0}_{(3)} \\
        \mathbf{0} & * & * & * & * & *
    \end{array}\right) \text{ by \cref{e:ortho_diff_rows} at (1), \cref{e:ortho_same_rows} at (2), \cref{e:ortho_diff_rows} at (3)}\\
    &\xrightarrow{\times\mathbf{B}} \left(\begin{array}{ccc|ccc}
        0 & 0 & 0   & \alpha & 0 & 0\\
        0 & 0 & \mathbf{0}  &\underline{0} & * & \bm{*} \\
        0 & * & * & \underline{0} & * & *
    \end{array}\right) \text{ by \cref{e:ortho_same_rows}}\\
    &\xrightarrow{\times\Gamma^\adjoint}\left(\begin{array}{ccc|ccc}
        0 & 0 & 0   & \alpha & 0 & 0\\
        0 & 0 & 0  &0 & \underline{\beta}_{(3)} & 0 \\
        0 & \mathbf{0} & \underline{\gamma}_{(4)} & 0 & \underline{0}_{(2)} & \underline{0}_{(1)}
    \end{array}\right) \text{ by \cref{e:ortho_diff_rows} at (1), by \cref{e:ortho_same_rows} at (2), (3), (4)}\\
    &\xrightarrow{\times\mathbf{B}}\left(\begin{array}{ccc|ccc}
        0 & 0 & 0   & \alpha & 0 & 0\\
        0 & 0 & 0  &0 & \beta & 0 \\
        0 & 0 & 0 & 0 & 0 & \gamma
    \end{array}\right) = \bV\begin{pmatrix}
        \mathbf{W}_2  \vert \mathbf{W}_1
    \end{pmatrix}\mathbf{O}^\adjoint,
\end{align*}
with unit modulus complex numbers $\alpha,\beta,\gamma$.
To help the reader understand which of properties  \cref{e:ortho_diff_rows} and \cref{e:ortho_same_rows} applied, an index was added to matrix elements which are automatically set to zero.
\begin{proposition}[Formalization of a result of \cite{JSKSB18}]\label{lem:nb_Givens_PH}
    The number of double Givens gates as defined in \cref{lem:lower_block} to achieve the factorization \cref{eq:unitary_U_slater_ph} is {at most} $N(N-1)/2$, whereas the number of particle-hole transformations is {at most} $N$.
    {Furthermore,  if $\det\mathbf{W} = 1$, the number particle-hole transformations is even, else if $\det \mathbf{W} = -1$ it is odd.}
\end{proposition}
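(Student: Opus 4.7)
The plan is to read off the two counts directly from the factorization \eqref{eq:unitary_U_slater_ph}, and then to identify the parity of the number of particle-hole transformations with the sign of $\det \mathbf{W}$ by computing $\det \mathbf{O}$ in two different ways.

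For the counts, the decomposition
\[
    \mathbf{O} = \mathbf{B}^{\mathtt{m}_N}\bm{M}_{N-1}\mathbf{B}^{\mathtt{m}_{N-1}}\cdots \bm{M}_1\mathbf{B}^{\mathtt{m}_1}
\]
comprises $N$ slots for the particle-hole matrices $\mathbf{B}^{\mathtt{m}_i}$ and $N-1$ factors $\bm{M}_i$, each of which is by definition a product of at most $N-i$ double Givens rotations. Summing, the total number of double Givens is at most
\[
    \sum_{i=1}^{N-1}(N-i) = \frac{N(N-1)}{2},
\]
while the total number of particle-hole transformations is $\sum_{i=1}^N \mathtt{m}_i \le N$.

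For the parity statement, I compute $\det \mathbf{O}$ in two independent ways and use multiplicativity. On the one hand, a double Givens $\bm{\Gamma}$ is block-diagonal with blocks $\bG$ and $\overline{\bG}$, with $\bG$ unitary, so $\det \bm{\Gamma} = \det \bG \cdot \overline{\det \bG} = |\det \bG|^2 = 1$. Each particle-hole matrix $\mathbf{B}$ realizes a single transposition between basis vectors $N$ and $2N$, so $\det \mathbf{B} = -1$. Consequently,
\[
    \det \mathbf{O} = (-1)^{\mathtt{m}_1 + \dots + \mathtt{m}_N}.
\]
On the other hand, extend the unitary $\bV$ to the $2N\times 2N$ block-diagonal matrix $\widetilde{\bV} = \left(\begin{smallmatrix} \overline{\bV} & \mathbf{0}\\ \mathbf{0} & \bV\end{smallmatrix}\right)$, which a direct computation based on $\bV^\adjoint \bV = \bI$ shows is in $O(\bF,\mathbb{C}) \cap U(2N)$, with $\det \widetilde{\bV} = |\det \bV|^2 = 1$. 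By construction of $\bV$ and $\mathbf{O}$, the matrix $\mathbf{W}' \triangleq \widetilde{\bV}\mathbf{W}\mathbf{O}^\adjoint$ also lies in $O(\bF,\mathbb{C}) \cap U(2N)$ and has lower row-block equal to $(\mathbf{0} \vert \mathbf{D})$; by \cref{lem:unitary_Bog}, its upper row-block is then forced to be $(\overline{\mathbf{D}} \vert \mathbf{0})$, so that $\det \mathbf{W}' = |\det \mathbf{D}|^2 = 1$. Taking determinants on both sides of $\widetilde{\bV}\mathbf{W}\mathbf{O}^\adjoint = \mathbf{W}'$ yields $\det \mathbf{W} = \det \mathbf{O}$ (recall that $\det \mathbf{O} \in \{-1,1\}$ so complex conjugation is trivial), which combined with the previous display gives the claimed equivalence between the parity of $\sum_i \mathtt{m}_i$ and the sign of $\det \mathbf{W}$.

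The main obstacle is essentially bookkeeping: one must check that the row-action of $\bV$ on the lower block $(\mathbf{W}_2 \vert \mathbf{W}_1)$ extends to a left multiplication on the full $\mathbf{W}$ preserving the orthogonal complex structure, and that the final matrix $\mathbf{W}'$ really has the stated block-diagonal form. Both points are immediate from \cref{lem:unitary_Bog}, since $\bV$ is a product of single Givens rotations on rows, each of which, after extension to $\widetilde{\bV}$, becomes a double Givens matrix in $O(\bF,\mathbb{C})\cap U(2N)$, while the target form $(\mathbf{0} \vert \mathbf{D})$ of the lower block of $\mathbf{W}'$ determines the upper block via the unitary orthogonal complex relations.
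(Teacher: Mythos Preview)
Your proof is correct and follows essentially the same route as the paper. The paper does not spell out a proof of this proposition, but your argument is precisely what is implicit in the surrounding material: the counts are read off from the structure of the factorization in \cref{lem:lower_block}, and the parity follows from taking determinants in the identity of \cref{lem:full_from lower} (your $\widetilde{\bV}\mathbf{W}\mathbf{O}^\adjoint = \mathbf{W}'$ is just that lemma without the diagonal phase $\mathbf{D}$ absorbed on the left), combined with $\det\bm{\Gamma}=1$ and $\det\mathbf{B}=-1$.
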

\cref{lem:lower_block} is now leveraged to yield the factorization of $\mathbf{W}$ in \eqref{e:w_matrix}.
\begin{lemma}\label{lem:full_from lower}
    The matrix $\mathbf{O}$ in \cref{eq:unitary_U_slater_ph} is a unitary orthogonal complex matrix, and we have the factorization
    \[
        \begin{pmatrix}
            \mathbf{D}\overline{\bV} & \mathbf{0}\\
        \mathbf{0} & \overline{\mathbf{D}}\bV
        \end{pmatrix}
        \mathbf{W} \mathbf{O}^\adjoint =
        \begin{pmatrix}
        \bI & \mathbf{0}\\
        \mathbf{0} & \bI
        \end{pmatrix}.
    \]
\end{lemma}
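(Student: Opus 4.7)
The plan is to prove the two assertions in sequence, first that $\mathbf{O}$ lies in $O(\bF,\mathbb{C})\cap U(2N)$, and then to use \cref{lem:lower_block} together with the constrained block structure of \cref{lem:unitary_Bog} to read off the factorization.

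\textbf{Step 1 (membership in $O(\bF,\mathbb{C})\cap U(2N)$).} Since both $U(2N)$ and $O(\bF,\mathbb{C})$ are groups, it suffices to show that each factor in the product \eqref{eq:unitary_U_slater_ph} lies in their intersection. A double Givens rotation has the block-diagonal form $\bm{\Gamma} = \big(\begin{smallmatrix}\bG & \mathbf{0}\\ \mathbf{0} & \overline{\bG}\end{smallmatrix}\big)$ with $\bG$ unitary, so $\bm{\Gamma}$ is unitary, and a direct computation
\[
\bm{\Gamma}^\top \bF \bm{\Gamma} = \begin{pmatrix}\bG^\top & \mathbf{0}\\ \mathbf{0} & \bG^\adjoint\end{pmatrix}\begin{pmatrix}\mathbf{0} & \overline{\bG}\\ \bG & \mathbf{0}\end{pmatrix} = \begin{pmatrix}\mathbf{0} & \bG^\top \overline{\bG}\\ \bG^\adjoint \bG & \mathbf{0}\end{pmatrix} = \bF
\]
uses $\bG^\adjoint \bG = \bI$ and (by conjugation) $\bG^\top \overline{\bG} = \bI$. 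The particle-hole matrix $\mathbf{B}$ is a real involution and a permutation, hence unitary; and a short computation shows $\mathbf{B}^\top \bF \mathbf{B} = \bF$. Taking products yields $\mathbf{O} \in O(\bF,\mathbb{C})\cap U(2N)$.

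\textbf{Step 2 (bottom block of $\mathbf{W}\mathbf{O}^\adjoint$).} By \cref{lem:lower_block}, $\bV (\mathbf{W}_2 \vert \mathbf{W}_1)\mathbf{O}^\adjoint = (\mathbf{0} \vert \mathbf{D})$, and left-multiplying by $\bV^\adjoint$ gives $(\mathbf{W}_2 \vert \mathbf{W}_1)\mathbf{O}^\adjoint = (\mathbf{0} \vert \bV^\adjoint \mathbf{D})$. Since these are precisely the bottom block rows of $\mathbf{W}$, the lower block of $\mathbf{W}\mathbf{O}^\adjoint$ equals $(\mathbf{0} \vert \bV^\adjoint \mathbf{D})$.

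\textbf{Step 3 (identify the top block via symmetry).} Since $\mathbf{W}\mathbf{O}^\adjoint \in O(\bF,\mathbb{C})\cap U(2N)$, \cref{lem:unitary_Bog} forces it to have the form $\big(\begin{smallmatrix}\tilde{\mathbf{U}} & \tilde{\mathbf{V}}\\ \overline{\tilde{\mathbf{V}}} & \overline{\tilde{\mathbf{U}}}\end{smallmatrix}\big)$. Matching with Step 2 yields $\overline{\tilde{\mathbf{V}}} = \mathbf{0}$ and $\overline{\tilde{\mathbf{U}}} = \bV^\adjoint \mathbf{D}$, hence $\tilde{\mathbf{V}}=\mathbf{0}$ and $\tilde{\mathbf{U}} = \bV^\top \overline{\mathbf{D}}$, giving
\[
\mathbf{W}\mathbf{O}^\adjoint = \begin{pmatrix}\bV^\top \overline{\mathbf{D}} & \mathbf{0}\\ \mathbf{0} & \bV^\adjoint \mathbf{D}\end{pmatrix}.
\]

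\textbf{Step 4 (final cancellation).} Left-multiplying by the block-diagonal matrix in the statement yields block entries $\mathbf{D}\overline{\bV}\bV^\top \overline{\mathbf{D}}$ and $\overline{\mathbf{D}}\bV\bV^\adjoint \mathbf{D}$. The lower block equals $\overline{\mathbf{D}}\mathbf{D}=\bI$ because $|z_k|=1$ and $\bV \bV^\adjoint = \bI$. For the upper block, conjugating $\bV\bV^\adjoint = \bI$ gives $\overline{\bV}\bV^\top = \bI$, so the upper block is $\mathbf{D}\overline{\mathbf{D}}=\bI$ as well.

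The only nontrivial point is Step 3: one must know that being in $O(\bF,\mathbb{C})\cap U(2N)$ rigidly determines the top block from the bottom block. This is exactly the content of \cref{lem:unitary_Bog} and is what makes the algorithmic strategy of zeroing out only the lower half of $\mathbf{W}$ actually suffice to factor the entire matrix.
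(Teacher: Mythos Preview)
Your proof is correct and follows the same approach as the paper's: use \cref{lem:lower_block} to identify the bottom block of $\mathbf{W}\mathbf{O}^\adjoint$, then invoke the rigid block structure of unitary orthogonal complex matrices (\cref{lem:unitary_Bog}) to recover the top block. The paper's proof is only a two-sentence sketch of exactly this argument, so your version simply fills in the details (notably the explicit verification in Step~1 that each factor of $\mathbf{O}$ lies in $O(\bF,\mathbb{C})\cap U(2N)$, and the final cancellation in Step~4).
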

\begin{proof}
    The discussion above yields the expression of the lower block.
    Since $\mathbf{O}$ and $\mathbf{W}$ are orthogonal complex matrices and unitary as well, the upper block is determined by the lower block in the light of \cref{def:orthogonal_complex}. 
\end{proof}
Furthermore, the $\mathbf{W}$ matrix of \cref{lem:full_from lower} yields a Bogoliubov transformation of the following factorized form
$$
\begin{pmatrix}
    \bm{b}^\adjoint\\
    \bm{b}
\end{pmatrix}
= 
\begin{pmatrix}
    \bV^\top\overline{\mathbf{D}} & \mathbf{0}\\
    \mathbf{0} & \bV^\adjoint \mathbf{D}
\end{pmatrix}
\mathbf{O}
\begin{pmatrix}
    \mathbf{a}^\adjoint\\
    \mathbf{a}
\end{pmatrix}  
$$
This factorization into two factors gives us a recipe to prepare the state $b_1^\adjoint\dots b_r^\adjoint\ket{\emptyset_b}$ by composing two transformations given in the following remarks.
\begin{remark}[particle number conserving transformation] 
    Define $\bQ = (\bV^\top\overline{\mathbf{D}})_{[r]:}$  as the matrix made of the $r$ first rows of $\bV^\top\overline{\mathbf{D}}$. 
    We can directly use the factorization given in \cref{s:circuit_proj_DPP} so that, for $k\in [r]$, we have
    $$
            \corr{\mathcal{U}(\bQ) a_k^\adjoint \mathcal{U}(\bQ)^\adjoint = \sum_{j=1}^{N}\bQ_{kj}
            a^{\adjoint}_j }
    $$
    where $\mathcal{U}(\bQ)$ realizes the product of Givens transformations factorizing $\bQ$.
\end{remark}
\begin{remark}[particle number non-conserving transformation]
    There is a unitary operator $\mathcal{U}_{\mathrm{ph}}(\mathbf{O}) : \mathbb{H}\to \mathbb{H}$ representing the multiplication by $\mathbf{O}$ 
    as follows
    \[
        \begin{pmatrix}
            \mathcal{U}_{\mathrm{ph}}\mathbf{a}^\adjoint\mathcal{U}_{\mathrm{ph}}^\adjoint \\
            \mathcal{U}_{\mathrm{ph}}\mathbf{a}\ \mathcal{U}_{\mathrm{ph}}^\adjoint
        \end{pmatrix}
        = 
        \mathbf{O}
        \begin{pmatrix}
            \mathbf{a}^{\adjoint} \\
            \mathbf{a}
        \end{pmatrix},
    \]
    where the action of $\mathcal{W}$ is entrywise on the vectors $\mathbf{a}^{\adjoint}$ and $\mathbf{a}$.
    By construction, we have
    \begin{align}
        \corr{\mathcal{U}_{\mathrm{ph}} = \mathcal{B}^{\mathtt{m}_1}\mathcal{M}_{1}\mathcal{B}^{\mathtt{m}_2}\mathcal{M}_{2}\mathcal{B}^{\mathtt{m}_3}\dots \mathcal{B}^{\mathtt{m}_{N-1}}\mathcal{M}_{N-1}\mathcal{B}^{\mathtt{m}_N},}\label{eq:W_Givens}
    \end{align}
    \corr{for some $(\mathtt{m}_1, \dots, \mathtt{m}_N) \in \{0,1\}^N$, and} 
    where $\mathcal{B}$ is such that $\mathcal{B}a_N^\adjoint\mathcal{B}^\adjoint = a_N$ while leaving all other creation-annihilation operators unchanged, while $\mathcal{M}_i$ is a composition of \corr{at most} $N-i$ Givens operators.
    Each Givens operator $\Gamma: \mathbb{H}\to \mathbb{H}$ simply represents multiplication by the Givens matrix $\bm{\Gamma}$ as follows
    \[
        \begin{pmatrix}
            \Gamma\mathbf{a}^\adjoint \Gamma^\adjoint\\
            \Gamma\mathbf{a} \Gamma^\adjoint
        \end{pmatrix} 
        = 
        \bm{\Gamma}\cdot
        \begin{pmatrix}
            \mathbf{a}^\adjoint\\
            \mathbf{a}
        \end{pmatrix}.
    \]
    Again, the Givens operators in \cref{eq:W_Givens} appear in reverse order compared with Givens matrices in \cref{eq:unitary_U_slater_ph}.
\end{remark}

\subsubsection{Quantum gates}
We now combine the two remarks above.
Let $\bV$ and $\mathbf{O}$ be given by \cref{lem:lower_block}.
For short, denote $\bQ = (\bV^\top\overline{\mathbf{D}})_{[r]:}$. The factorization of the fermionic Gaussian state \cref{e:fermionic_gaussian_state_W} as a composition of double Givens gates and $X$ gates reads 
\[
    b_1^\adjoint\dots b_r^\adjoint\ket{\emptyset_b} =  \mathcal{W}\cdot  a_1^{\adjoint}\dots a_r^{\adjoint}\ket{\emptyset_{a}}, \text{ with } \mathcal{W} = \corr{\mathcal{U}_{\mathrm{ph}}(\mathbf{O})\cdot \mathcal{U}(\bQ)}.
\]
This result is readily obtained by inserting several times the identity $\mathcal{W}^\adjoint\mathcal{W}$ in the above equation and by noting that $\ket{\emptyset_b} = \mathcal{W}\ket{\emptyset_{a}}$.


%

%
\section{Numerical experiments}
\label{s:experiments}
In this section, we demonstrate the circuits discussed in \cref{s:quantum_circuits} on a Qiskit simulator \citep{Qiskit}, and on a few 5-qubit IBMQ machines \citep{IBMQ}.
Python code to reproduce the experiments in available on GitHub.\footnote{\url{https://github.com/For-a-few-DPPs-more/quantum-sampling-DPPs}}


\subsection{Projection DPPs}\label{s:proj_dpp_exp}

\corr{As discussed in \cref{s:dilation_argument}, any DPP can be sampled by constructing a projection DPP on a twice larger ground set; we thus focus here on the simulation of projection DPPs.}
\corr{We take $N=5$ and we} consider a projection DPP on $[N] = \{1, 2, 3, 4, 5\}$. 
The marginal kernel is $\bK=\bQ_{:,[k]}\bQ_{:,[k]}^\top$ \corr{with $k=3$}, where $\bQ$ is a matrix with orthonormal columns obtained by the Householder QR decomposition of a realization of an $N\times N$ random matrix full of independent and identically distributed real Gaussians.
\corr{Almost surely, $\bK$ is thus a rank-$3$ projector, and all the following statements are conditionally on the kernel.}
In particular, the corresponding DPP generates samples of $3$ points with probability one.

\corr{Probabilities of subsets are depicted in the blue histogram of \cref{f:results}.}
\corr{For completeness}, we display $\bK$ in Figure~\ref{f:dpp1_kernel} \corr{where labels of entries\footnote{\corr{In \textsc{python}, the first entry of an array has label zero.}} range from $0$ to $4$ to match Qiskit convention.
Note that it is not obvious from this figure how to guess the rank of the kernel or whether it is projective.
Nonetheless, we can gain some other insights about the DPP by inspection of the color map.
We see for example that item $\{1\}$ has a large inclusion probability $\bK_{1,1}$; see the entry with label $(0,0)$ in dark red in Figure~\ref{f:dpp1_kernel}.
Furthermore, by considering the entries with labels $(0,1)$ and $(0,2)$, we see that $|\bK_{1,2}|$ and $|\bK_{1,3}|$ have small values compared e.g.\ with $|\bK_{1,4}|$; see the darker red entry with label $(0,3)$.
This intuitively indicates that the item with label $0$ and the item with label $3$ have a large ``similarity'' w.r.t.\ this kernel and are not likely to be jointly sampled.
Accordingly, we observe in the histogram in \cref{f:results} that the set with labels $\{0,1,2\}$ has a larger probability to be sampled than the set with labels $\{0,1,3\}$.}

Following the circuit construction of \cite{JSKSB18} with 2-qubit gates acting only on neighbouring qubits on a line, we obtain the circuit shown in Figure~\ref{f:dpp_circuit}, where each gate labeled as ``XX+YY" is a Givens gate, i.e., a sequence of CNOT and $Z$ gates as discussed in \cref{s:gates_details}. 
\corr{The circuit of Figure~\ref{f:dpp_circuit} starts on the left side by three $X$ gates which excite three fermions on the Fock vacuum, namely $a_1^* a_2^* a_3^* \ket{0}$.
Next, Givens rotations are applied to this state to yield $b_1^* b_2^* b_3^* \ket{0}$ as a result of \cref{p:jiang}.
The parameters of the corresponding gates are obtained by the QR decomposition of \cref{s:parallel_QR} as implemented in Qiskit.}

Now, before it can be run on a particular machine, the circuit needs to be transpiled, i.e.\ written as an equivalent sequence of gates that correspond to what can be physically implemented on the machine. 
We show in Figure~\ref{f:calibration_data} the calibration data for three $5$-qubit IBMQ machines: \ibm{lima}, \ibm{quito}, and \ibm{manila}.
This calibration takes the form of a graph, where nodes represent qubits, and edges represent the possibility of applying a CNOT gate, the only two-qubit gate in our original circuit in \cref{f:dpp_circuit}.
While \ibm{manila} can implement CNOT gates between neighbouring qubits on a line, as implicitly assumed in the construction of \cite{JSKSB18}, the two other machines have a T-structured graph that will force the transpiled circuit to look quite different from \cref{f:dpp_circuit}.
Indeed, we show the transpiled circuits in the first three panels of \cref{f:transpiled_dpp_circuits}.
Note how \ibm{manila} uses CNOT gates between neighbouring qubits, as in the original circuit, resulting in overall fewer CNOT gates than on \ibm{quito} and \ibm{lima}. 
The latter two machines cannot afford CNOT gates jointly acting on qubits 2 and 3, for instance, and end up compensating for this by using more CNOT gates in total. 
Since CNOT gates are among the most error-prone manipulations, minimizing the number of CNOT gates is an important feature.
Intuitively, had we known in advance that we would run the circuit on a machine with a particular graph, we should have designed the circuit in \cref{f:dpp_circuit} differently, by rather running parallel QR with Givens rotations only along edges of the machine's graph. 

Before we observe the results, we note that the three machines come with different characteristics.
For instance, \ibm{manila} has overall the lowest readout errors, and \ibm{lima} the largest.
It is common to summarize the characteristics of a machine in a single number $2^{d_\text{max}}$, called \emph{quantum volume}, where $d_\text{max}$ is --loosely speaking-- the depth of a square circuit that we can expect to run reliably. 
The volume reported by IBM is obtained numerically, using a procedure known as randomized benchmarking; see Appendix~\ref{s:errors_in_quantum_computers}. 
The machines \ibm{manila}, \ibm{quito}, and \ibm{lima} respectively have reported volumes 32, 16, and 8.
Thus, even on \ibm{manila}, the transpiled circuit is much larger than the ``guaranteed" $5\times 5$ square circuit, and we should expect noise in our measurements, as we shall now see.

\begin{figure}
    \centering
    \begin{subfigure}{0.48\textwidth}
        \includegraphics[width=\textwidth]{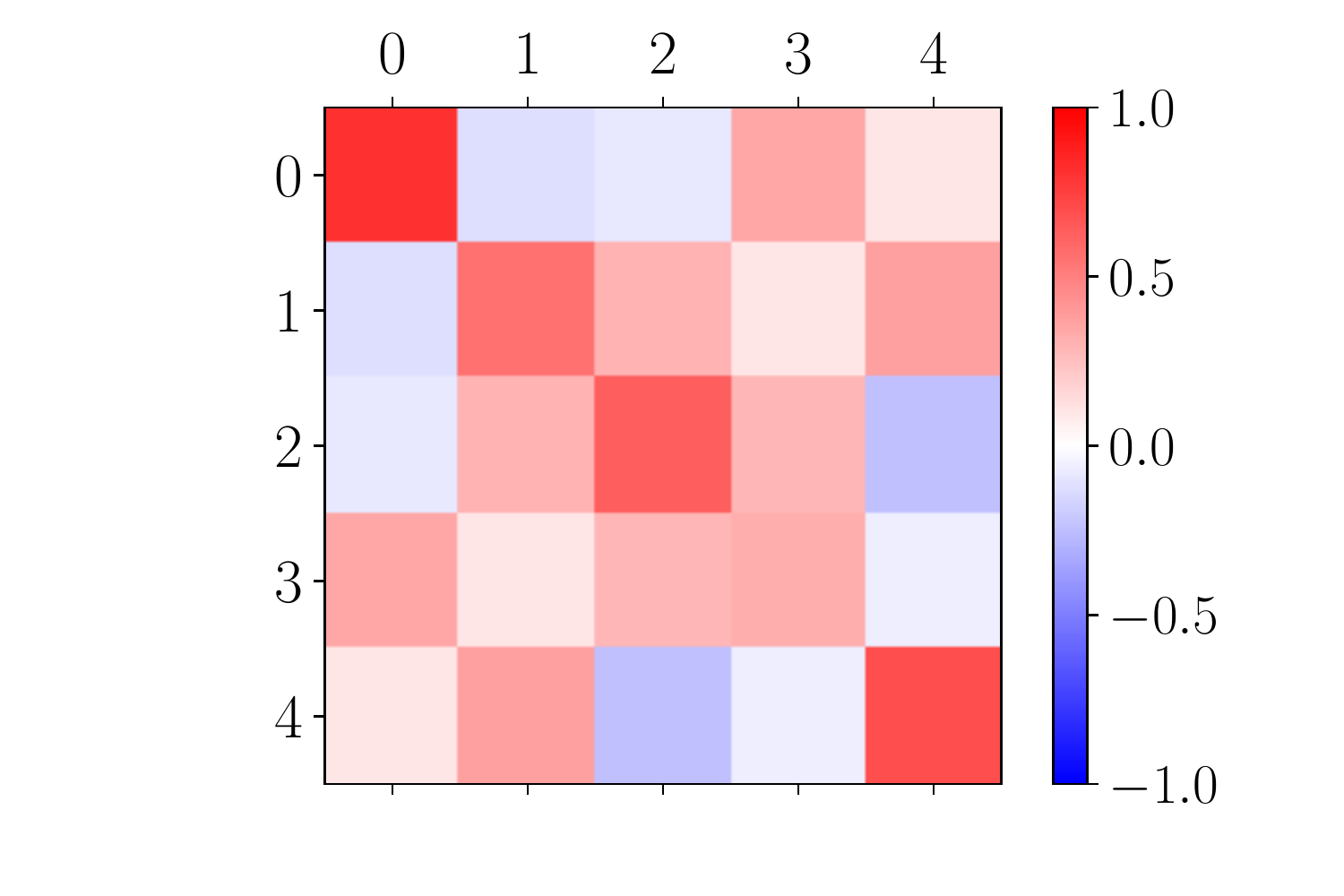}
        \caption{DPP kernel}
        \label{f:dpp1_kernel}
    \end{subfigure}
    \begin{subfigure}{0.48\textwidth}
        \includegraphics[width=\textwidth]{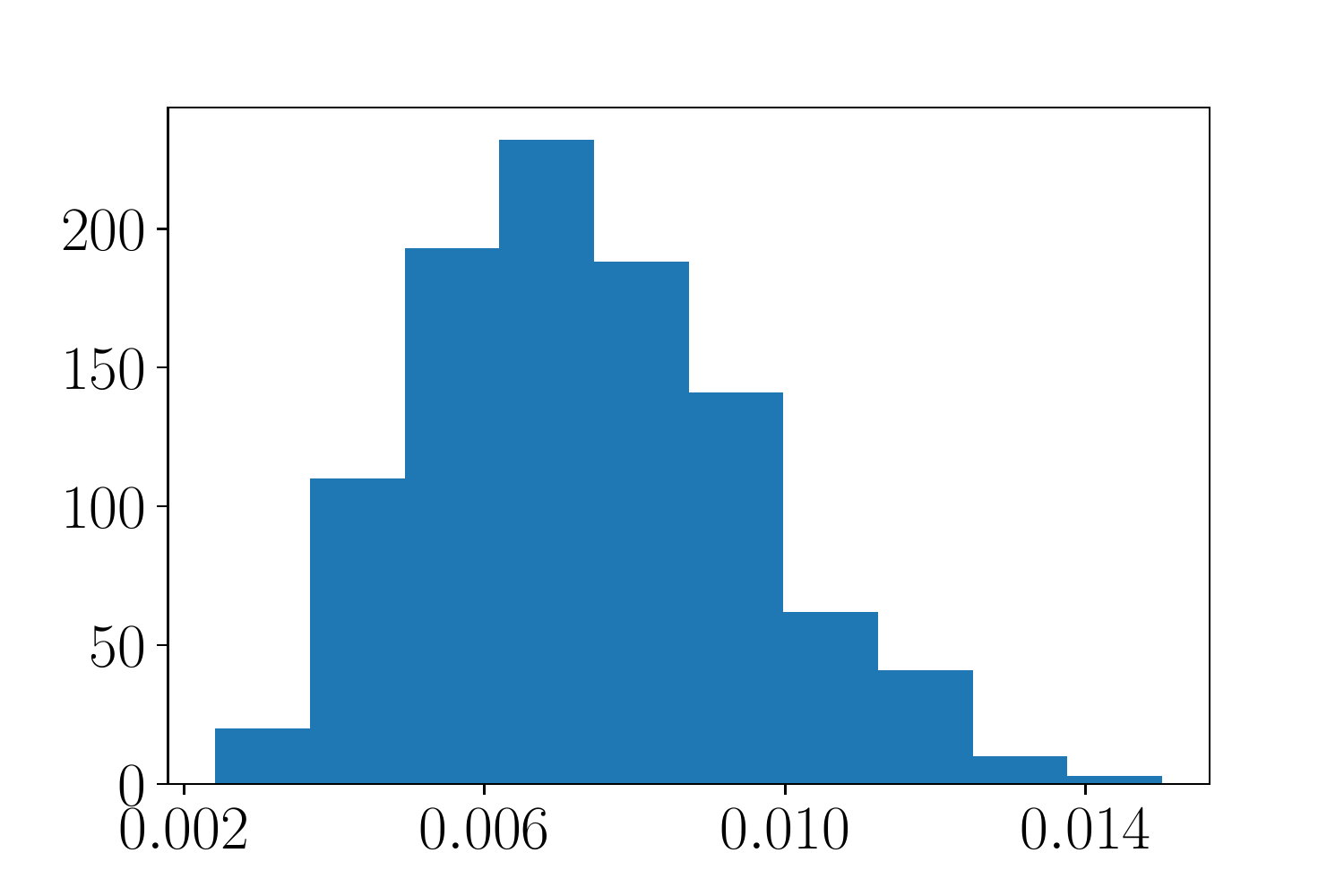}
        \caption{TV distance under the null}
        \label{f:dpp1_tv}
    \end{subfigure}
    \caption{\corr{Illustration of the DPP kernel matrix associated with the circuit in \cref{f:dpp_circuit} (left-hand side) and histogram of the total variation distance between the distribution of the simulated DPP and the exactly known distribution (right-hand side).}}
\end{figure}

\begin{figure}
    \centering
    \begin{subfigure}{0.235\textwidth}
        \includegraphics[width=\textwidth]{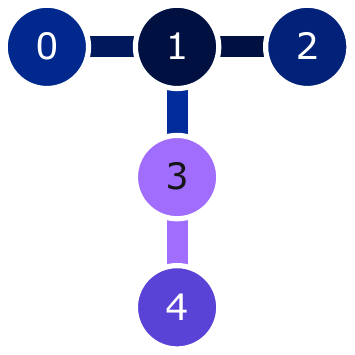}
        \caption{\ibm{lima}}
        \label{f:calibration_data_lima}
    \end{subfigure}
    \hfill
    \begin{subfigure}{0.235\textwidth}
        \includegraphics[width=\textwidth]{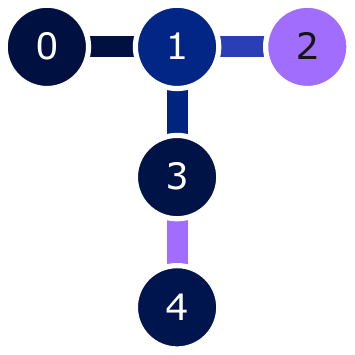}
        \caption{\ibm{quito}}
    \end{subfigure}
    \hfill
    \begin{subfigure}{0.4\textwidth}
        \includegraphics[width=\textwidth]{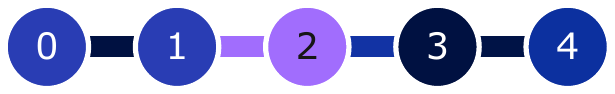}
        \caption{\ibm{manila}\label{f:calibration_data_manila}}
    \end{subfigure}
    \caption{Calibration data for three IBMQ machines: \ibm{lima}, \ibm{quito}, and \ibm{manila}. 
    Dark blue is low, light violet is high, but colors are not comparable across subfigures, although the orders of magnitude are similar. 
    Node colors indicate readout errors (all of the order of $10^{-2}$), edge colors indicate CNOT error, of the same $10^{-2}$ order of magnitude for \ibm{lima} and \ibm{quito}, but going down to $5\times 10^{-3}$ for \ibm{manila}. 
    All machines have 5 qubits, but their volumes are respectively 8, 16, and 32.\label{f:calibration_data}}
\end{figure}

\begin{figure}
    \centering
    \begin{subfigure}{0.24\textwidth}
        \includegraphics[angle=90,origin=c,height=10cm]{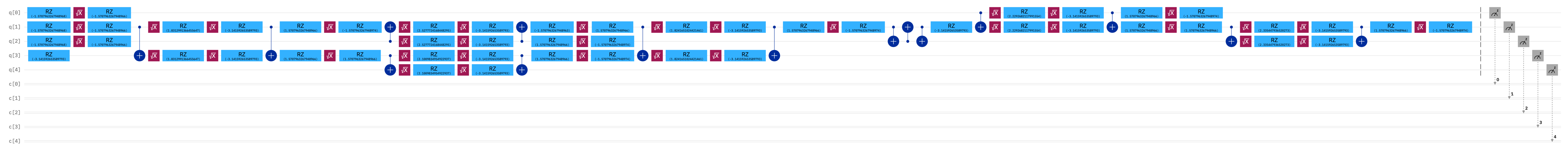}
        \caption{\ibm{lima}\label{f:transpiled_dpp_circuits_lima}}
    \end{subfigure}
    \begin{subfigure}{0.24\textwidth}
        \includegraphics[angle=90,origin=c,height=10cm]{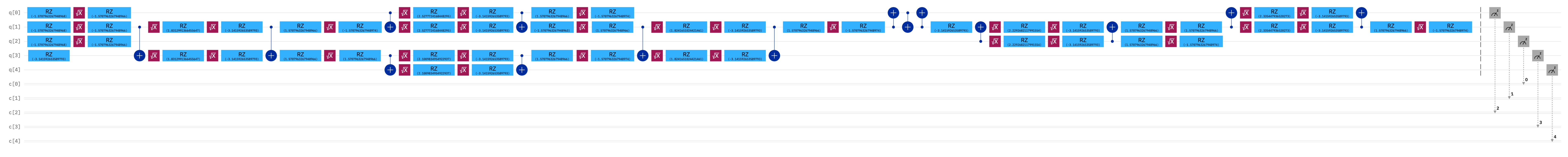}
        \caption{\ibm{quito}\label{f:transpiled_dpp_circuits_quito}}
    \end{subfigure}
    \begin{subfigure}{0.24\textwidth}
        \includegraphics[angle=90,origin=c,height=10.2cm]{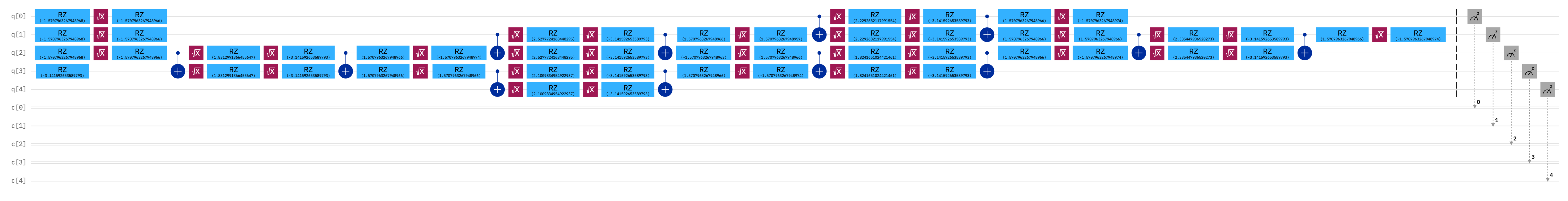}
        \caption{\ibm{manila}\label{f:transpiled_dpp_circuits_manila}}
    \end{subfigure}
    \begin{subfigure}{0.24\textwidth}
        \includegraphics[angle=90, height=18cm]{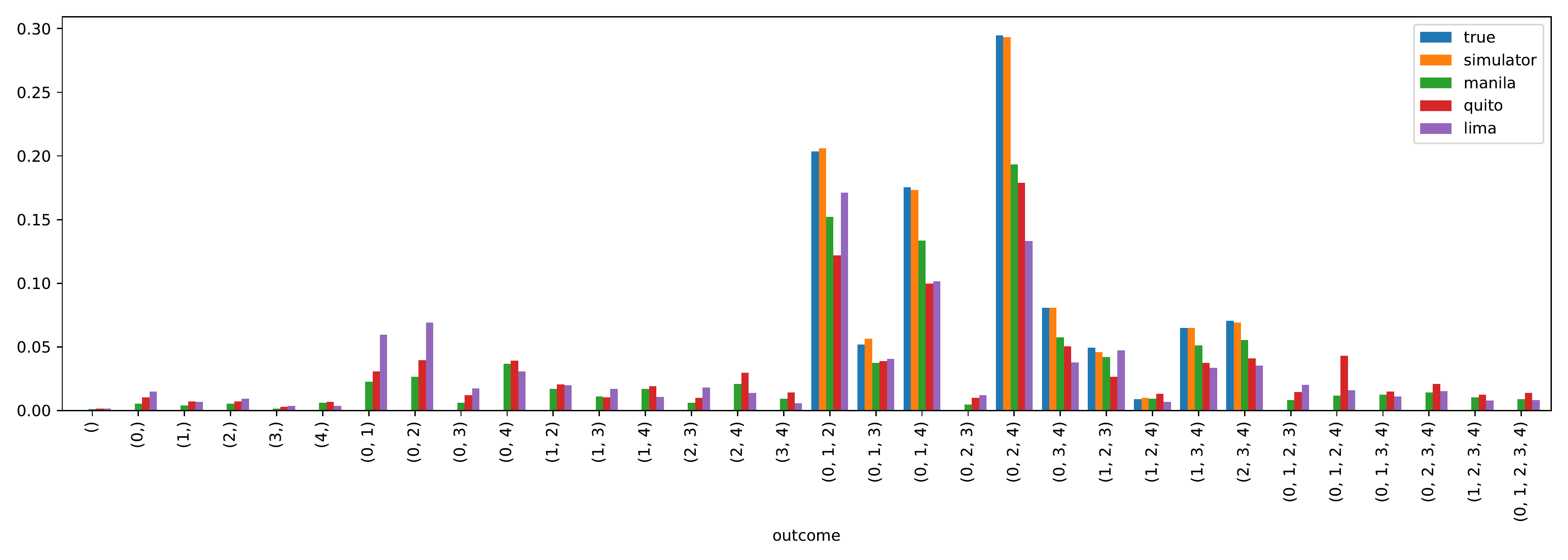}  
        \caption{Results}
        \label{f:results}
    \end{subfigure}
    \caption{Transpiled circuits corresponding to the input in Figure~\ref{f:dpp_circuit} for three IBMQ machines: \ibm{lima}, \ibm{quito}, and \ibm{manila}.
    \cref{f:results} shows the corresponding empirical distribution.}
    \label{f:transpiled_dpp_circuits}
\end{figure}

\cref{f:results} shows the empirical distributions corresponding to independently preparing the input and measuring the output of the transpiled circuits in \cref{f:transpiled_dpp_circuits}, $20,000$ times each.
In blue and orange, we show for reference the probability under DPP($\bK$) of the corresponding subset, as well as the empirical frequencies coming from sampling the output of the circuit using a simulator, which amounts to independently drawing $20,000$ samples of the DPP on a classical computer.

The empirical measure of the classical samples in orange is close to the underlying DPP, as testified by the total variation (TV) distance\footnote{Recall that the TV is the maximum absolute difference between the probabilities assigned to a subset, where the maximum is taken across subsets.} between the two distributions, which is below $10^{-2}$. 
Actually, if we repeatedly and independently draw sets of $20,000$ samples, we obtain the empirical distribution of the TV distance shown in Figure~\ref{f:dpp1_tv}. 
In contrast, the TV distance between the empirical measure obtained from $20,000$ runs on \ibm{manila} of the corresponding transpiled circuit is $0.27$, while it is $0.39$ and $0.40$ for the T-structured machines \ibm{lima} and \ibm{quito}, respectively. 
Looking at \cref{f:dpp1_tv} again, a test based on the estimated TV distance would easily reject even the hypothesis that at least one of the quantum circuits samples from the correct distribution. 
Moreover, we confirm (not shown) that the difference in TV between \ibm{manila} and its competitors is significant, which confirms the intuition coming from its larger volume and smaller number of CNOT gates, due to a QR decomposition adapted to its qubit communication graph.
Finally, although a test would reject that the quantum circuits sample from the correct DPP, the resulting distributions are still close to the target DPP, especially for \ibm{manila}, as confirmed by their respective TV.
Interestingly, the quantum circuits actually yield point processes that are supported on (almost) all subsets of $\{1, \dots, 5\}$, while the target DPP, being of projection, only charges subsets of cardinality 3.
The noise does respect the structure of the DPP, somehow: two-item subsets that (wrongly) appear in the support of the empirical measures correspond to items that are marginally favoured by the DPP, as can be seen on the diagonal of the kernel in \cref{f:dpp1_kernel}. 
Intuitively, the appearance of a subset of cardinality $2$ is partly due to readout error on one of the qubits supposed to output $1$. 
This intuition is confirmed by the calibration data: for \ibm{lima}, for instance, \cref{f:calibration_data_lima} shows that the qubit labeled `3' has a large readout error; simultaneously, there is a deficit of appearance of the triplet with labels $\{0,3,4\}$ in its empirical distribution in Figure~\ref{f:results}, while $\{0,4\}$ wrongly appears.

\paragraph*{Optimal QR for a T-structured communication graph.}
The QR-inspired fermionic circuits implemented in Qiskit follow \citep{JSKSB18}, and thus use Givens rotations between neighbouring columns.
While this suits the calibration data of \ibm{manila}, which has a linear qubit-communication graph, \ibm{lima} and \ibm{quito} rather have a communication graph shaped as a $T$.
As a result, transpilation is less straighforward and yields a bigger circuit than for \ibm{manila}.
In particular, the transpiled \ibm{quito} circuit in \cref{f:transpiled_dpp_circuits_quito} has 15 CNOT gates, while both the original circuit in \cref{f:dpp_circuit} and the transpiled \ibm{manila} circuit have only $12$ CNOT gates.

As we discussed in \cref{s:quantum_circuits}, a shorter (and less error-prone) transpiled circuit would intuitively result from a QR decomposition that respects the qubit-communication graph. 
For concreteness, we give a QR decomposition that is better suited to the communication graph of \ibm{quito}, with the mapping to the columns of $\mathbf{Q}$ given in \cref{f:col_to_qubit_T_structured}.
Assuming no preprocessing, we find
\begin{align}
    \bQ 
    &\rightarrow 
    \begin{pmatrix}
        * & * & \bzero&  *  & \bzero \\
        * & * & * & * & *  \\
        * & * & * & * & *  \\
    \end{pmatrix}
    \rightarrow  
    \begin{pmatrix}
        * & * & 0&  \bzero  & 0 \\
        * & * & * & * & *  \\
        * & * & * & * & *  \\
    \end{pmatrix}\nonumber\\
    &\rightarrow 
    \begin{pmatrix}
        \underline{\lambda_1} & \bzero & 0&  0  & 0 \\
        \underline{0} & * & * & * & \bzero  \\
        \underline{0} & * & * & * & *  \\
    \end{pmatrix}
    \rightarrow 
    \begin{pmatrix}
        \lambda_1 & 0 & 0&  0  & 0 \\
        0 & * & * & \bzero & 0  \\
        0 & * & * & * & *  \\
    \end{pmatrix}\nonumber\\
    &\rightarrow 
    \begin{pmatrix}
        \lambda_1 & 0 & 0&  0  & 0 \\
        0 & \bzero & \underline{\lambda_2} & 0 & 0  \\
        0 & * & \underline{0} & * & \bzero  \\
    \end{pmatrix}
    \rightarrow 
    \begin{pmatrix}
        \lambda_1 & 0 & 0&  0  & 0 \\
        0 & 0 & \lambda_2 & 0 & 0  \\
        0 & \underline{\lambda_3} & 0 & \bzero & 0  \\
    \end{pmatrix}.\label{e:QR_for_T_graph}
\end{align}
While not necessary optimal in any sense, our guiding principle for the decomposition \cref{e:QR_for_T_graph} is to fill the matrix with zeros such that, for each row, the final complex phase appears at a node which, if removed from the graph, leaves the resulting graph connected.
Note that to fall back onto a ``diagonal" matrix, although this last step is unnecessary for DPP sampling, a final permutation between the second and third columns can be realized by an extra Givens gate with $\theta=0$ and $\phi = \pi/2$ (i.e., a so-called ISwap gate) between qubit $1$ and qubit $2$.
Overall, the sequence of Givens rotations corresponding to \cref{e:QR_for_T_graph} can be transpiled on \ibm{quito} or \ibm{lima} with only the expected 2-per-rotation CNOT gates, since all rotations are applied to neighbours in the graph.
We leave the characterization and benchmarking of the optimal QR decomposition for a given qubit communication graph for future work.
\begin{figure}[h]
    \centering
    \begin{tikzpicture}
        \node[shape=circle,draw=black] (A) at (0,0) {0};
        \node[shape=circle,draw=black] (B) at (1,0) {1};
        \node[shape=circle,draw=black] (C) at (2,0) {2};
        \node[shape=circle,draw=black] (D) at (3,0) {3};
        \node[shape=circle,draw=black] (E) at (4,0) {4};
        \path [-] (A) edge (B);
        \path [-] (B) edge (C);
        \path [-] (D) edge (E);
        \draw[-] (B) to [bend left=40] (D);
        \node (A1) at (0,-1.1) {$\begin{pmatrix}
            * \\
            *  \\
            *
        \end{pmatrix}$};
        \node (B1) at (1,-1.1) {$\begin{pmatrix}
            * \\
            *  \\
            *
        \end{pmatrix}$};
        \node (C1) at (2,-1.1) {$\begin{pmatrix}
            * \\
            *  \\
            *
        \end{pmatrix}$};
        \node (D1) at (3,-1.1) {$\begin{pmatrix}
            * \\
            *  \\
            *
        \end{pmatrix}$};
        \node (E1) at (4,-1.1) {$\begin{pmatrix}
            * \\
            *  \\
            *
        \end{pmatrix}$};
        \node (A2) at (0,-2.1) {$\mathbf{Q}_{:,1}$};
        \node (B2) at (1,-2.1) {$\mathbf{Q}_{:,2}$};
        \node (C2) at (2,-2.1) {$\mathbf{Q}_{:,3}$};
        \node (D2) at (3,-2.1) {$\mathbf{Q}_{:,4}$};
        \node (E2) at (4,-2.1) {$\mathbf{Q}_{:,5}$};
        \node[right] (F1) at (5,-1.1) {columns of $\mathbf{Q}$};
        \node[right] (G1) at (5,0) {qubit communication graph};
    \end{tikzpicture}
\caption{Allowed CNOT gates between qubits in the T-structured communication graph of \ibm{lima} or \ibm{quito}, for a map sending column $1$ $\mapsto$ qubit $0$, column $2$ $\mapsto$ qubit $1$, etc.
\label{f:col_to_qubit_T_structured}}
\end{figure}
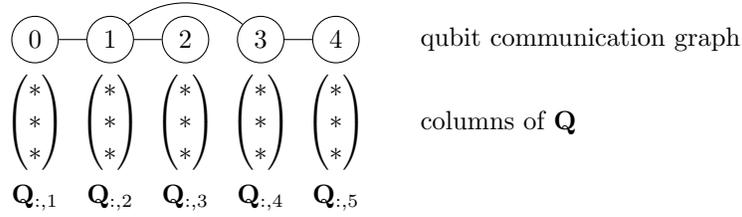
\subsection{Pfaffian PPs}
In this section, we illustrate the second step of the algorithm of \cref{s:quantum_circuit_PfPPs} to sample PfPPs of the type described by \citet{koshida21} and associated with a pure state of the form $\prod_{i\in \mathcal{C}}b_i^\adjoint\ket{\emptyset_b}$, namely, for which the matrix $\bS$ is projective as explained in \cref{rem:projective_S_for_PfPP}.

To construct $\bS$, we consider the quadratic form $\mathbf{H}_{\mathrm{BdG}}$ given by the block matrix in \cref{e:H_BdG} with $N = 5$ with the Hermitian and skew-symmetric part given respectively by
\[
    \mathbf{M}
    =
    \begin{pmatrix}
        1 & 0.5& 0.2& 0.2& 0.2\\
        0.5& 1& 0.5& 0.2& 0.2\\
        0.2& 0.5& 1& 0.5& 0.2\\
        0.2& 0.2& 0.5& 1& 0.5\\
        0.2& 0.2& 0.2& 0.5& 1
    \end{pmatrix}
    \text{ and }
    \bm{\Delta} 
    =
    \begin{pmatrix}
        0& 1& 0& 0& 0\\
        -1& 0& 1& 0& 0\\
        0& -1& 0& 1& 0\\
        0& 0& -1& 0& 1\\
        0& 0& 0& -1& 0
    \end{pmatrix}.
\]
\corr{Let $0\leq \epsilon_1\leq \dots\leq  \epsilon_N$ be eigenvalues of $H_{\mathrm{BdG}}$ determined by the eigendecomposition \cref{e:eigen_decomp_HBDG} and let $\mathbf{W}$ be the orthogonal complex matrix associated with the diagonalization.}
Next, we select the subset $\mathcal{C}$ of the $3$ smallest positive eigenvalues of $\mathbf{H}_{\mathrm{BdG}}$.
Let $\mathbb{K}$ be the kernel associated with the projection
\begin{equation}
    \mathbf{S} = \overline{\mathbf{W}}^{\adjoint}
    \begin{pmatrix}
         \mathbf{1}_{\bar{\mathcal{C}}}& \mathbf{0}\\
        \mathbf{0} & \mathbf{1}_{\mathcal{C}}
    \end{pmatrix}
    \overline{\mathbf{W}} \text{ with } \mathcal{C}= \{1,2,3\},\label{eq:S_for_exp}
\end{equation}
by virtue of \cref{prop:existence_Koshida}.
The Hamiltonian quadratic form and the associated Pfaffian kernel are displayed in \cref{f:M_Delta}, whereas the corresponding circuit is displayed in \cref{fig:circuit_pfpp}.
The probability mass function of $Y\sim \pfPP(\mathbb{K})$ has the following simple expression:
\begin{align}
    \mathbb{P}(Y = S)= (-1)^{|\overline{S}|}
    \pf(\mathbb{K} - \mathbb{J}_{\overline{S}}),\label{e:pf_L_ensemble}
\end{align}
where $(\mathbb{J}_{\overline{S}})_{ij} = \delta_{ij}1(i\in\overline{S})\left(\begin{smallmatrix}
    0 & 1\\
    -1 & 0
\end{smallmatrix}\right)$, with $\overline{S}$ being the complement of $S$ in $[N]$; we refer to \cref{a:proof_pmf} for a short proof.
\begin{figure}[h]
    \centering
    \begin{subfigure}[b]{0.45\textwidth}
        \includegraphics[width=\textwidth]{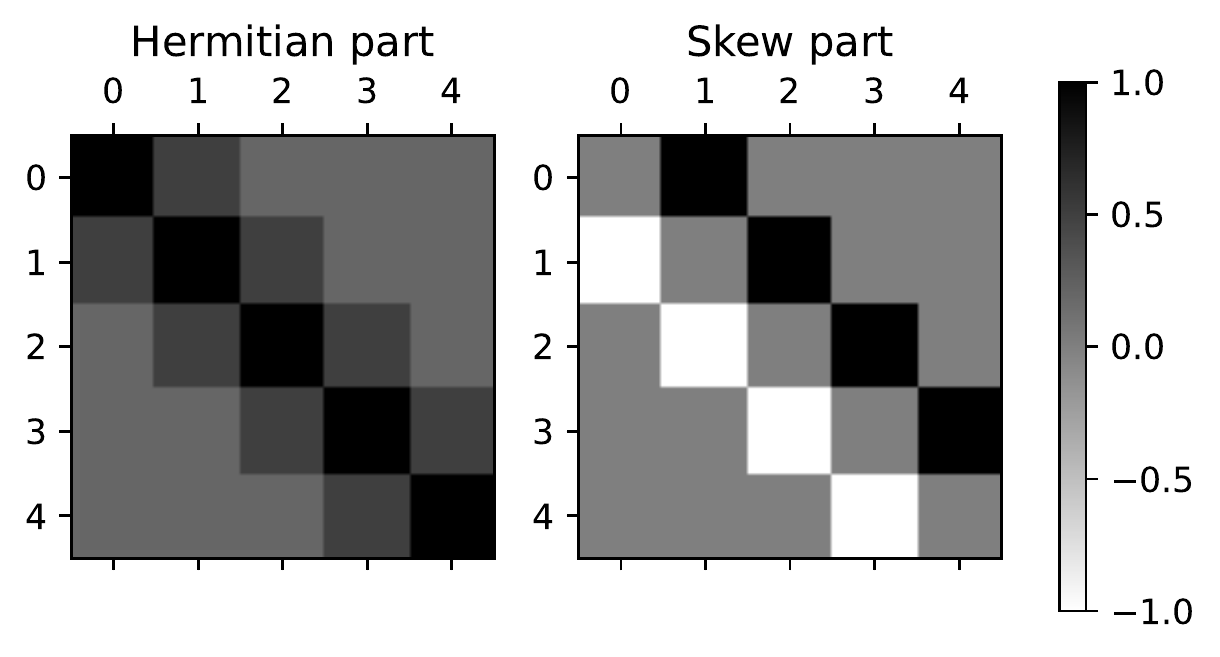}
        \caption{$\mathbf{M}$ and $\bm{\Delta}$.}
    \end{subfigure}
    \hfill
    \begin{subfigure}[b]{0.45\textwidth}
    \includegraphics[width=\textwidth]{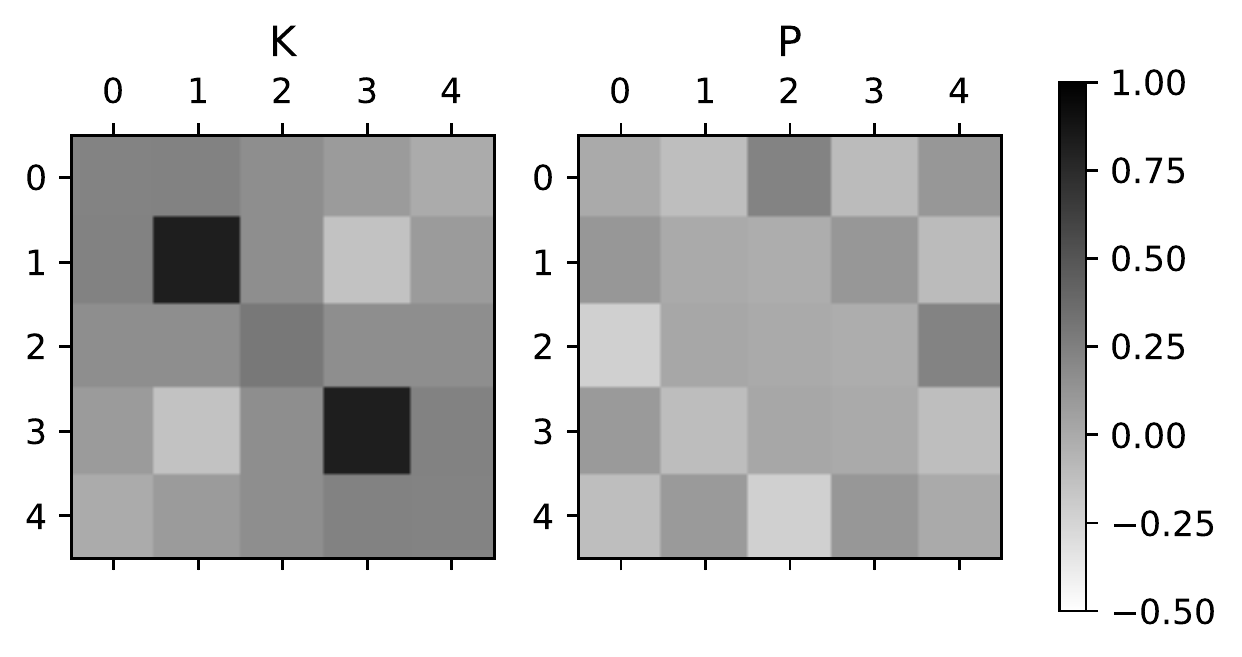}
    \caption{$\bK$ and $\mathbf{P}$.\label{f:K_P_pfPP}}
    \end{subfigure}
    \caption{Visualization of the Hamiltonian $\mathbf{H}_{\mathrm{BdG}}$ and the components defining the Pfaffian kernel $\mathbb{K}$ \corr{given in \cref{eq:Pfaffian_kernel} as a function of $\mathbf{K}$ and $\mathbf{P}$.}.}
    \label{f:M_Delta}
\end{figure}
\begin{figure}[h]
    \centering
    \includegraphics[width=\textwidth]{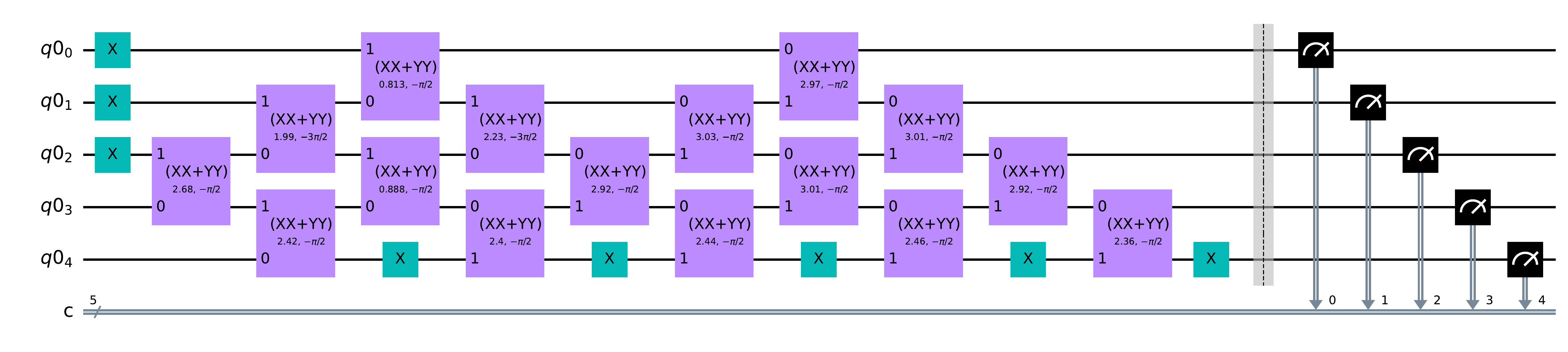}
    \caption{\corr{Circuit corresponding to $\pfPP(\pfK)$ associated to the projection matrix \cref{eq:S_for_exp}. 
    The parity of the number of $X$ gates determines the parity of the number of points in samples of $\pfPP(\pfK)$.}\label{fig:circuit_pfpp}}
\end{figure}
For these numerical experiments, we only compare (classical) simulations of the quantum circuits to the true point process.
As in \cref{s:proj_dpp_exp}, $20,000$ samples are drawn independently from $Y\sim\pfPP(\pfK)$ thanks to the Qiskit simulator.
Next, the empirical frequencies of subsets of $[N]$ are computed and compared with the probabilities \cref{e:pf_L_ensemble}.

In \cref{f:results_pfpp}, we observe that the empirical probabilities match the expected case, with a TV distance of $0.009$.
Furthermore, it is manifest from \cref{f:K_P_pfPP} that there is a weak repulsion between items $2$ and item $4$ -- see the entry $(1,3)$ in the grayscale matrix -- and that each of these two elements has a large marginal probability.
Hence, they can be expected to be sampled together.
This is confirmed in \cref{f:results_pfpp} where the subset $\{2,4\}$ (corresponding to the label $(1,3)$ \corr{in the histogram}) corresponds to a large mass under the empirical measure.
Also, note that all subsets naturally have the same parity.
\begin{figure}[h]
    \centering
    \includegraphics[width=0.8\textwidth]{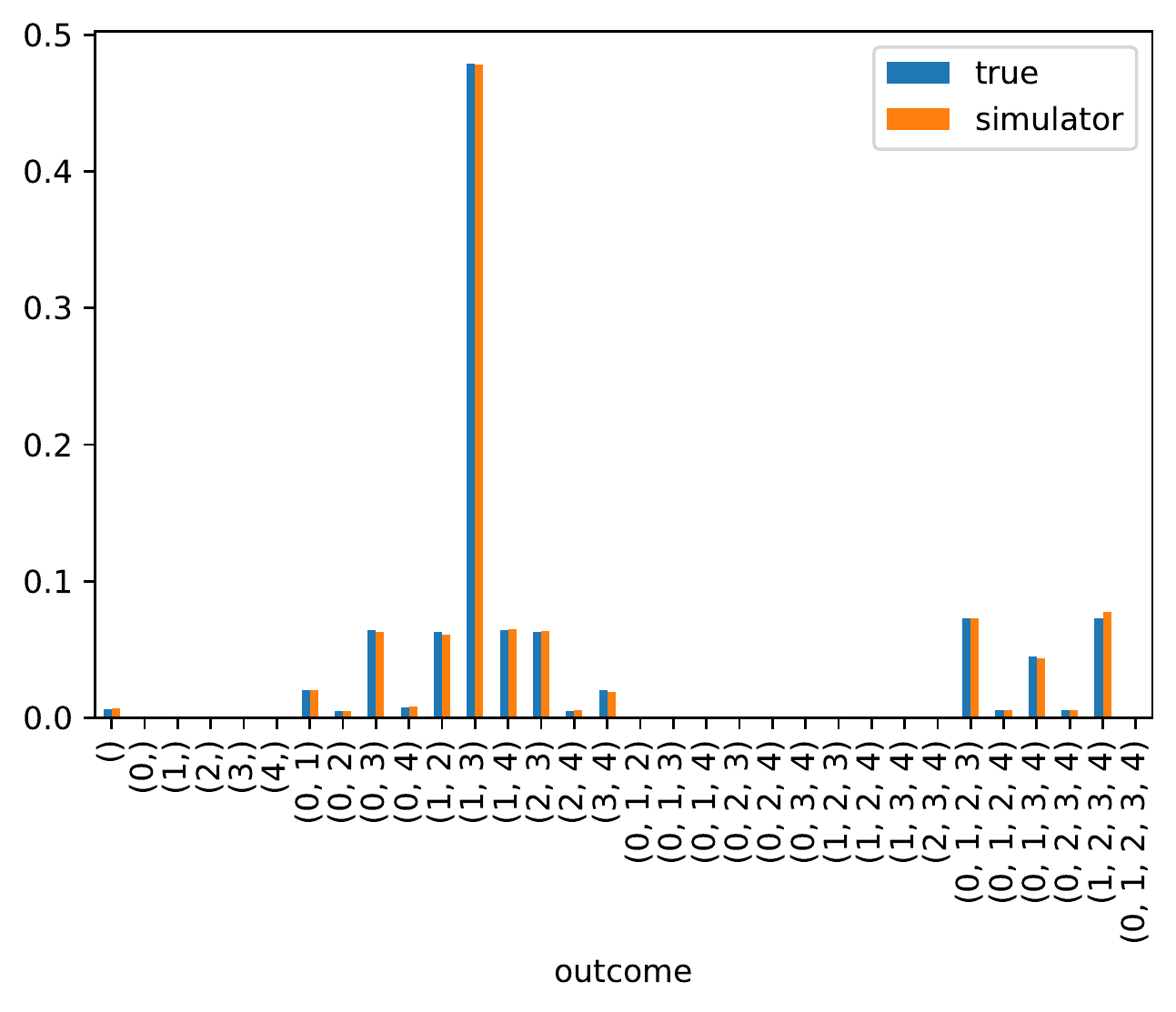}
    \caption{Histogram of the empirical frequencies of the subsets sampled by $\pfPP(\pfK)$ \corr{associated to the projection matrix \cref{eq:S_for_exp}} \emph{versus} subset probabilities \corr{as given by the Pfaffian formula} \cref{e:pf_L_ensemble}.}
    \label{f:results_pfpp}
\end{figure}
For completeness, the histogram of the total variation distance between the ground truth distribution and the estimated distribution, computed over $1000$ runs, is displayed in \cref{f:tv_pfpp}.
\begin{figure}[h]
    \centering
    \includegraphics[width=0.4\textwidth]{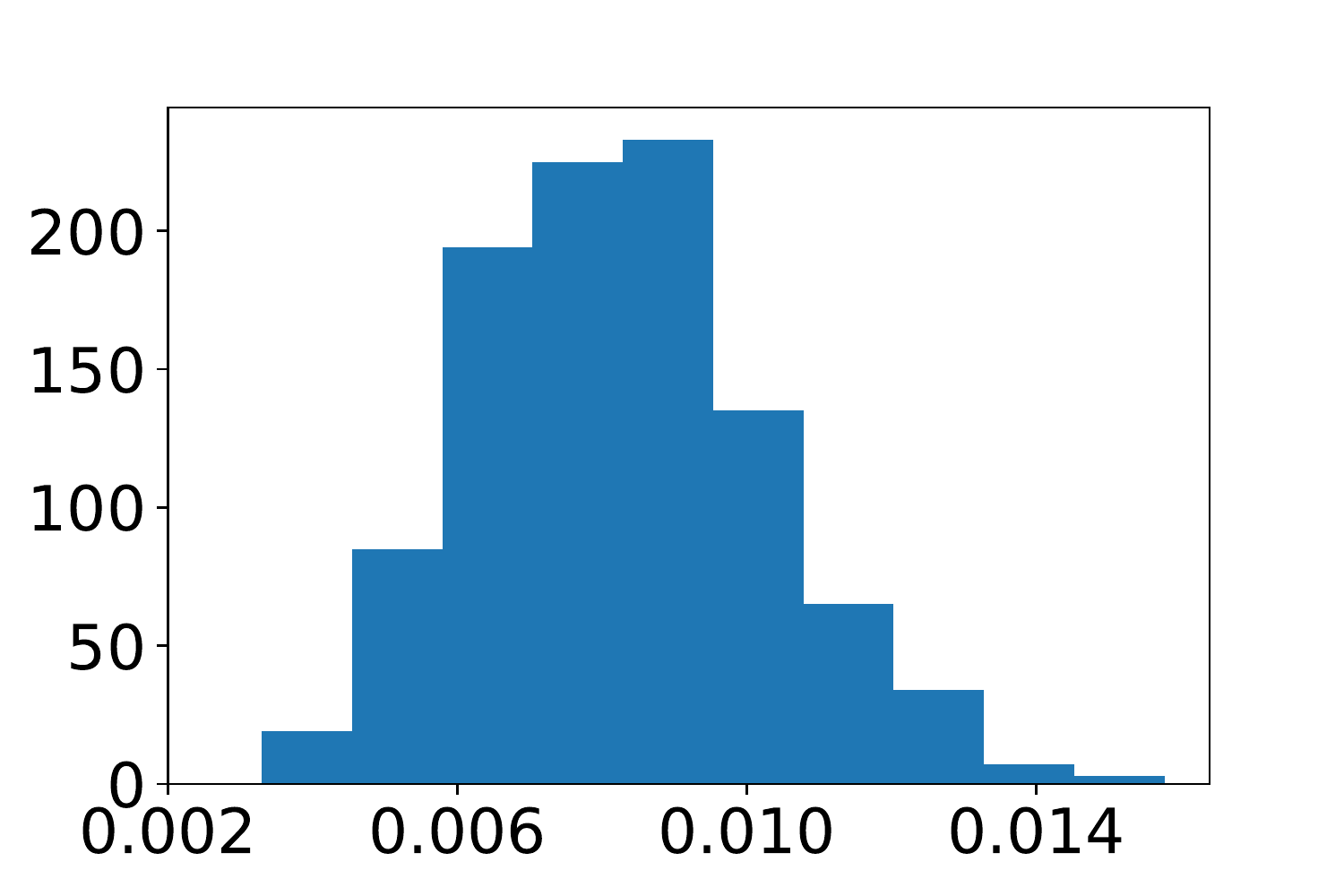}
    \caption{Empirical distribution of the total variation distance between the true probability mass function of t$\pfPP(\pfK)$ \corr{associated to the projection matrix  \cref{eq:S_for_exp}} and its simulation.}
    \label{f:tv_pfpp}
\end{figure}


\section{Discussion}
\label{s:discussion}
Inspired by the pioneering work of \citet{Mac72} on point processes in quantum physics, we have studied quantum algorithms for DPP sampling.
We did so by reducing DPP sampling to a fermion sampling problem, and then leveraging and modifying recent algorithms for fermion sampling \citep{WHWCNT15,JSKSB18}.
While many of the steps are either common lore in one or the other field, or recently published material, we believe that there is value in a self-contained survey of how to reduce a finite DPP to a fermion sampling problem, all the way from the mathematical object to the implementation on the quantum machine. 
We hope that this paper can help spark further cross-disciplinary work. 
Moreover, writing down all the steps from a DPP as specified in machine learning to its quantum implementation has allowed us to make contributions on top of the survey, like the extension of the argument to a class of Pfaffian PPs, and the first steps in adapting the QR-decomposition behind fermion sampling algorithms to qubit communication constraints.
This opens several research avenues.




First, as mentioned in the introduction, projective DPPs can also be sampled thanks to the Clifford loaders defined in \citep{KerPra22}, introduced independently of the physics literature that we cover in this paper.
Yet the structure of the arguments is related, and it would be enlightening to explicitly compare, on the one hand, the parallel implementation we give in \cref{s:parallel_QR} with a circuit depth logarithmic in $N$ and, on the other hand, data loaders of \citet{KerPra22} with a similar depth.

Second and in the same spirit, an interesting extension of this work would be to develop an algorithm optimally matching any qubit communication graph to a QR decomposition scheme, where by \emph{optimal} we mean minimizing e.g.\ the total variation distance between the output of the circuit and the original point process. 
This would generalize the case of the $T$ graph described in \cref{s:proj_dpp_exp}, and lead to transpilers with fewer noisy gates.
A potential strategy would be to follow the approach of \cite{FreBru19}, who optimize a sparsity-inducing objective to approximate a unitary matrix as a product of Givens rotations.

Third, a natural improvement of the proposed classical preprocessing followed by circuit construction would be the inclusion of the kernel diagonalization \emph{in} the quantum circuit, using for instance the recent developments about quantum SVD \citep{RSML18}.
The combination of this quantum preprocessing and a QR-based circuit would constitute a turn-key sampling pipeline.

A fourth and maybe more speculative research perspective would be to leverage our knowledge of all the correlation functions of point processes such as DPPs, PfPPs, and permanental PPs \citep{JAQK20} to develop a statistical test of quantum decoherence in a given machine.
In particular, our aim would be to design statistics of PPs which are sensitive to the different kinds of errors affecting a quantum computer.

Finally, from a mathematical perspective, we think it is worth exploring in more depth the structure of PfPPs.
While potentially offering more modeling power in machine learning applications, they have received little attention, likely due to their high sampling cost on a classical computer. 
Since the sampling overhead on a quantum computer is minor, they are likely to become a valuable addition to the machine learner's toolbox. 
For starters, we are unaware of a formula for the probability mass function \cref{e:pf_L_ensemble} taking the form of a pfaffian of a likelihood matrix.
Such a construction would generalize the \emph{extended} L-ensemble construction in \citep{TBUA23}.

\section*{Declarations}

\subsection*{Competing interests}

The authors declare to have no competing interests related to this work.

\subsection*{Availability of data and materials}

The source code used in the paper is publicly available on the GitHub page
\url{https://github.com/For-a-few-DPPs-more/quantum-sampling-DPPs}.

\subsection*{Acknowledgements}
We acknowledge support from ERC grant Blackjack (ERC-2019-STG-851866) and ANR AI chair Baccarat (ANR-20-CHIA-0002).
Furthermore, we acknowledge the use of IBM Quantum services for this work. 
The views expressed are those of the authors, and do not reflect the official policy or position of IBM or the IBM Quantum team.

\appendix
\section{Mathematical details}
\label{s:mathematical_details}
\subsection{More about Givens operators}
\label{a:more_about_Bogoliubov}
We briefly explain here how the unitary operator $\mathcal G$ representing a Givens rotation in \cref{e:givens_operator_definition} can be constructed.
A unitary operator $\mathcal{V}: \mathbb{H} \to \mathbb{H}$ such that
\begin{align*}
    \begin{pmatrix}
        \mathcal{V} a_{1}^\adjoint\mathcal{V}^\adjoint\\
        \mathcal{V} a_{2}^\adjoint\mathcal{V}^\adjoint
    \end{pmatrix}
    = 
    \begin{pmatrix}
        \cos \theta   & \sin \theta\\
         -\sin \theta &  \cos \theta 
    \end{pmatrix}
    \begin{pmatrix}
        a_{1}^\adjoint\\
        a_{2}^\adjoint
    \end{pmatrix}
\end{align*}
is given by
$
    \mathcal{V} = \exp \left(\theta (a_{2}^\adjoint a_{1} - a_{1}^\adjoint a_{2})\right),
$
where we used the BCH formula \citep{Hal10}.
Thanks to this observation, we readily have that 
$$
\mathcal{G} = \mathcal{T}\mathcal{V} \mathcal{T}^*,
$$ 
with $\mathcal{T} = \exp(\i \phi (a_{1}^\adjoint a_{1} - a_{2}^\adjoint a_{2})/2)$, realizes the Givens transformation
\begin{align*}
    \begin{pmatrix}
        \mathcal{G} a_{1}^\adjoint\mathcal{G}^\adjoint\\
        \mathcal{G} a_{2}^\adjoint\mathcal{G}^\adjoint
    \end{pmatrix}
    = 
    \begin{pmatrix}
        \cos \theta   & e^{-\i \phi}\sin \theta\\
         -e^{\i \phi}\sin \theta &  \cos \theta 
    \end{pmatrix}
    \begin{pmatrix}
        a_{1}^\adjoint\\
        a_{2}^\adjoint
    \end{pmatrix}.
\end{align*}

\subsection{Proof of \cref{lem:bilinears_BdG}} \label{proof:bilinears_BdG}
    We begin by a remark about the diagonalization of $H_{\mathrm{BdG}}$.
    Let $\bF = \Big(\begin{smallmatrix}
        \mathbf{0} & \bI \\
        \bI & \mathbf{0}
    \end{smallmatrix}\Big)$.
    As a consequence of~\cref{lem:diag_H_BdG}, we can write
    \begin{align*}
        H_{\mathrm{BdG}} = \frac{1}{2}\begin{pmatrix}
            \mathbf{c}^\adjoint \\
            \mathbf{c}
        \end{pmatrix}^\adjoint 
        \begin{pmatrix}
            -\overline{\mathbf{M}}& -\overline{\bm{\Delta}}\\
            \bm{\Delta} &  \mathbf{M}
        \end{pmatrix}
        \begin{pmatrix}
            \mathbf{c}^\adjoint \\
            \mathbf{c}
        \end{pmatrix}.
    \end{align*}
    The diagonalization of $H_{\mathrm{BdG}}$ reads
    \[
        H_{\mathrm{BdG}} = \frac{1}{2}
        \begin{pmatrix}
            \mathbf{b}^\adjoint \\
            \mathbf{b}
        \end{pmatrix}^\adjoint
        \begin{pmatrix}
            -\Diag(\epsilon_k) & \mathbf{0}\\
            \mathbf{0} & \Diag(\epsilon_k)
        \end{pmatrix}
        \begin{pmatrix}
            \mathbf{b}^\adjoint \\
            \mathbf{b}
        \end{pmatrix}
        .
    \]
    Note that the expectation value of the bilinears are
    \begin{align*}
        \begin{pmatrix}
            (\expval**{c_i c_j^\adjoint}_{\rho} )_{i,j} &(\expval**{c_i c_j}_{\rho} )_{i,j} \\
            (\expval**{c_i^\adjoint c_j^\adjoint}_{\rho} )_{i,j}&(\expval**{c_i^\adjoint c_j}_{\rho} )_{i,j} 
    \end{pmatrix}
        &= \mathbf{W}^\top
        \begin{pmatrix}
            (\expval**{b_i b_j^\adjoint} )_{i,j} & (\expval**{b_i b_j} )_{i,j}\\
            (\expval**{b_i^\adjoint b_j^\adjoint} )_{i,j} & (\expval**{b_i^\adjoint b_j} )_{i,j}
        \end{pmatrix}
        \overline{\mathbf{W}}\\
        &= \mathbf{W}^\top
        \begin{pmatrix}
            ((1-d_i) \delta_{ij} )_{i,j} & (0)_{i,j}\\
            (0)_{i,j} & (d_i\delta_{ij} )_{i,j}
        \end{pmatrix}
        \overline{\mathbf{W}}\\
        &= \mathbf{W}^\top
        \begin{pmatrix}
            \bI - \Diag(d_k) & \mathbf{0}\\
            \mathbf{0} & \Diag(d_k)
        \end{pmatrix}
        \overline{\mathbf{W}},
    \end{align*}
    with $d_k = \exp(-\beta \epsilon_k)/(1+\exp(-\beta \epsilon_k))$ for all $1\leq k \leq N$.
    The proof is completed if we recall that the sigmoid $\sigma$ satisfies $\sigma(x) = 1 -\sigma(-x)$.
\subsection{Proof of \cref{prop:Pfaffian_PP}} \label{proof:Pfaffian_PP}
    Assume that $i_1, \dots, i_k$ are pairwisely distinct.
    The object of interest is 
    \begin{equation*}
        \expval{N_{i_1}\dots N_{i_k}}_{\rho} = \expval{c^\adjoint_{i_1}c_{i_1}\dots c^\adjoint_{i_k}c_{i_k}}_{\rho}.
    \end{equation*}
    We now use a direct consequence of Wick's theorem for expectations in a Gaussian state, see \citep[Theorem 3]{BFBDHRRSW22Sub}: let $m$ even and let $\beta_1, \dots, \beta_m$ be linear combinations of the $c_i$'s and $c_i^\adjoint$'s for $1\leq i \leq N$, then
    \begin{align}
        \expval{\beta_1\dots\beta_m}_{\rho} = \sum_{\sigma \text{ contraction}} \mathrm{sgn}(\sigma) \expval{\beta_{\sigma(1)}\beta_{\sigma(2)}}_{\rho} \dots \expval{\beta_{\sigma(m-1)}\beta_{\sigma(m)}}_{\rho}. \label{eq:wick_contraction}
    \end{align}
    Take $m=2k$ and use the following definition: $\beta_{2\ell -1} = c^\adjoint_{i_{\ell}}$ and $\beta_{2\ell}= c_{i_{\ell}}$ for $1\leq \ell \leq k$. 
    Thus, we now show that \eqref{eq:wick_contraction} is the Pfaffian of a skewsymmetric matrix made of the $2\times 2$ blocks.
    
    Let us construct this skewsymmetric matrix.
    In details, for the pair $(\ell,\ell^\prime)$ with $1\leq \ell < \ell^\prime \leq k$,  we denote  the $2\times 2$ block by $\pfK_{\ell\ell^\prime}$.
    In order to make any block matrix $\left(\pfK_{i_mi_n}\right)_{1\leq m,n \leq k}$ skewsymmetric, we need to have 
    \begin{equation}
        \pfK_{\ell\ell^\prime}^\top = - \pfK_{\ell^\prime\ell}.\label{eq:block_skew}
    \end{equation}
    The form of the $(\ell,\ell^\prime)$ block with $1\leq \ell < \ell^\prime \leq k$ is
    \[  
        \pfK_{\ell\ell^\prime}
        = 
        \begin{pmatrix}
            \expval**{\beta_{2\ell -1} \beta_{2\ell^\prime -1} }_{\rho} & \expval**{\beta_{2\ell -1} \beta_{2\ell^\prime} }_{\rho}\\
            \star & \expval**{\beta_{2\ell} \beta_{2\ell^\prime} }_{\rho}
        \end{pmatrix}
        =
        \begin{pmatrix}
            \expval**{c_{i_\ell}^\adjoint c_{i_{\ell^\prime}}^\adjoint }_{\rho} & \expval**{c_{i_{\ell}}^\adjoint c_{i_{\ell^\prime}} }_{\rho}\\
            \star & \expval**{c_{i_{\ell}} c_{i_{\ell^\prime}}}_{\rho}
        \end{pmatrix}.
    \]Note that $\expval**{\beta_{2\ell -1} \beta_{2\ell^\prime} }_{\rho}$ does not appear in \eqref{eq:wick_contraction} since $\ell < \ell^\prime$.
    Now, the $\star$ is completed by $ - \expval**{c_{i_{\ell^\prime}}^\adjoint c_{i_{\ell}} }_{\rho}$ to ensure the property \eqref{eq:block_skew} which guarantees skewsymmetry of the block kernel matrix.
    Recalling \eqref{e:CAR}, the Pfaffian kernel is defined as the skewsymmetric matrix whose $(\ell,\ell^\prime)$-block is
    \[
        \pfK_{\ell\ell^\prime}
        =
        \begin{pmatrix}
            \expval**{c_{i_\ell}^\adjoint c_{i_{\ell^\prime}}^\adjoint }_{\rho} & \expval**{c_{i_{\ell}}^\adjoint c_{i_{\ell^\prime}} }_{\rho}\\
            \expval**{c_{i_{\ell}} c_{i_{\ell^\prime}}^\adjoint }_{\rho} - \delta_{i_{\ell}i_{\ell^\prime}} & \expval**{c_{i_{\ell}} c_{i_{\ell^\prime}}}_{\rho}
        \end{pmatrix}.
    \]
    Now, we define the kernel $\pfK(i_{\ell}, i_{\ell^\prime}) = \pfK_{\ell\ell^\prime}$.
    Then, the expression \eqref{eq:wick_contraction} matches the definition of the Pfaffian, i.e.,
    \[
        \expval{\beta_1\dots\beta_m}_{\rho} = \pf\left(\pfK(i,j)\right)_{1\leq i,j\leq m}.
    \]
    By using~\cref{lem:bilinears_BdG}, we obtain the expression \eqref{eq:Pfaffian_kernel} and \eqref{eq:correlation_Pfaffian} follows.

    \subsection{Proof of \cref{prop:Pfaffian_sample_parity}}
    \label{proof:Pfaffian_sample_parity}
    \paragraph*{Key identities.}
    The two main ingredients for this proof are 
    \begin{align}
        &\pf(\mathbf{B}^\top \mathbf{A}\mathbf{B}) = \pf(\mathbf{A}) \det(\mathbf{B}),\label{e:BAB}\\
         &\pf   \Big(\begin{smallmatrix}
            \mathbf{0} & \mathbf{M}\\
            -\mathbf{M} & \mathbf{0}
        \end{smallmatrix}\Big)= (-1)^{N(N-1)/2}\det \mathbf{M},\label{e:pf_block}
    \end{align}
    where $\mathbf{M}$ is an $N\times N$ matrix and $\mathbf{A}$ is $2N\times 2N$ skewsymmetric.
    \paragraph*{Main part of the proof.}
    We begin by using \cref{lem:PfPP_parity}.  
    The expected parity of a sample of $Y\sim \pfPP(\pfK)$ is
    $
        \E_{Y}(-1)^{|Y|} = \pf\left(\mathbb{J} - 2 \mathbb{K}\right).
    $
    Next, by applying a suitable permutation matrix $\mathbf{B}$ of signature $(-1)^{N(N-1)/2}$ on rows and columns of $\mathbb{J} - 2 \mathbb{K}$, we find
    \begin{align}
        \pf(\mathbb{J} - 2 \mathbb{K}) &= (-1)^{N(N-1)/2} \pf\left(\begin{pmatrix}
            \mathbf{0} & \mathbf{I}\\
            -\mathbf{I} & \mathbf{0}
        \end{pmatrix}
        - 2
        \begin{pmatrix}
            \mathbf{P} & \mathbf{K}\\
            -\mathbf{K}^\top & -\overline{\mathbf{P}}
        \end{pmatrix}
        \right)\nonumber\\
        &=(-1)^{N(N-1)/2} (-2)^N \pf\left(\begin{pmatrix}
            \mathbf{0} & -\mathbf{I}/2\\
            \mathbf{I}/2 & \mathbf{0}
        \end{pmatrix}
        +
        \begin{pmatrix}
            \mathbf{P} & \mathbf{K}\\
            -\mathbf{K}^\top & -\overline{\mathbf{P}}
        \end{pmatrix}
        \right),\label{e:sym_to_be_added}
    \end{align}
    where we used \cref{e:BAB} to obtain the first equality.
    At this point, we simply can add the symmetric matrix 
    $\frac{1}{2}\left(\begin{smallmatrix}
        \mathbf{0} & \mathbf{I}\\
        \mathbf{I} & \mathbf{0}
    \end{smallmatrix}\right)$ to the argument of the Pfaffian at the right-hand side of \cref{e:sym_to_be_added}.
    This is valid since the pfaffian of a matrix is the Pfaffian of its skew-symmetric part; see \cref{s:PfPP_basics}.
    We then obtain
    \begin{align}
        \pf(\mathbb{J} - 2 \mathbb{K}) &=(-1)^{N(N-1)/2} (-2)^N \pf\left(\begin{pmatrix}
            \mathbf{0} & \mathbf{0}\\
            \mathbf{I} & \mathbf{0}
        \end{pmatrix}
        +
        \begin{pmatrix}
            \mathbf{P} & \mathbf{K}\\
            -\mathbf{K}^\top & -\overline{\mathbf{P}}
        \end{pmatrix}
        \right)\nonumber\\
        &=(-1)^{N(N-1)/2} (-2)^N \pf\left(\mathbf{C}\mathbf{S}
        \right).\label{e:pf2-K}
    \end{align}
    Recall that 
    $\mathbf{C} 
    = 
    \left(
    \begin{smallmatrix}
            \mathbf{0} & \mathbf{I}\\
            \mathbf{I} & \mathbf{0}
    \end{smallmatrix}\right)$
    and 
    $\mathbf{S} 
    = 
    \left(\begin{smallmatrix}
            \mathbf{I}-\mathbf{K}^\top & -\overline{\mathbf{P}}\\
            \mathbf{P} & \mathbf{K}
    \end{smallmatrix}\right)$.
    Thanks to \cref{lem:bilinears_BdG} and the definition of orthogonal complex matrices, we have
    \begin{align*}
        \mathbf{C}\mathbf{S} = \overline{\mathbf{W}}^{\top}
        \begin{pmatrix}
            \mathbf{0} & \mathbf{D}_{-}\\
            \mathbf{D}_{+} &\mathbf{0} 
        \end{pmatrix}
        \overline{\mathbf{W}},
    \end{align*}
    with $\mathbf{D}_{\pm} = \Diag(\sig(\pm\beta \epsilon_k))$.
    Using that $\pf\left(\mathbf{C}\mathbf{S}
    \right) = \pf\left(\frac{1}{2}\left(\mathbf{C}\mathbf{S} - \mathbf{S}^\top\mathbf{C}^\top
    \right)\right)$ and once again \cref{e:BAB}, we find
    \begin{align*}
        \pf\left(\mathbf{C}\mathbf{S}
    \right) &= \det(\overline{\mathbf{W}})\pf\begin{pmatrix}
        \mathbf{0} & \frac{1}{2}(\mathbf{D}_{-} - \mathbf{D}_{+})\\
        \frac{1}{2}(\mathbf{D}_{+} - \mathbf{D}_{-}) &\mathbf{0} 
    \end{pmatrix} \\
    &= \det(\overline{\mathbf{W}}) (-1)^{N(N-1)/2} \left(\frac{-1}{2}\right)^N\det(\mathbf{D}_{+} - \mathbf{D}_{-}).
    \end{align*}
    Now, we substitue this identity into \cref{e:pf2-K} and
    we conclude the first part of the proof by remarking that 
    \[
        \sig(\beta \epsilon_k) - \sig(-\beta \epsilon_k) = \frac{1 - e^{-\beta\epsilon_k}}{1 + e^{-\beta\epsilon_k}} = \tanh(\beta\epsilon_k/2).
    \]
    Now we show that $
    \det \mathbf{W} = \sign \pf(\mathbf{A}_M)$.
    Recalling the results of \cref{lem:diag_H_BdG}, we have
    \[
    H_{\mathrm{BdG}} =\frac{\i}{2} \mathbf{f}^\top \mathbf{A} \mathbf{f} =  \frac{\i}{2} \bm{\gamma}^\top \mathbf{A}_{M} \bm{\gamma},
    \]
    where $\mathbf{A}$ and $\mathbf{A}_M$ are related by a conjugation with a permutation matrix of signature $(-1)^{N(N-1)/2}$, and therefore, thanks to \cref{e:BAB}
    \[
        \pf(\mathbf{A}) = (-1)^{N(N-1)/2}  \pf(\mathbf{A}_M).
    \]
    Still as a consequence of \cref{lem:diag_H_BdG}, it holds 
    $$
    \mathbf{R}\mathbf{A}\mathbf{R}^\top = 
    \Big(\begin{smallmatrix}
        0 & \Diag(\epsilon_k)\\
        -\Diag(\epsilon_k) & 0
    \end{smallmatrix}\Big).
    $$
    with $\mathbf{R}$ real orthogonal as well as  $\mathbf{W} = \bm{\Omega}^\adjoint \mathbf{R} \bm{\Omega}$ with $\Omega$ unitary.
    Thus, $\det(\mathbf{W}) = \det(\mathbf{R}) \in \{-1,1\}$, and 
    \[
        \pf(\mathbf{A}) = \det(\mathbf{R})\pf   \Big(\begin{smallmatrix}
            0 & \Diag(\epsilon_k)\\
            -\Diag(\epsilon_k) & 0
        \end{smallmatrix}\Big) = \det(\mathbf{W}) (-1)^{N(N-1)/2} \prod_{k}\epsilon_k,
    \]
    where we used \cref{e:pf_block}.
    Hence, by accounting for the permutation signature,
    \[
        \pf(\mathbf{A}_M) = \det(\mathbf{W}) \prod_{k}\epsilon_k.
    \]
    Since we assumed that $\epsilon_k$'s are all \corr{strictly} positive, we obtain $\sign\pf\mathbf{A}_M = \det\mathbf{W}$.

    \subsection{Proof of \cref{e:pf_L_ensemble}}
    \label{a:proof_pmf}
    This proof uses the so-called inclusion-exclusion principle
    \[
        \mathbb{P}(Y = S) = \sum_{J: S\subseteq J} (-1)^{|J\setminus S|} \mathbb{P}(S \subseteq Y),
    \]
    see e.g.\ \citep[Eq (17)]{kaLe22}, and the identity
    \[
        \sum_{J: S\subseteq J} 
        \pf(\mathbb{A}_{J})
        =
          \pf(\mathbb{J}_{\bar{S}} + \mathbb{A}), 
    \]
    for any kernel $\mathbb{A}$ such that $\mathbb{A}(i,j) = - \mathbb{A}(j,i)$; see \citep[p 298]{BoEy05}.
    By combining these two formulae, we find
    \begin{align*}
        \mathbb{P}(Y = S) &= \sum_{J: S\subseteq J} (-1)^{|J\setminus S|} \pf(\mathbb{K}_J)\\
        &= (-1)^{|S|} \sum_{J: S\subseteq J}  \pf(-\mathbb{K}_J)\\
        &=  (-1)^{|S|}\pf(\mathbb{J}_{\bar{S}} - \mathbb{K})\\
        &= (-1)^{N-|S|}\pf(\mathbb{K}-\mathbb{J}_{\bar{S}}),
    \end{align*}
    which is the desired result since $|\bar{S}| = N - |S|$. 

    \subsection{Example of Givens and particle-hole decomposition in a rank-deficient case}
    \label{app:rank_defficient_ph}
    We provide a complementary example for the discussion of \cref{s:double_Givens}, for which we assumed that $\mathbf{W}_2$ was full-rank. 
    In the case $N=3$, we consider a specific orthogonal complex matrix 
    \[
        \mathbf{W}
        = 
        \begin{pmatrix}
            \overline{\mathbf{W}}_1& \overline{\mathbf{W}}_2\\
            \mathbf{W}_2 & \mathbf{W}_1\
        \end{pmatrix}
        = \bm{\Omega}^* 
        \mathbf{R}
        \bm{\Omega}
        \text{ with } \mathbf{R} = \begin{pmatrix}
            0 & 1 & 0 & 0 & 0 & 0\\
            1 & 0 & 0 & 0 & 0 & 0\\
            0 & 0 & 1 & 0 & 0 & 0\\
            0 & 0 & 0 & 1 & 0 & 0\\
            0 & 0 & 0 & 0 & 1 & 0\\
            0 & 0 & 0 & 0 & 0 & 1\\
            \end{pmatrix},
    \]
    which is of the form considered in \cref{lem:diag_H_BdG}.
    By construction, we have $\det \mathbf{W} = -1$.
    Thanks to a short calculation, we see that $\mathbf{W}_1$ is the projector onto the orthogonal of the vector $\left(\begin{smallmatrix}1 & 1 & 0\end{smallmatrix}\right)^\top$, whereas $\mathbf{W}_2 = - \bI + \mathbf{W}_1$. 
    The explicit expression 
    \begin{align*}
        \begin{pmatrix}
            \mathbf{W}_2  \vert \mathbf{W}_1
        \end{pmatrix}
        = 
        \left(\begin{array}{ccc|ccc}
            -\frac{1}{2}  & \frac{1}{2} & 0&  \frac{1}{2} &\frac{1}{2} & 0\\
            \frac{1}{2}  & -\frac{1}{2} & 0&  \frac{1}{2} &\frac{1}{2} & 0\\
            0  & 0 & 0&  0 & 0 & 1\\
        \end{array}
        \right),
    \end{align*}
    makes manifest that $\mathbf{W}_2$ is rank-deficient.
    To factorize $\mathbf{W}$, we first apply a similar preprocessing to \cref{e:Givens_first_step}, i.e., a multiplication  with a product of real Givens matrices $\bV$ on the left, to put a triangle of zeros at the top left of the left block. 
    Next, we multiply on the right by double Givens rotations and particle-hole transformations (see \cref{e:W_rank_deff_Gamma} and \cref{e:W_rank_deff_B}), i.e.,
    \begin{align}
        &\begin{pmatrix}
            \mathbf{W}_2  \vert \mathbf{W}_1
        \end{pmatrix}\nonumber\\
        &\xrightarrow{\mathbf{G}\times}
        \left(\begin{array}{ccc|ccc}
            0  & 0 & 0 &  \frac{1}{\sqrt{2}} & \frac{1}{\sqrt{2}} &0\\
            \frac{1}{\sqrt{2}}  & -\frac{1}{\sqrt{2}} &0 & 0 & 0 &0\\
            0 & 0 & 0 & 0 & 0 &1
        \end{array}
        \right)
        \xrightarrow{\mathbf{G}\times}
        \left(\begin{array}{ccc|ccc}
            0  & 0 & 0 &  \frac{1}{\sqrt{2}} & \frac{1}{\sqrt{2}} &0\\
            0  & 0 & 0 & 0 & 0 &-1\\
            \frac{1}{\sqrt{2}} & -\frac{1}{\sqrt{2}} & 0 & 0 & 0 &0
        \end{array}
        \right)\nonumber \\
        &\xrightarrow{\times\Gamma^\adjoint}\left(\begin{array}{ccc|ccc}
            0  & 0 & 0 &  1 & 0 &0\\
            0  & 0 & 0 & 0 & 0 &-1\\
            0 & -1 & 0 & 0 & 0 &0
        \end{array}
        \right)
        \xrightarrow{\times\Gamma^\adjoint}
        \left(\begin{array}{ccc|ccc}
            0  & 0 & 0 &  1 & 0 & 0\\
            0  & 0 &0 & 0 & 1 & 0\\
            0 & 0 & -1 & 0 & 0 & 0
        \end{array}\right)\label{e:W_rank_deff_Gamma}\\
        &\xrightarrow{\times \mathbf{B}}
        \left(\begin{array}{ccc|ccc}
            0  & 0 & 0 &  1 & 0 & 0\\
            0  & 0 &0 & 0 & 1 & 0\\
            0 & 0 & 0 & 0 & 0 & -1
        \end{array}\right).\label{e:W_rank_deff_B}
    \end{align}
    This yields
    \[
        \bV
        \begin{pmatrix}
            \mathbf{W}_2  \vert \mathbf{W}_1
        \end{pmatrix} 
        \mathbf{O}^* =         \begin{pmatrix}
            \mathbf{0}  \vert \mathbf{D}
        \end{pmatrix}
        \text{ with } \mathbf{O} = \mathbf{B}\mathbf{M},
    \]
    where $\mathbf{M}$ is a product of two double Givens rotations in \cref{e:W_rank_deff_Gamma}.
    Therefore, we here have an example for which $\mathbf{O}$ contains only one particle-hole transformation.
    \subsection{Derivation of the determinantal formula for occupation number correlations without Wick's theorem\label{a:proof_without_wick}}
    This section derives directly the results of \cref{s:projDPP} about projection DPPs without using \cref{prop:DPP_from_rho}.
    While redundant in the paper, we believe this appendix might be useful to readers who want to focus on projection DPPs and skip the need for Wick's theorem. 

    Recall that the $i$th number operator reads $N_i = c_i^\adjoint c_i$.
    Let $\rho$ be the orthogonal projector onto the line generated by $\ket{\psi} = b_1^\adjoint \dots b_r^\adjoint \ket{\emptyset}$ where the fermionic operators $b^*_k,b_k$ are defined in \cref{e:new_annihilation_operators}, to wit $b_k = \sum_{j=1}^N \mathbf u_{kj}^* c_j$ and $b_k^* = \sum_{j=1}^N \mathbf u_{kj} c_j^*$.
    Here, as in \cref{prop:DPP_from_rho}, $\mathbf{u}_{k}$ is the $k$th column of the unitary matrix $\mathbf{U}$.
    In this section, we derive the expression
    \[
        \expval{N_{i_1}\dots N_{i_k}}_\rho = \det (\bK_{\{i_1, \dots, i_k\}}),
    \]
    where $\bK = \mathbf{U}_{[r],:} \mathbf{U}_{[r],:}^*$.

    By expressing the trace with respect to the Fock basis $(\ket{\bn})$ of the operators $c_i$; see \cref{l:representation_theory}, we find
    \begin{align}
        T \triangleq \tr \left(\rho N_{i_1}\dots N_{i_k}\right) &= \sum_{\bn} \expval{N_{i_1}\dots N_{i_k} \rho}{\bn}\nonumber\\
        &= \sum_{\bn} n_{i_1}\dots n_{i_k}\expval{\rho}{\bn}\nonumber\\
        &= \sum_{\bn} n_{i_1}\dots n_{i_k}|\braket{\bn}{\psi}|^2.\label{e:sum_over_n_exp}
    \end{align}
    We know recall that $\ket{\psi} = b_1^\adjoint \dots b_r^\adjoint \ket{\emptyset}$ and
    $
        b_k^* = \sum_{j=1}^N \mathbf u_{kj} c_j^*,   
    $
    so that 
    \begin{align*}
        \braket{\bn}{\psi} &= \expval{(c_N)^{n_N}\dots (c_1)^{n_1}b_1^\adjoint \dots b_r^\adjoint}{\emptyset}\\
        &= \sum_{j_1, \dots, j_r} \mathbf u_{1 j_1} \dots \mathbf u_{r j_r} \expval{(c_N)^{n_N}\dots (c_1)^{n_1}c_{j_1}^\adjoint \dots c_{j_r}^\adjoint}{\emptyset}.
    \end{align*}
    Note that the sequence of Booleans $n_1, \dots, n_N $ has to sum to $r$ for the sum above not to vanish. 
    Accounting for the constraint $n_{i_1} \dots n_{i_k} \neq 0$ in \cref{e:sum_over_n_exp}, this yields
    \begin{align*}
        T = \sum_{\substack{1\leq m_1 \leq \dots \leq m_r \leq N \\ \{i_1, \dots, i_k\} \subseteq \{m_1, \dots,m_r\}} }\left| \sum_{j_1, \dots, j_r} \mathbf u_{1 j_1} \dots \mathbf u_{r j_r} \expval{c_{m_r}\dots c_{m_1} c_{j_1}^\adjoint \dots c_{j_r}^\adjoint}{\emptyset} \right|^2.
    \end{align*}
    At this point, we note that
    \[
        \expval{c_{m_r}\dots c_{m_1} c_{j_1}^\adjoint \dots c_{j_r}^\adjoint}{\emptyset} \neq 0
    \]
    if and only if there is a permutation $\sigma$ such that $j_1 = \sigma(m_1), \dots, j_r = \sigma(m_r)$.
    Also, it is easy to check that $\expval{c_{m_r}\dots c_{m_1} c_{m_1}^\adjoint \dots c_{m_r}^\adjoint}{\emptyset} = 1$ and therefore, by using \cref{e:CAR}, we have
    \[
        \expval{c_{m_r}\dots c_{m_1} c_{\sigma(m_1)}^\adjoint \dots c_{\sigma(m_r)}^\adjoint}{\emptyset} = \sign \sigma.
    \]
    Consequently,
    \begin{align*}
         \sum_{j_1, \dots, j_r} \mathbf u_{1 j_1} \dots \mathbf u_{r j_r} \expval{c_{m_r}\dots c_{m_1} c_{j_1}^\adjoint \dots c_{j_r}^\adjoint}{\emptyset}  &= \sum_{\sigma}  \mathbf u_{1 \sigma(m_1)} \dots \mathbf u_{r \sigma(m_r)}\sign \sigma\\
        & = \det(\mathbf{U}_{[r],\{m_1,\dots,m_r\}}),
    \end{align*}
    where we used the notation $[r] = \{1, \dots, r\}$.
    The expression of $T$ becomes
    \begin{align*}
        T &= \sum_{\substack{1\leq m_1 \leq \dots \leq m_r \leq N \\ \{i_1, \dots, i_k\} \subseteq \{m_1, \dots,m_r\}} } \det \mathbf{U}_{[r],\{m_1,\dots,m_r\}} \det \mathbf{U}^*_{\{m_1,\dots,m_r\},[r]} \\
        &= \sum_{\substack{\mathcal{C}: \{i_1, \dots, i_k\}\subseteq \mathcal{C}\\ |\mathcal{C}| = r}} \det \mathbf{U}_{[r],\mathcal{C}} \det \mathbf{U}^*_{\mathcal{C},[r]} \\
        &= \sum_{\substack{\mathcal{C}: \{i_1, \dots, i_k\}\subseteq \mathcal{C}\\ |\mathcal{C}| = r}} \det (\mathbf{U}^*_{\mathcal{C},[r]}\mathbf{U}_{[r],\mathcal{C}}).
    \end{align*}
    For the last step of this derivation, we will use a well-known formula for projective DPPs.
    More precisely, we first let $\mathbf{Q}= \mathbf{U}_{[r],:}$. Since $\mathbf{Q}$ is a matrix with orthonormal rows, the kernel $\mathbf{K} = \mathbf{Q}^* \mathbf{Q}$ is an orthogonal projector of rank $r$. 
    For $Y\sim \mathrm{DPP}(\bK)$ and $\mathcal{C}$ such that $|\mathcal{C}| = r$, it holds that $\mathbb{P}(Y = \mathcal{C}) = \det (\bK_{\mathcal{C}})$. Furthermore, by definition, we also have
    $
        \mathbb{P}(S \subseteq Y) = \det (\bK_{S})
    $
    for any subset $S$.
    Now, we conclude by 
    \begin{align*}
        T
        &= \sum_{\substack{\mathcal{C}: \{i_1, \dots, i_k\}\subseteq \mathcal{C}\\ |\mathcal{C}| = r}} \det (\bK_{\mathcal{C}}) \\
        &= \sum_{\substack{\mathcal{C}: \{i_1, \dots, i_k\}\subseteq \mathcal{C}\\ |\mathcal{C}| = r}} \mathbb{P}(Y = \mathcal{C})\\
        &= \mathbb{P}(\{i_1, \dots, i_k\}\subseteq Y) = \det (\bK_{\{i_1, \dots, i_k\}}),
    \end{align*}
    which is the desired result.
    \section{Gate details}
    \label{s:gates_details}
    We give here details about the implementation of a Givens rotation using elementary gates.
    Note first that zeroing out a matrix entry $y\in\mathbb{C}\setminus\{0\}$ can be done with a Givens rotation matrix as follows 
    \[
        \begin{pmatrix}
            \cos \theta & \e^{-\i \phi} \sin \theta\\
            -\e^{\i \phi} \sin \theta & \cos \theta
        \end{pmatrix}
        \begin{pmatrix}
            x\\
            y
        \end{pmatrix}
        = 
        \begin{pmatrix}
            r\\
            0
        \end{pmatrix},
    \]
    with $\exp\i \phi = x^* y/|xy|$, $\cos\theta = |x|/(|x|^2 + |y|^2)^{1/2}$ and $\sin\theta = |y|/(|x|^2 + |y|^2)^{1/2}$.
    
    Let us list a few elementary gates to implement the corresponding operation on a pair of qubits.
    For ease of comparison with the online documentation of Qiskit, we denote in this section the Pauli matrices \eqref{e:pauli_matrices} by $\mathbf{X}$, $\mathbf{Y}$, and $\mathbf{Z}$.

    \paragraph*{Controlled NOT gate.} The CNOT gate is a two-qubit gate, i.e., a linear operator on $\mathbb{C}^2\otimes \mathbb{C}^2$.
    In the computational basis $(\ket{00},
    \ket{01},
    \ket{10},
    \ket{11})$, it is described by the matrix
    \begin{align*}
        \begin{array}{c}
            \Qcircuit @C=1em @R=.7em {
            & \targ &  \qw\\
            & \ctrl{-1} & \qw
            }
        \end{array}
        \equiv
        \begin{pmatrix}
            1 &0 &0 &0\\
            0 &0 &0 &1\\
            0 &0 &1 &0\\
            0 &1 &0 &0 
        \end{pmatrix},
    \end{align*}
    where the left-hand side is the graphical depiction of the operator in circuits.
    The CNOT gates thus permutes $\ket{01}$ and $\ket{11}$ while leaving $\ket{00}$ and $\ket{10}$ unchanged.
    In other words, it flips labels $0$ and $1$ in the first factor of the tensor product, provided that the second factor (the ``control" qubit) is $\ket 1$.

    \paragraph*{Controlled unitary gate.}
    Similarly, we can define the following gate, controlled this time by the first qubit,
    \begin{align*}
        \begin{array}{c}
            \Qcircuit @C=1em @R=.7em {
            & \qw &  \ctrl{1} &  \qw \\
            & \qw & \gate{e^{\i \theta \mathbf{Y}}}  & \qw
            }
        \end{array}
        \equiv
        \begin{pmatrix}
            1 &0 &0 &0\\
            0 &1 &0 &0\\
            0 &0 &c \theta & - s \theta\\
            0 &0 &s \theta & c \theta 
           \end{pmatrix},
    \end{align*}
    where we denoted for simplicity $c\theta = \cos \theta$ and $s\theta = \sin \theta$.
    Explicitly, this gate rotates $\ket{0}$ and $\ket{1}$ in the second factor provided the first factor is $\ket{1}$, namely it rotates $\ket{10}$ and $\ket{11}$ and leaves untouched $\ket{00}$ and $\ket{01}$.

    \subsection{Givens rotation using a controlled unitary gate.}
    This implementation is inspired by \citet{JSKSB18}, although our definition of the Givens rotation slightly differs in order to match the definition \cref{e:givens}.
    The $2$-qubit gate representing a Givens rotation 
    is implemented as
    \begin{align*}
        \begin{array}{c}
            \Qcircuit @C=1em @R=.7em {
            & \qw & \targ &  \ctrl{1} & \targ & \qw & \qw \\
            & \gate{e^{\i \phi \mathbf{Z}/2}} & \ctrl{-1} & \gate{e^{\i \theta \mathbf{Y}}}  & \ctrl{-1} 
            & \gate{e^{-\i \phi \mathbf{Z}/2}}& \qw
            }
        \end{array}\equiv\begin{pmatrix}
    1 &0 &0 &0\\
    0  &c \theta &  e^{-\i\phi}s \theta & 0\\
    0  &- e^{\i\phi}s \theta & c \theta & 0\\
    0 &0 &0 &1
   \end{pmatrix}.
    \end{align*}
    We now give calculation details.
    First, we compute the inner bock of three gates as follows
    \begin{align*}
        &\begin{array}{c}
            \Qcircuit @C=1em @R=.7em {
            & \qw & \targ &  \ctrl{1} & \targ & \qw  \\
            & \qw & \ctrl{-1} & \gate{e^{\i \theta \mathbf{Y}}}  & \ctrl{-1} 
            & \qw
            }
        \end{array}
        \equiv
        \begin{pmatrix}
         1 &0 &0 &0\\
         0 &0 &0 &1\\
         0 &0 &1 &0\\
         0 &1 &0 &0 
        \end{pmatrix}
        \begin{pmatrix}
         1 &0 &0 &0\\
         0 &1 &0 &0\\
         0 &0 &c \theta & - s \theta\\
         0 &0 &s \theta & c \theta 
        \end{pmatrix}
        \begin{pmatrix}
         1 &0 &0 &0\\
         0 &0 &0 &1\\
         0 &0 &1 &0\\
         0 &1 &0 &0 
        \end{pmatrix}
        \\
        &= \begin{pmatrix}
            1 &0 &0 &0\\
            0  &c \theta & s \theta & 0\\
            0  &-s \theta & c \theta & 0\\
            0 &0 &0 & 1
           \end{pmatrix}.
    \end{align*}
Now, the result follows by noting that 
\[
    \begin{array}{c}
            \Qcircuit @C=1em @R=.7em {
            & \qw & \qw  \\
            & \gate{e^{\i \phi \mathbf{Z}/2}} & \qw
            }
        \end{array}
        \equiv
    \begin{pmatrix}
        e^{\i\phi/2} &0 &0 &0\\
         0 &e^{-\i\phi/2} &0 &0\\
         0 &0 &e^{\i\phi/2} & 0\\
         0 &0 &0 & e^{-\i\phi/2} 
        \end{pmatrix}.
\]

\subsection{Qiskit 0.42.1 implementation without controlled unitary gate.}

The Givens rotation is implemented in Qiskit, as in \cref{f:dpp_circuit},  thanks to a gate
\[
    R_{XX+YY}(2\theta,\phi -\pi/2) = \begin{pmatrix}
        1 &0 &0 &0\\
        0  &c \theta &  e^{-\i\phi}s \theta & 0\\
        0  &-e^{\i\phi}s \theta & c \theta & 0\\
        0 &0 &0 &1
       \end{pmatrix}
\] where the 2-qubit parameterized $XX+YY$ interaction is defined as 
\begin{align*}
&R_{XX+YY}(\theta,\beta)\\
&= \left(\exp(-\i \beta \mathbf{Z}/2) \otimes \mathbb{I}\right) \exp\left(-\i \frac{\theta}{2}\frac{\mathbf{X}\otimes \mathbf{X} + \mathbf{Y} \otimes \mathbf{Y}}{2} \right)\left(\exp(\i \beta \mathbf{Z}/2) \otimes \mathbb{I}\right)\\
&=\begin{pmatrix}
    1 &0 &0 &0\\
    0  &c \theta/2 & -\i e^{-\i\beta}s \theta/2 & 0\\
    0  &-\i e^{\i\beta}s \theta/2 & c \theta/2 & 0\\
    0 &0 &0 &1
   \end{pmatrix}.
\end{align*}
This gate is implemented in Qiskit 0.42.1 as a composition of elementary gates containing $12$ single-qubit gates and $2$ CNOT gates.
Also, note that some details in their gate definitions may vary from the one used in this section.
\section{Sources of error in quantum computers}
\label{s:errors_in_quantum_computers}
Different quantities are usually specified by the constructor to assess
the efficiency of a quantum machine. 
First, the \emph{error rate} (or \emph{readout error}) is an estimated frequency of getting an undesired measurement (a ``bit flip") when measuring a state drawn independently from a fixed, user-defined prior distribution.
Second, the de-excitation (or \emph{relaxation}) of a qubit prepared in its excited
state, and thus supposed to represent $\ket{1}$, is a natural physical process. 
It is usually understood to be caused by the coupling of the qubit to electromagnetic radiation or, more abstractly, its environment.
The typical timescale of the relaxation processes is called the relaxation time, and usually denoted by $T_1$.
Moreover, the presence of an environment is not only the source of relaxation (bit-flip error), but also an additional source of phase errors.
This type of error destroys the quantum coherence properties in and between the qubits, i.e., it modifies the correlation structure in Boolean vectors built using Born's rule \eqref{e:born}, possibly to the point of making the qubits behave as simple classical bits.
This \emph{decoherence} process is usually characterized by a typical timescale denoted $T_2$.
The output of any circuit with depth longer than $T_1$ or $T_2$ is likely to be strongly contaminated with the corresponding noises.

To give concrete numbers, the IBM 127 qubits Washington platform\footnote{See 
\url{https://quantum-computing.ibm.com/services/resources?services=systems}}
has a median $T_1$ (the median is over all qubits) of
$T_1 = \SI{95,22}{\us}$, and the median $T_2$ is $T_2 = \SI{92,17}{\us}$.
This means that, in practice, after about $\SI{100}{\us}$, more than half
of the qubits have either decohered (become classical bits) or have
relaxed to their ground state. 
Moreover, the median readout error is $p_{\text{err, ro}} = \SI{1.350e-2}{}$, and the median CNOT gate error is $p_{\text{err, CNOT}} = \SI{1.287e-2 }{}$.
Ideally, characterizing the noise of all the gates would require a quantum process tomography as well as quantum state tomography, to characterize the state of all qubits. 
In practice, this is out of reach for large systems. 
To circumvent this issue, \cite{Preskill-2022} propose to partially characterize the noise with a scalable and robust algorithm called \emph{randomized benchmarking}. 
This is how the cited numbers for IBM machines have been benchmarked.\footnote{\url{https://qiskit.org/textbook/ch-quantum-hardware/randomized-benchmarking.html}}

The long-term goal of the quantum computer race is to build a 
fault-tolerant quantum computer, based on quantum error
corrections codes. 
These are techniques that build up on
redundancies of the logical qubits to be robust to noise-induced errors according; see the so-called \emph{threshold theorem} \citep{Kitaev-2003,Knill-1998}. 
As in the classical case, the required redundancy
depends of the actual values of the error rates, which are to this day still too high to have a realistic implementation.
In the meantime, other techniques are being developed to alleviate
the influence or errors, either by directly eliminating errors on the hardware itself, like with so-called \emph{spin echos} or \emph{dynamic decoupling} techniques \citep{Preskill-1998}, or by statistical postprocessing, with so-called \emph{error mitigation techniques} \citep{Cai-2022}.

\printbibliography
\end{document}